\documentclass[acmsmall,screen,nonacm]{acmart}
\pdfoutput=1 

\usepackage{tabularx}  
\usepackage{array}     
\usepackage[T1]{fontenc}
\usepackage{amsfonts}
\usepackage{stmaryrd}
\usepackage{amsmath}
\usepackage{bm}
\usepackage{bbold}
\usepackage{bbm}
\usepackage{mathtools}
\usepackage{mathpartir}
\usepackage{bbding}
\usepackage{color}
\usepackage[ruled,noend]{algorithm2e}
\usepackage[capitalize,nameinlink]{cleveref}
\usepackage{xparse}
\usepackage{ifthen}
\usepackage{thmtools}
\usepackage{thm-restate}

\usepackage[scr=boondox]{mathalpha}

\usepackage{subcaption}
\usepackage{wrapfig}
\usepackage{tikz}
\usepackage{newfile}
\usepackage{totcount}
\newtheorem{theorem}{Theorem}
\newtheorem{claim}{Claim}
\newtheorem{lemma}{Lemma}
\newtheorem{definition}{Definition}
\definecolor{blue-green}{rgb}{0,0.5,0.5}

\regtotcounter{@todonotes@numberoftodonotes}

\usepackage{listings}

  \lstdefinestyle{tinyc}{
    basicstyle=\scriptsize\ttfamily,
    keywordstyle=\color{blue}
  }

  \lstdefinestyle{normalc}{
    basicstyle=\ttfamily,
    numbers=none,
    keywordstyle=\color{blue}
  }

  \lstdefinestyle{inlinec}{
    basicstyle=\ttfamily
  }

\usetikzlibrary{backgrounds}
\tikzstyle{every picture}+=[remember picture]

\usepackage{todonotes}

\newcommand{\RM}[1]{\todo[inline]{\textcolor{red}{\textbf{RM}: #1}}}
\newcommand{\VRS}[1]{\todo[inline]{\textcolor{blue}{\textbf{Sathiya}: #1}}}

\newcommand{\Sadegh}{\textcolor{purple}}

\usepackage{tikz}
\usetikzlibrary{automata}
\tikzset{gadget/.style={->,>=stealth,initial text=,minimum size=7pt,auto,on grid,scale=1,inner sep=1pt,node distance=1cm}}
\tikzset{every state/.style={minimum size=15pt,inner sep=1pt,fill=black!10,draw=black!70,thick}}

\newbool{inappendix}
\appto\appendix{\booltrue{inappendix}}

\usetikzlibrary{backgrounds,calc}

 \usetikzlibrary{ shapes, fit, arrows}
 \usetikzlibrary{decorations} 
\DeclareMathSymbol{\mdot}{\mathord}{symbols}{"01}
\usepackage{enumerate}
\usepackage{paralist}
\usepackage{lscape}
\usepackage{float}
\usepackage{microtype}
\usepackage{enumitem}
\usepackage{tikzit}

\tikzstyle{plain circle}=[fill=white, draw=black, shape=circle]
\tikzstyle{black circle}=[fill=black, draw=black, shape=circle]
\tikzstyle{blue circle}=[fill=blue, draw=black, shape=circle]
\tikzstyle{red circle}=[fill=red, draw=black, shape=circle]
\tikzstyle{blue-green circle}=[fill={rgb,255: red,0; green,128; blue,128}, draw=black, shape=circle]

\tikzstyle{direction edge}=[->]
\tikzstyle{dashed directional edge}=[->, dashed]
\tikzstyle{red full edge}=[draw=red, ->]
\tikzstyle{dashed simple edge}=[-, dashed]
\tikzstyle{double directional}=[double, ->]
\tikzstyle{dash double}=[dashed, double, ->]
\tikzstyle{new edge style 0}=[-, dashed, draw=red]
\tikzstyle{blue}=[-, draw=blue]
\tikzstyle{blue directiona}=[draw=blue, ->]
\tikzstyle{blue dashed}=[dashed, draw=blue, ->]

\definecolor{blue-green}{rgb}{0,0.5,0.5}

\def\nats{\mathbb{N}}

\def\reals{\mathbb{R}}
\def\preals{\mathbb{R}_{+}}
\def\nnreals{\mathbb{R}_{\geq 0}}
\def\tuple#1{{( #1 ) }}
\def\set#1{{\{ #1 \}}}

\def\calQ{\mathcal{Q}}
\def\cK{\mathcal{K}}

\def\calF{\mathcal{F}}
\def\calB{\mathcal{B}}
\def\calT{\mathcal{T}}
\def\calD{\mathcal{D}}

\def\calS{\mathcal{S}}
\def\calC{\mathcal{C}}
\def\calQ{\mathcal{Q}}

\def\mbf#1{\mathbf{#1}}

\def\bfy{\mbf{y}}

\def\bfv{\mbf{v}}
\def\bfe{\mbf{e}}

\def\sfc{\msf{c}}

\def\bbE{\mathbb{E}}

\def\frakc{\mathfrak{c}}
\def\frakC{\mathfrak{C}}

\def\pp{\mathbb{P}}

\newcommand{\msf}[1]{\mathsf{#1}}
\newcommand{\mtt}[1]{{\small{\mathtt{#1}}}}
\newcommand{\ttt}[1]{{\small{\texttt{#1}}}}

\DeclareDocumentCommand{\autstep}{O{}}{%
        \xrightarrow{#1}%
        }
\DeclareDocumentCommand{\langof}{O{} m}{%
  \mathsf{L}_{#1}(#2)%
  }

\newcommand{\pGCL}{\mathsf{pGCL}}
\newcommand{\RTAS}{\mathsf{RTAS}}

\newcommand{\pChoice}[1]{\oplus_{#1}}

\newcommand{\codeStyleMath}[1]{\mathtt{#1}}

\newcommand{\ttprog}{\codeStyleMath{prog}}
\newcommand{\ttexec}{\codeStyleMath{exec}}

\newcommand{\Inv}{\mathsf{Inv}}

\newcommand{\bbP}{\mathbb{P}}

\newlist{inlinelist}{enumerate*}{1}
\setlist*[inlinelist,1]{%
  label=(\alph*),
}

\newcommand{\setOfNaturals}{\mathbb{N}}

\newcommand{\listingsInLatex}{Code}

\newcommand{\codeRef}[1]{System \ref{#1}} 
\newcommand{\smallCodeRef}[1]{Sys. \ref{#1}} 

\DeclareCaptionType{system}[Sys.][List of Systems]
\crefname{system}{System}{Systems}
\Crefname{system}{System}{Systems}

\newcommand{\node}{\mathsf{node}}

\makeatletter
\newtheoremstyle{dazzle}%
{.5\baselineskip\@plus.2\baselineskip
  \@minus.2\baselineskip}
{.5\baselineskip\@plus.2\baselineskip
  \@minus.2\baselineskip}
{\@acmplainbodyfont}
{\@acmplainindent}
{\bfseries}
{.}
{.5em}
{\thmname{\textcolor{red}{\textbf{#1}}}\thmnumber{ \textcolor{red}{\textbf{#2}}}\thmnote{ {\@acmplainnotefont(\textcolor{blue}{#3})}}}
\makeatother

\usepackage[framemethod=TikZ]{mdframed} 
\newcounter{proofrule}[section]
\renewcommand{\theproofrule}{{\ifinappendix\Alph{section}\else\arabic{section}\fi}.\arabic{proofrule}}
\newenvironment{proofrule}[1][]{%
\refstepcounter{proofrule}%
\smallskip
\ifstrempty{#1}%
{\mdfsetup{%
frametitle={%
\tikz[baseline=(current bounding box.east),outer sep=0pt]
\node[anchor=east,rectangle,fill=blue!20]
{\strut Proof Rule~\theproofrule};}}
}%
{\mdfsetup{%
frametitle={%
\tikz[baseline=(current bounding box.east),outer sep=0pt]
\node[anchor=east,rectangle,fill=blue!20]
{\strut\sffamily \textsc{Proof Rule~\theproofrule:}~#1};}}%
}%
\mdfsetup{innertopmargin=3pt,linecolor=blue!20,%
skipbelow=7pt, skipabove=4pt%
linewidth=2pt,topline=true,%
outerlinewidth=1pt,
}
\begin{mdframed}[]\relax%
}{\end{mdframed}}

\crefname{proofrule}{Rule}{Rules}
\Crefname{proofrule}{Rule}{Rules}

\newcommand{\sfskip}{\mathsf{skip}}
\newcommand{\Vars}{\mathsf{GVars}}
\newcommand{\LVars}{\mathsf{LVars}}
\newcommand{\lVars}{\mathsf{LVars}}

\newcommand{\deslang}{{\mathsf{DES}}}
\newcommand{\toolname}{{\mathsf{Leonides}}}
\newcommand{\flauto}{\ttt{flauto}}
\newcommand{\indicator}{{\mathbb{1}}}
\newcommand{\Dirac}{{\mathsf{\delta}}}
\newcommand{\exponential}{{\mathtt{exp}}}
\newcommand{\normal}{{\mathtt{normal}}}
\newcommand{\uniform}{{\mathtt{unif}}}
\newcommand{\consts}{{\mathsf{Consts}}}
\newcommand{\vars}{{\mathsf{GVars}}}
\newcommand{\procs}{{\mathsf{Procs}}}
\newcommand{\sfproc}{\mathsf{proc}}
\newcommand{\timestamps}{{\nnreals}}

\newcommand{\sys}{{\mathtt{sys}}}
\newcommand{\com}{{\mathtt{com}}}
\newcommand{\execframes}{{\mathsf{Frames}}}
\newcommand{\procdef}{{\mathtt{def}}}
\newcommand{\proc}[3]{{#1(#2)\; =\;\allowbreak #3}}
\newcommand{\postevent}{{\mathsf{post}}}
\newcommand{\postrate}{{\mathsf{post\_rate}}}

\newcommand{\nowcmd}{\mtt{now()}}
\newcommand{\peekcmd}{\mtt{peek()}}
\newcommand{\tick}{\ttt{tick}}

\newcommand{\defeq}{\triangleq}
\newcommand{\post}{\postevent}
\newcommand{\rateSet}{\Lambda}

\newcommand{\tid}{{\mathsf{tid}}}

\newcommand{\dist}{{\mathsf{dist}}}
\newcommand{\unitexp}{{\mathsf{unit}}}

\newcommand{\id}{{\mathsf{id}}}
\newcommand{\newid}{{\mathsf{newid}}}
\newcommand{\now}{{\mathsf{now}}}
\newcommand{\goal}{{\mtt{goal}}}
\newcommand{\Goal}{{\msf{Goal}}}

\newcommand{\execstate}{{\mathsf{exec}}}

\newcommand{\ite}[3]{{\mathsf{if}(#1)\; \mathsf{then}\; #2\; \mathsf{else}\; #3}}

\newcommand{\ceil}[1]{{\left\lceil #1 \right\rceil}}

\newcommand{\DiracNot}[1]{\delta\!\left[ #1 \right]}
\newcommand{\pushforward}{\vartriangleright} 
\newcommand{\bind}{\mathbin{\blacktriangleright}}

\newcommand{\supermartingale}{V}
\newcommand{\spmtg}{\supermartingale}
\newcommand{\variant}{U}
\newcommand{\vnt}{\variant}
\newcommand{\scrS}{\mathscr{S}}
\newcommand{\scrs}{\mathscr{s}}

\newcommand{\fullStateArg}{\frakc} 
\newif\ifrevon \revontrue
\newcommand{\rev}[1]{\ifrevon\textcolor{black}{#1}\else#1\fi}
\newenvironment{revblock}{\ifrevon\color{black}\fi}{}
\newcommand{\revkeep}[1]{\ifrevon\textcolor{black}{#1}\else#1\fi} 

\begin{document}

\title{Formal Reasoning about Performance Models}

\newcommand{\OurInstitution}{Max Planck Institute for Software Systems (MPI-SWS)}
\newcommand{\OurStreet}{Paul-Ehrlich-Stra{\ss}e, Building G26}
\newcommand{\OurCity}{Kaiserslautern}
\newcommand{\OurPostcode}{67663}
\newcommand{\OurCountry}{Germany}


\author{Moussa Labbadi}
\orcid{0000-0002-6109-769X}
\affiliation{%
  \institution{Bretagne INP--ENIB, IRDL (CNRS UMR 6027)}
  \city{Brest}
  \country{France}
}
\email{moussa.labbadi@enib.fr}

\author{Rupak Majumdar}
\orcid{0000-0003-2136-0542}
\affiliation{%
  \institution{\OurInstitution}
  \city{Kaiserslautern}
  \country{Germany}
}
\email{rupak@mpi-sws.org}

\author{V.R. Sathiyanarayana}
\orcid{0009-0006-5187-5415}
\affiliation{%
  \institution{\OurInstitution}
  \city{Kaiserslautern}
  \country{Germany}
}
\email{sramesh@mpi-sws.org}

\author{Sadegh Soudjani}
\orcid{0000-0003-1922-6678}
\affiliation{%
  \institution{University of Birmingham, United Kingdom, and MPI-SWS}
  \city{Kaiserslautern}
  \country{Germany}
}
\email{sadegh@mpi-sws.org}

%
%


\setlist[enumerate,1]{label=(\arabic*)}
\setlist[enumerate,2]{label=(\arabic*)}
\setlist[enumerate,3]{label=(\arabic*)}

\begin{abstract}
Discrete-event simulation  is a standard technique for modelling and analysing the performance of computer systems, networks, and services. 
Although simulation tools are widely used, reasoning about the correctness and performance guarantees of the models they implement 
remains largely ad-hoc: simulation outputs are interpreted statistically, but there is no logical foundation for deductive reasoning about 
their behaviour.
We present a core imperative calculus that captures the essential constructs common to discrete-event 
simulators---asynchronous execution, continuous and discrete sampling from distributions, and time-based event scheduling through a global event queue. 
On top of this calculus, we develop a proof system for reasoning about \emph{almost-sure reachability} and \emph{expected reaching time} properties. 
Our main result is a sound and complete proof rule for these properties.
Our framework generalizes deductive reasoning for (discrete-time) probabilistic programs to the setting of performance models, in
which continuous time and continuous-probability distributions are central.
We have implemented the proof rules in a tool embedded in Lean.
We demonstrate the applicability of our proof rule by deriving proofs of almost sure reachability and expected time for a number of case studies,
including client-server examples that go beyond analytic solutions from queueing theory as well as convergence behaviours in network routing protocols.
Establishing the soundness and completeness of our proof rules requires significantly more complex arguments than in the discrete-time setting.
This is due to the to the fundamentally discontinuous nature of the operational semantics, and to the measure-theoretic challenges of continuous time and probability distributions.
\end{abstract}

\sloppy

\makeatletter
\newtheoremstyle{dazzle}%
{.5\baselineskip\@plus.2\baselineskip
  \@minus.2\baselineskip}
{.5\baselineskip\@plus.2\baselineskip
  \@minus.2\baselineskip}
{\@acmplainbodyfont}
{\@acmplainindent}
{\bfseries}
{.}
{.5em}
{\thmname{\textcolor{red}{\textbf{#1}}}\thmnumber{ \textcolor{red}{\textbf{#2}}}\thmnote{ {\@acmplainnotefont(\textcolor{blue}{#3})}}}
\makeatother

\theoremstyle{dazzle}
\newtheorem{maintheorem}[theorem]{Theorem}
\newtheorem{mainlemma}[theorem]{Lemma}
\newtheorem{maincorollary}[theorem]{Corollary}
\newtheorem{mainproposition}[theorem]{Proposition}

\crefalias{maintheorem}{theorem}
\crefalias{mainlemma}{lemma}
\crefalias{maincorollary}{corollary}
\crefalias{mainproposition}{proposition}

\theoremstyle{acmplain}
\newtheorem{observation}[theorem]{Observation}
\theoremstyle{acmdefinition}
\newtheorem{remark}[theorem]{Remark}

\Crefname{observation}{Observation}{Observations}

\renewcommand{\lstlistingname}{\listingsInLatex}

\maketitle

\raggedbottom
\setlength{\textfloatsep}{10pt plus 2pt minus 2pt}
\setlength{\floatsep}{8pt plus 2pt minus 2pt}
\setlength{\intextsep}{10pt plus 2pt minus 2pt}

\section{Introduction}
\label{sec:intro}

Discrete-event simulation underlies modern performance modeling of computer systems, distributed services, and communication networks
\cite{harchol2013performance,Jain,riley2010ns}. 
Simulation frameworks allow engineers to prototype performance-related algorithms, such as 
scheduling policies, admission control, and load-balancing strategies before service deployment. 
Yet, despite decades of engineering practice, reasoning about these systems' performance remains empirical: 
performance and stability are evaluated by repeated simulation and statistical estimation rather than by formal proofs.
This empirical approach fails to provide guarantees. 
As systems incorporate asynchronous flows and complex real-time feedback, we increasingly need formal reasoning principles
to answer questions related to stability and resilience:
Does the backlog in the system clear almost surely? Is the expected response time finite? 
Does the system remain stable under all arrival rates?
Providing a sound and complete logical framework to answer such questions has been an open challenge.

Classical program logics offer deductive methods for correctness and termination of sequential programs \cite{Hoare69,apt2009verification,MannaComputationBook}. 
However, these frameworks reason about step-based computations without an explicit model of real time or event scheduling. 
In contrast, discrete-event systems feature
(i) a global event queue ordered by timestamps,
(ii) probabilistic and continuous delay distributions, and
(iii) interleaving of asynchronous components through event posting.
Simulation languages define their semantics operationally but provide no proof theory \cite{matloff2008introduction}. 
Existing probabilistic program logics 
handle almost-sure termination 
\cite{McIverMorganBook,McIverMKK18,KaminskiThesis,MajumdarS25}, 
but do not account for event-driven scheduling and continuous-time events. 
Algebraic frameworks \cite{NETKAT,probNETKAT} model finite-state packet-processing behaviors in communication networks, but lack support for continuous sampling and event-driven dynamics.
Consequently, there is currently no sound and complete proof system for reasoning about performance 
and termination properties of probabilistic discrete-event systems.

We address this gap by introducing $\deslang$, a core calculus that abstracts the key mechanisms shared by discrete-event simulators. 
$\deslang$ systems consist of finite sets of procedures; each may perform sequential computation, make discrete probabilistic choices, sample from (continuous) probability distributions, and schedule future procedure calls asynchronously. 
Their execution is driven by a global event scheduler that dequeues the earliest event in time order and executes its procedure body.
Our language captures the essence of simulation engines in a minimal, mathematically precise form suitable for reasoning. 

We study \emph{almost-sure reachability} (is a target reached with probability one?)
and \emph{expected reaching time} for $\deslang$ systems.
These are fundamental questions in performance analysis.
We show \emph{sound and complete} proof rules for these properties.
The rule for almost-sure reachability requires drift and variant functions, similar to discrete-time, countable state models.
The rule for expected reaching time requires a supermartingale function that dominates the hitting time.

While sound and complete proof rules have been studied for discrete-time probabilistic programs, reasoning about $\deslang$ introduces new mathematical
challenges.
When we allow sampling over time and other continuous distributions, the state space is uncountable and can have uncountably many successors.
Thus, our semantics involve the theory of Markov processes over general topological spaces.
Existing techniques related to proving stability of Markov processes assume that the process is continuous (weak Feller) \cite{meyn2012markov, MajumdarSS24}.
Unfortunately, this assumption does not hold for $\deslang$ systems!
The trajectory of a $\deslang$ system for two close states may be very different, e.g., because the code makes a conditional decision or because one  handler is executed before the other.

Dealing with uncountable state and non-continuous dynamics makes our proof very difficult.
Our proof of completeness requires new and subtle topological arguments based on petite sets and their properties \cite{meyn2012markov, xu2018note},
as well as topological arguments (Egorov's theorem) that connects pointwise convergence of measurable functions with uniform convergence.

We have formalized $\deslang$ and our proof rules in $\toolname$.
$\toolname$ embeds $\deslang$ in Lean \cite{Lean}.
Given a $\deslang$ system, together with drift and variant candidates, $\toolname$
provides custom tactics to discharge proof obligations arising out of our proof rules.
We have used $\toolname$ to prove several variations of client–server systems with feedback, timeouts, and retries, 
a canonical performance model used in service-level analysis \cite{harchol2013performance,PEPA,Kleinrock1975}.
Expressed in $\deslang$, this model becomes an infinite-state probabilistic system with scheduled events representing arrivals, completions, and timeouts. 
The almost-sure reachability property we prove consists in proving that some queues in the system empty out almost surely from any configuration;
this corresponds to checking that the system is stable and backlogs are cleared and is a critical performance property in
 request-response systems.
The expected reaching time shows the expected time for backlogs to clear.
Our proof rules are able to show this property, even when the underlying model does not fall into an analytically tractable subclass
of queueing systems.

As a larger case study, we prove a routing protocol from networking \cite{floyd1993synchronization} that shows a surprising 
time synchronization among distributed nodes.
Synchronization occurs even in the presence of jitter---which was added to ensure that nodes do not synchronize!
\cite{floyd1993synchronization} explored this phenomenon using network simulations and simulations of a simple discrete Markov model.
We show the first formal proof that synchronization occurs almost surely in the protocol.
These examples demonstrate that $\deslang$ and $\toolname$ captures many fundamental models in performance evaluation, including
those going beyond analytical closed-form solutions from queueing theory, as well as real protocols for which deductive verification was not possible before.

In summary, we make the following contributions:
\begin{enumerate}
    \item \textbf{Core calculus for discrete-event simulation.} We formalize $\deslang$, a minimal imperative language capturing asynchronous concurrency,
    probabilistic choice, sampling from continuous distributions, continuous-time evolution, and time-based event scheduling.
    We define operational semantics for $\deslang$ that model interaction between probabilistic execution and the global event scheduler,
    and show that these semantics produce Markov processes.
    \item
\textbf{Proof system.} We present the first sound and complete proof rule for reasoning about almost-sure reachability 
and expected reaching time for $\deslang$ systems.

\item 
\textbf{Tool and case studies.} 
We develop $\toolname$, a mechanized proof environment for $\deslang$ embedded in Lean. $\toolname$ formalizes the semantics of DES, mechanizes the soundness of our proof system, and provides tactics for proving almost-sure reachability and expected reaching-time properties. We demonstrate its effectiveness on a number of performance models for
server systems and network routing protocols.
\end{enumerate}
Our results establish a bridge between formal program verification and performance modeling. 
By providing a sound and complete logical foundation for discrete-event systems, we move 
beyond simulation-based validation toward provable performance guarantees and open the way for formal reasoning tools for system performance.

\section{Reasoning about Discrete Event Simulation Models}
\label{sec:motivating-examples}


We will work with $\deslang$, an imperative language that captures the core features of discrete-event simulators
such as SimPy \cite{matloff2008introduction}.
This language describes systems as sets of procedures, each implementing an imperative core.
In addition to usual imperative constructs, procedures can perform discrete probabilistic choice, sample from distributions, and schedule procedures for invocation at specific times in the future.
Execution is driven by an event scheduler that maintains a global event queue ordered by time.
Procedures execute sequential probabilistic computation atomically, and in logical zero time.
They post new procedures to be executed by the scheduler at future points in time.
The event scheduler tracks global time.
When a procedure finishes execution, it picks the minimal next time at which a procedure is scheduled, 
moves time forward to that point, and runs that procedure. 
By always picking the minimal event time, the discrete-event simulation naturally proceeds in correct temporal order.

\begin{system}[t]
\begin{footnotesize}

\begin{minipage}{0.45\textwidth}
\begin{lstlisting}[mathescape=true]
proc Client() =
  req := ... ; // generate request
  $\textcolor{blue}{\mathbf{\mathtt{post}}}$(SendReq, req, $\textcolor{brown}{\Dirac}(0)$);
  // schedule next arrival
  $\textcolor{blue}{\mathbf{\mathtt{post}}}$(Client, $\textcolor{brown}{\exponential}(\lambda)$);

proc SendReq(req) =
  // enqueue request
  queue := enqueue(queue, req);
  // post Server task
  $\textcolor{blue}{\mathbf{\mathtt{post}}}$(Server, $\textcolor{brown}{\Dirac}(0)$);

proc $p_0$():
  queue, idle := [], true;
  $\textcolor{blue}{\mathbf{\mathtt{post}}}$(Client, $\textcolor{brown}{\exponential}(\lambda)$);
\end{lstlisting}
\end{minipage}
\begin{minipage}{0.45\textwidth}
\begin{lstlisting}[mathescape=true,firstnumber=16]
proc Server() =
  if idle and not queue_empty then
    idle := false;
    req := dequeue(queue);
    $\textcolor{blue}{\mathbf{\mathtt{post}}}$(FinishReq, req, $\textcolor{brown}{\exponential}(\mu)$);

proc FinishReq(req) =
  ... ; // service completion
  if not queue_empty then
    // schedule more tasks
    req := dequeue(queue);
    $\textcolor{blue}{\mathbf{\mathtt{post}}}$(FinishReq, req, $\textcolor{brown}{\exponential}(\mu)$);
  else
    idle := true;
\end{lstlisting}
\end{minipage}
\end{footnotesize}
\caption{A client server example: clients send requests based on an exponential distribution with rate $\lambda$,
server completion time is an (independent) exponential distribution with rate $\mu$, requests are queued in an unbounded
queue. The system is an encoding of an M/M/1 queue in $\deslang$. \label{fig:mm1}}
\end{system}
\noindent\textbf{A Client-Server Model}
\noindent\codeRef{fig:mm1} shows a simple simulation model for a client-server example.
The \ttt{Client} procedure models request arrivals: it schedules a request to the server right away and also schedules the next request arrival after an exponentially distributed delay $\exponential(\lambda)$, creating an infinite Poisson arrival process.
The \ttt{SendReq} procedure enqueues the request at the server and schedules the server to serve it.
When scheduled, the server dequeues a request and serves it if it is idle (i.e., not already serving a different request).
Serving a request takes time; in the model, the time is drawn from an independent exponential distribution $\exponential(\mu)$.
Upon completion, if there are further tasks in the queue, the server picks a new task.
Otherwise, the state variable \ttt{idle} is set, which signifies that the server is not currently processing a request.

This example is the ``hello world'' of performance simulation.
It already shows several features that we model:
(a) imperative sequential computation, (b) asynchronous invocation, (c) real-time evolution, (d) (continuous and discrete) probabilistic choice.
Since we have not put a bound on the queue size, the program is also infinite-state.
The combination of these features takes the example outside of known program logics.

\smallskip
\noindent
\textbf{Almost-Sure Reachability: Specifications and a Proof Rule}
An important question in a client-server example is whether the server has enough capacity to serve arriving requests.
One way to formulate this question is to ask whether the queue becomes empty almost surely, from any starting configuration.
We are also interested in showing the expected time before this event happens, which answers the question how long the system
takes, on average, to clear backlog.
%

Our proof rule will ask for two kinds of certificates.
First, 
a \emph{drift certificate} that will assign a non-negative number to each state of the system that will not increase in
expectation on each step of the program.
Second, a \emph{variant certificate} will assign a non-negative number to each state, and require two conditions:
(a) the variant decreases in each step with some positive probability bounded away from zero,
and (b) the variant at any state is bounded by some (known) function of the drift at that state.
Our proof rule will show that if these two conditions hold, then the program will reach a given state almost surely.

Our goal in this example is to prove that the queue becomes empty almost surely.
Accordingly, we use the size of the \ttt{queue} as candidate to both the drift and the variant.
Under the assumption that $\lambda < \mu$, we will show that this queue size satisfies the drift and variant conditions.
On a global level, the execution of our example is a sequence of events that either add or remove elements from the queue.
The queue grows when \ttt{Client} fires and shrinks when \ttt{FinishReq} fires.
Because $\lambda < \mu$, the queue shrinks at a greater rate than it grows, and hence, the size of the queue drifts toward zero, making it a valid drift.
Similarly, because $\lambda < \mu$, at each event, it is more likely for the queue to shrink; thus, the queue size is also a valid variant.
Note that this size is clearly non-negative, and zero only at goal states.
Thus, the queue empties almost surely.
Under the same stability assumption $\lambda < \mu$, our expected-time proof rule additionally constructs a 
supermartingale that upper-bounds the expected time until the queue first becomes empty, showing that the expected backlog-clearing time is finite.
We provide exact details in \cref{sec:specs}.

While classical queueing theory can derive these particular bounds, the strength of our approach is that we can prove similar properties even in cases where closed form solutions are not available: for example, when we add common features such as throttling, retries, or exponential backoff.

\section{The $\deslang$ Language}
\label{sec:prelims}

We now formally define $\deslang$, a core language for simulation, and its semantics as a Markov process.

\subsection{Syntax}
\Cref{fig:syntax} shows the syntax of our language $\deslang$.
Systems in $\deslang$ are defined by a set of procedures that includes a distinguished initial procedure.
Each procedure is a command that is paired with a set of variable parameters.
Variables take on values from the set of real numbers $\reals$.
Commands are imperative blocks of code that randomly manipulate these real-valued variables.
$\deslang$ offers the standard imperative control constructs and probabilistic choice operators inspired by existing probabilistic programming languages \cite{McIverMorganBook,Kozen85}.
These are $\oplus_p$ and $\sim$, and they enable discrete and continuous probabilistic choice respectively.
Note that $\deslang$ does not offer support for the looping constructs such as \textsf{while}; nonetheless, loops can be simulated using the event queue detailed below.


\begin{figure}[t]
\begin{align*}
\mbox{\textsc{Systems} }\quad & \sys\ ::=\ \set{\ \procdef_1, \procdef_2, \ldots, \procdef_k,\; \mathsf{proc }\ \proc{p_0}{}{\com}\ } \\  
\mbox{\textsc{Procedures} }\quad & \procdef\ ::=\ \mathsf{proc }\ \proc{p}{y_1, y_2, \ldots, y_k}{\com} \\
\mbox{\textsc{Commands} }\quad & \com\ ::=\  \sfskip \mid x :=\ e \mid x \sim D \mid \com_1;\, \com_2 \mid\com_1 \oplus_p \com_2 \\ 
                    & \quad\quad        \mid \ite{b}{\com_1}{\com_2} \mid 
                    \post(p, e_1, e_2, \ldots, e_k, D_t) \\
                    & \quad\quad        \mid x := \nowcmd \mid x := \peekcmd
\end{align*}
\caption{\emph{Syntax of $\deslang$.}
We assume syntactic categories for constants $c\in\consts$, 
global variables $x\in\vars$, local variables $y_i \in \lVars$, 
procedure names $p\in\procs$, 
and probability distributions $D$ and $D_t$ over $\reals$ and $\nnreals$ respectively.
$p_0$ is a distinguished member of $\procs$ used for the initial procedure.
$e_i$ and $b$ are arithmetic and boolean expressions over $\Vars \cup \consts$.
\label{fig:syntax}}
\end{figure}



$\deslang$ provides primitives to insert procedures into a global event queue to be executed at a later time.
Time is real-valued, and is measured by the non-negative reals $\nnreals$.
The operation of $\deslang$ dequeues the earliest procedure from this event queue for execution.
The primitive $\post$ schedules procedures by sampling a real-time delay from a distribution over $\nnreals$.
The command $\post(p, e_1, e_2, \ldots, e_k, D_t)$ schedules the execution of the procedure $p$ with arguments $e_1, e_2, \ldots, e_k$, 
at the future time $\now + \tau$,  where $\now$ is the (global) current time and $\tau \sim D_t$ 
is sampled from the distribution $D$ over $\nnreals$.
We assume the existence of independent syntactic categories for describing distributions over $\nnreals$ that include standard distributions such as the exponential distributions $\exponential$,
Dirac (point) distributions $\Dirac$, uniform distributions $\uniform$, and Gaussian distributions $\normal$. 
In addition to $\post$, $\deslang$ provides primitives $\nowcmd$ and $\peekcmd$ to allow commands read-only access to the event queue.
These reveal the current global time $\now$ and the scheduled time of the next pending event in the queue.
Using this event queue, a $\deslang$ system can simulate loops by posting continuation with delay $0$ using $\Dirac(0)$.


\subsection{Markov Processes as Operational Semantics}

We will define the operational semantics of $\deslang$ as discrete-time \emph{Markov processes}.
In the following, we assume some familiarity with measure theory and probability,
such as sigma algebras, probability spaces, filtration, and expectation of random variables (see, e.g., \cite{Billingsley} for background).

\begin{definition}
    A discrete-time \emph{Markov process (MP)} is a tuple $\mathfrak{S} = \left(X, T_X\right)$, where $X$ is any set referred to as the \emph{state space}, is equipped with a sigma algebra $\calB(X)$, and $T_X \colon X \times \mathcal{B}(X) \to [0,1]$ is a conditional stochastic kernel 
    that assigns to each state $x \in X$ a probability measure $T_X(x, \cdot)$ over the measurable space $(X, \calB(X))$. 
    Additionally, for every measurable set $A \in \calB(X)$, the function $x \mapsto T_X(x, A)$ is measurable.
\end{definition}

In addition to $T_X$, we use $T^n_X$ to denote the $n$-step transition kernel, where $T^1_X = T^n_X$ and $T^{n + 1}_X(x, A) = \int_X T_X(x, dy) T^n_X(y, A)$ for all $n \in \setOfNaturals$.

The dynamics of a Markov process unfold as follows.
Let $x_k$ denote the state at time step $k \in \mathbb{N}$.
Given the current state $x_k = x \in X$, the next state $x_{k+1}$ is drawn according to the probability measure $T_X(x, \cdot)$.
For each process, we fix a distinguished initial state $x_0 \in X$.
We write $\mathbb{P}$ to denote the probability measure over infinite trajectories $\xi := (x_0, x_1, x_2, \ldots)$ induced by $T_X$.



The semantics of $\deslang$ systems define discrete-time Markov processes.
We now give the notation we need to define these semantics.
A \emph{program state} is a partial function that maps a finite subset of $\Vars \cup \LVars$ to $\reals$.
We write $\Sigma$ for the set of all program states, 
$\calS$ for the set of all $\deslang$ systems derivable using the syntax in \cref{fig:syntax}, and
$\calC$ for the set of all commands derivable in $\deslang$.

A \emph{pending event} is a tuple 
$(t, \id, p, \bfv)$, where 
$t \in \nnreals$ is the scheduled time of the event, 
$\id$ is an unique (linearly-ordered) event identifier used for deterministic tie-breaking, 
$p \in \procs$ is a procedure name, and 
$\bfv = (v_1, \ldots, v_k)$ are the values of the arguments of $p$.
We assume that the event identifiers $\id$ come from $\nats$.
Intuitively, a pending event corresponds to a call to procedure $p$ with arguments $\bfv$ scheduled to be executed at time $t$.

The operational semantics of $\deslang$ maintains finite priority queues $Q$ of pending events.
Ordering on $Q$ is lexicographic on $(t, \id)$, i.e., for two equal timestamps, the lower $\id$ wins.
We write $\calQ$ for the set of all finite priority queues of events 
$(t, \id, p, \bfv) \in \timestamps \times \nats \times \procs \times \cup_{n \in \nats} \nnreals^n$.

A $\deslang$ \emph{configuration} is a tuple $\tuple{Q, S, \sigma, \now, \sfc}$, where 
$Q$ is a priority queue of pending events such that $t \geq \now$ for all scheduled times of pending events in $Q$, 
$S \in \calS$ is a global system environment containing the syntax of a finite set of procedures, 
$\sigma \in \Sigma$ is a program state that maps all global and local variables used in $S$, 
$\now \in \timestamps$ is the current time, and 
$\sfc\in \calC$ is a $\deslang$ command representing the state of the execution of the current procedure.
Configurations are denoted by $\frakc$, and we use $\frakC$ to denote the set of all configurations.
The initial configuration of a $\deslang$ system $S$ is $\frakc_0 = (\emptyset, S, \sigma_0, 0, \com_{p_0})$, where $\com_{p_0}$ is the command in $S$ ascribed to the initial procedure $p_0$ and $\sigma_0$ is a distinguished initial program state.
For a configuration $\frakc$, let $\frakc.Q$, $\frakc.S$, $\frakc.\sigma$, $\frakc.t$, and $\frakc.\sfc$ denote its queue, $\deslang$ system, program state, time, and command, respectively.

%

\paragraph{Informal Overview: One Step}
The operational semantics of $\deslang$ is defined by a relation $\frakc \Rightarrow \mu$ that assigns to configuration $\frakC$ a distribution $\mu$ over configurations that arise in one step.

Let us first intuitively describe these steps.
A one-step in a $\deslang$ system can be classified into two kinds:
(1) a \emph{scheduler dequeue step} that pops the event with the minimal scheduled time and id (i.e., with the highest priority) from the event queue for execution, and
(2) an \emph{execution step} of the body of a procedure that can insert new events into the queue and transforms program state.

A scheduler dequeue step is performed when the queue $Q$ is not empty and the current command $\sfc$ is $\sfskip$.
Under these conditions, the scheduler picks the event $\min(Q) = (t_{\min}, \id_{\min}, p, \bfv)$ in $Q$ with minimal $(t, \id)$ in the lexicographic ordering of $Q$.
This transitions the execution to a new configuration $(Q \setminus \{\min(Q)\}, S, \sigma', t_{\min}, \com_p)$ with $\com_p$ being the body of the procedure $p$ defined in $S$ and $\sigma'$ produced by augmenting $\sigma$ with the values of $v_i$ for each parameter in the definition of $p$.
Thus, the time jumps ahead to $t_{\min}$ and the body $\com$ of the procedure $p$ is executed with arguments $\bfv$.
If $p$ is not described in $S$, $\com = \sfskip$.


The body of every procedure is a loop-free imperative probabilistic program with side-effects
enabled by $\post$ that insert elements into $Q$.
Note that, because each command is loop-free, every procedure always terminates.
Thus, without $\post$, their semantics is a natural generalization of standard operational semantics for straight-line probabilistic programs. 
These semantics are standard for the deterministic constructs \cite{McIverMorganBook,KaminskiThesis}. 
The probabilistic command $\com_1 \oplus_p \com_2$ executes $\com_1$ with probability $1-p$ and $\com_2$ with probability $p$.
The sampling command $x \sim D$ samples a value $v$ from $\reals$ according to the probability distribution $D$ and assigns $v$ to the variable $x$.
This behaviour is captured by the operational semantics shown in \cref{fig:semantics-des}.


Posts are the novel feature of $\deslang$.
They perform probabilistic sampling from continuous distributions over time and produce a queue side-effect.
Suppose $\com = \post(p', \bfe', D)$ in the current configuration $\tuple{Q, S, \sigma, \now, \com}$, where $\bfe' = (e'_1, \ldots, e'_k)$.
Suppose the evaluation of the expressions $e_i$ yield values $v_i$; these are the aforementioned deterministic evaluations. 
Then the post step is a probabilistic transition that yields a probability measure $\mu$ over configurations that maps any measurable set of configurations $A$ to
\[
\mu(A) = \int_\timestamps \indicator_A\set{\tuple{Q \uplus \set{(\now + \tau, \newid', p', \bfv')}, S, \sigma', \now, \sfskip}}  dD(\tau)
\]
where $\newid' \in \nats$ is a fresh id greater than all identifiers in $Q$ and $\bfv' = (v_1, \ldots, v_k)$.
Intuitively, we sample a delay $\tau$ from the distribution $D$, create a fresh monotone identifier $\newid'$ (to preserve deterministic tie-breaking), and return the point configuration with the queue extended by $(\now + \tau, \newid', p', \bfv')$  and the post command evaluated to $\sfskip$.
The effect is a probabilistic enqueueing of exactly one event. 

\begin{figure}[t]
\footnotesize
\begin{mathpar}
\inferrule[assign]{ }
{
        (Q, S, \sigma, t, x := e) \Rightarrow \DiracNot{(Q, S, \sigma[x \mapsto \sigma(e)],t, \sfskip)}
}
\and
\inferrule[sample]{ }
{
        (Q, S, \sigma, t, x \sim D) \Rightarrow \lambda v.\, (Q, S, \sigma[x \mapsto v], t, \sfskip) \pushforward \sigma(D)
}
\\
\inferrule[now]{ }
{
        (Q, S, \sigma, t, x := \nowcmd) \Rightarrow \DiracNot{(Q, S, \sigma[x \mapsto t], t, \sfskip)}
}
\and
\inferrule[prob]{
        (Q, S, \sigma, t, \sfc_1) \Rightarrow \mu_1 \\ (Q, S, \sigma, t, \sfc_2) \Rightarrow \mu_2
}{
        (Q, S, \sigma, t, \sfc_1 \pChoice{p} \sfc_2)  \Rightarrow  (1 - \sigma(p)) \mu_1 + \sigma(p) \mu_2
}
\\
\inferrule[peek]{ }
{
        (Q, S, \sigma, t, x := \peekcmd) \Rightarrow \DiracNot{(Q, S, \sigma[x \mapsto \mathsf{peek}(Q)], t, \sfskip)}
}
\\
\inferrule[if\textsubscript{1}]{
        \sigma(b) = \top \\ (Q, S, \sigma, t, \sfc_1) \Rightarrow \mu_1
}{
        (Q, S, \sigma, t, \ite{b}{\sfc_1}{ \sfc_2}) \Rightarrow \mu_1
}
\and
\inferrule[if\textsubscript{2}]{
        \sigma(b) = \bot \\ (Q, S, \sigma, t, \sfc_2) \Rightarrow \mu_2
}{
        (Q, S, \sigma, t, \ite{b}{\sfc_1}{ \sfc_2}) \Rightarrow \mu_2
}
\\
\inferrule[concat]{
        (Q, S, \sigma, t, \sfc_1) \Rightarrow \mu \\ \forall Q'\, \forall \sigma' \cdot (Q', S, \sigma', t, \sfc_2) \Rightarrow \cK(Q', \sigma')
}{
        (Q, S, \sigma, t, \sfc_1;\ \sfc_2) \Rightarrow
        \mu \bind \lambda \frakc.\, \cK(\frakc.Q, \frakc.\sigma) 
}
\and
\inferrule[deque]{
        Q \neq \emptyset \\ \min(Q) = (t', \id, p, \bfv) \\ t' \geq t \\ S(p) = \sfc
}{
        (Q, S, \sigma, t, \sfskip) \Rightarrow \DiracNot{(Q \setminus \min(Q), S, \sigma[\bfy \mapsto \bfv], t', \sfc)}
}
\\
\inferrule[post]{
        \forall (t', \id', p', \bfv') \in Q \cdot \id > \id'
}{
        (Q, S, \sigma, t, \post(p, \bfe, D)) \Rightarrow  \lambda \tau.\, (Q \uplus \set{(t + \tau, \id, p, \sigma(\bfe))}, S, \sigma, \sfskip) \pushforward \sigma(D)
}
\end{mathpar}
\caption{
    \emph{Operational Semantics of $\deslang$.}
    The relation $\Rightarrow$ assigns to each configuration a probability distribution over the set of all configurations.
    $\sigma[x \mapsto \sigma(e)]$ is obtained by setting the variable $x$ in $\sigma$ to the value of the expression $e$ in $\sigma$; $\sigma[\bfy \mapsto \bfv]$ generalizes this to a tuple of variables $\bfy$.
    $\DiracNot{\frakc}$ is the Dirac (point) distribution that assigns probability $1$ to the singleton $\set{\frakc}$. 
    $\mathsf{peek}(Q)$ is the timestamp of $\min(Q)$, or $\infty$ (a designated top element of $\timestamps$) when $Q$ is empty.
    $f \pushforward \mu$ is a pushforward operation that samples the output of the function $f$ according to the distribution $\mu$.
    $\mu \bind \cK$ is the 
    distribution produced by sampling a value according to $\mu$ and taking the distribution associated to it by $\cK$.
}
\label{fig:semantics-des}
\end{figure}

%

\begin{example}
Consider a system $S$ consisting of two procedures 
$\sfproc\ \allowbreak \proc{p_0}{}{\allowbreak\post(B,  1, \dist(\exponential(\lambda))}$ and 
$\sfproc\ \proc{B}{y}{\sfskip}$. 
The execution of this system proceeds as follows.
The initial configuration of the system is $({0, \id_0, p_0}, S, \sigma_0, 0, \sfskip)$.
The scheduler  dequeues $(0, \id_0, p_0)$, sets $\now = 0$, and then begins executing the initial procedure $p_0()$.
This moves the execution to $(\set{}, S, \sigma_0, 0, \post(B,  1, \dist(\exponential(\lambda)))$.
The semantics samples a delay $\tau \sim \exponential(\lambda)$ and enqueues $(0 + \tau, \id_1, B, 1)$, where $\id_1$ is a fresh symbol greater than $\id_0$.
Then, the scheduler picks the minimal time $\tau$, sets $\now = \tau$ and starts executing $B(1)$.
Since this does not post any new tasks, the program terminates after $B$ finishes executing.
Note that time only increases in a scheduler step.
If several $\post$ calls produce different sampled delays, the scheduler always selects 
the event with the smallest sampled scheduled time next.
\hfill\qed
\end{example}


\paragraph{One-Step Operational Semantics: Formal Details}
\label{subsec:semantics-des}

To formally define the semantics, we need two additional notions.
The first is the \emph{pushforward measure}.
Given two measurable spaces $(X_1, \calB(X_1))$ and $(X_2, \calB(X_2))$, a measurable function $f : X_1 \to X_2$, and a measure $\mu$ over $(X_1, \calB(X_1))$, the pushforward measure $f \pushforward \mu$ is the measure given by
$$
(f \pushforward \mu)(A) \triangleq \mu(f^{-1}(A)), \quad\forall A \in \calB(X_2).
$$
Intuitively, $f \pushforward \mu$ is the distribution of $f(x)$ over $X_2$ when $x$ is sampled from $X_1$ according to $\mu$.

The second is the composition of a kernel and a measure. 
In a measurable space $(X, \calB(X))$, a kernel is a function $\cK$ that assigns a probability measure over $(X, \calB(X))$ to each element in $X$.
Thus, the composition of a kernel $\cK$ and a measure $\mu$ over the space is given by 
$$
(\mu \bind \cK)(A) = \int_X \cK(x)(A) d\mu(x).
$$
Intuitively, the measure $\mu \bind \cK$ is obtained by first sampling a point $x\in X$ according to $\mu$ and then taking the resulting distribution over $X$ according to $\cK(x)$.

Let $\calB(\frakC)$ represent the Borel $\sigma$-algebra over $\frakC$.
The operational semantics of $\deslang$ is given by the relation $\Rightarrow$ that recursively assigns a probability measure over the space $(\frakC, \calB(\frakC))$ to each configuration in $\frakC$, as shown in \cref{fig:semantics-des}.
Thus, $\frakc \Rightarrow \mu$ means that the semantics assigns the measure $\mu$ over $(\frakC, \calB(\frakC))$ to the configuration $\frakc$.

Notice that configurations with $\sfskip$ commands and empty queues are not related by $\Rightarrow$; thus, these configurations are \emph{terminal} configurations.
Moreover, an assignment of a full probability measure as the effect of each command is only possible because every command that can be written in $\deslang$ is loop-free and thus, always terminating.
For every configuration $\frakc$ with a non-$\sfskip$ command, the distribution $\mu$ assigned by $\frakc \Rightarrow \mu$ assigns measure $1$ to the set of configurations with a $\sfskip$ command.
Thus, the small-steps of commands are compiled into a single $\deslang$ step.


\paragraph{From Steps to Traces: Markov Processes Semantics}
\label{subsec:markov-semantics}
The one-step relation $\Rightarrow$
induces a discrete-time Markov process $(\frakC, T_\frakC)$ evolving in the measurable space of all configurations $(\frakC, \calB(\frakC))$.
%
For any configuration $\frakc$ with $\frakc \Rightarrow \mu_\frakc$ according to the semantics defined in \cref{fig:semantics-des}, the transition kernel is defined as $T_\frakC(\frakc, A) = \mu_\frakc(A)$ for all $A \in \calB(\frakC)$.
For every terminal configuration $\frakc$, define $T_\frakC(\frakc, \set{\frakc}) = 1$.
It is easy to see that for every $A \in \calB(\frakC)$, the function $\frakc \mapsto T_\frakC(\frakc, A)$ is measurable.
We denote by $\pp_S$ probability measure over the infinite trajectories of this chain induced by $T_\frakC$.

\section{Almost-Sure Reachability and Expected Reach Time for $\deslang$}
\label{sec:specs}


We are interested in proving the almost-sure reachability of distinguished goal states in Markov processes and $\deslang$ systems.
The almost-sure reachability question asks: does a $\deslang$ system $S$ reach a measurable set $\Goal$ with probability $1$?

Formally, let $(X, T_X)$ be a discrete-time Markov process.
Let $\Goal$ be a distinguished measurable subset of $X$.
Define the \emph{first hitting time} $\sigma_{\Goal}$ as 
\begin{equation}
    \sigma_{\Goal} = \begin{cases}
        \min\{k: x_k \in \Goal\} & \text{if } \exists k \in \nats \text{ with }x_k\in \Goal,\\
        \infty & \text{if } \forall k\in\nats, x_k\notin \Goal.
    \end{cases}
\end{equation}
Here, $x_k$ is a random variable equal to the state reached in $k$ steps 
and $x_0$ is the initial state.
The \emph{reachability probability} of $\Goal$ is the probability $\pp(\sigma_{\Goal} < \infty)$ that $\sigma_{\Goal}$ is finite.
We say $(X, T_X)$ \emph{almost surely reaches $\Goal$} if $\pp_S(\sigma_{\Goal} < \infty) = 1$. 

We can state the same problems on $\deslang$ systems.
Let $S$ be a $\deslang$ system with a distinguished variable $\goal$ initialized to $0$.
The \emph{almost-sure reachability question} asks: does $S$ reach a configuration in which $\goal = 1$ with probability $1$?
Let $\Goal$ be the set of configurations $\frakc$ with $\frakc.\sigma(\goal) = 1$. 
Then, $S$ reaches $\goal$ almost surely \emph{iff} the Markov process $(\frakC, T_{\frakC})$ reaches the set $\Goal$ almost surely. 
The semantics of $\deslang$ as Markov processes implies that these notions are measure-theoretically well-defined.

In addition to the almost sure reachability question, we are interested in the \emph{expected time} to reach the goal.
In $\deslang$ systems, time is a part of the configuration---meaning the $\deslang$ time is subsumed into the states of the Markov process defining its semantics.
Thus, expected time of $\deslang$ systems is not the expected value $\bbE[\sigma_{\Goal}]$ of the \emph{number of steps} in the Markov process; it is the expected value of the time in configurations reached with $\frakc.\sigma(\goal) = 1$.
Formally, for the Markov process $(\frakC, T_\frakC)$ defining a $\deslang$ system, let $T_k$ be a random variable equal to the value of the $\deslang$ time reached in $k$ steps.
Then, the expected time of a $\deslang$ system is given by $\bbE[T_{\sigma_{\Goal}}]$.

\subsection{Proof Rules for Almost-Sure Reachability: Overview}

In this section, we will define proof rules to reason about almost-sure reachability and upper bounds on the expected reach time.
Proof rules are deduction rules that form a part of a proof system within a formal program logic \cite{winskel}.
They are useful in enabling the deduction of complex program properties, including termination \cite{MajumdarS25,McIverMKK18} and reachability \cite{MajumdarSS24}.
They are composed of finite collections of \emph{certificates} (a.k.a.\ witnesses): functions that map program states to numbers that satisfy certain properties.

We will introduce our proof rules in two steps.
We will begin with proof rules for almost sure reachability for general Markov processes, and show that they are sound and complete.
This is a general theorem of independent interest in the theory of Markov processes.
The technical difficulty over related results is that general Markov processes underlying $\deslang$ systems do not satisfy continuity assumptions imposed by available results \cite{MajumdarSS24,meyn2012markov}.

We then describe a compositional proof rule for $\deslang$ systems that provides certificates per procedure.
We argue about the soundness and completeness of this rule by translating the certificates 
to and from the Markov processes defining their semantics.

For both Markov processes and for $\deslang$ systems, our proof rule will be made up of several classes of witnesses:

\smallskip
\noindent\textit{The Invariant.}
In program verification, an invariant is a Boolean function that, if true at a state, will remain true at every successor of that state.
For $\deslang$ systems, an invariant $\Inv$ is a measurable set of configurations such that $\frakc_0 \in \Inv$ and, for all $\frakc \in \Inv$, $\mu(\frakc, \Inv) = 1$ where the semantics induces $\frakc \Rightarrow \mu$.

\smallskip\noindent\textit{Drift or Supermartingales.}
A supermartingale is a non-negative function $\spmtg$ of the full state that, in expectation, does not increase at each execution step.
In probability theory, supermartingales are stochastic processes whose value is non-increasing in expectation throughout its progression.
Supermartingales have a long history in proving properties of probabilistic programs \cite{abate2025quantitative,ChakarovSankaranarayanan}, and have recently been used to characterize their almost-sure termination \cite{McIverMKK18,MajumdarS25}.

\smallskip\noindent\textit{Variants}
A (probabilistic) variant is a non-negative function $\vnt$ defined on full states that decreases at each execution step by a minimum value with a minimum probability.
These variants are the probabilistic generalization of ranking functions, which for non-probabilistic programs, are classical certificates for proving program termination \cite{AptP86,winskel,apt2009verification}.
Variants are essential in the effectiveness of supermartingales, and work together to demonstrate termination of probabilistic programs \cite{McIverMKK18,MajumdarS25}.

These are the familiar constructs from the theory of countable Markov chains \cite{McIverMKK18,MajumdarS25}.
This is to be expected, because our results subsume those.

\begin{figure}[t]

\begin{proofrule}[Almost-Sure Reachability for Markov processes]
\label{rule:des-ast-markov}
To show that the chain $(X, T_X)$ almost surely reaches a measurable target $\Goal \subseteq X$, find 
\begin{enumerate}[label=\arabic*.,leftmargin=1.5em]
    \item A measurable \emph{invariant} $\Inv \supseteq \Goal$ containing the initial state $x_0$,
    \item A \rev{measurable} \emph{exempt set} $T$ with $\Goal \subseteq T \subseteq \Inv$,
    \item A \rev{measurable} \emph{supermartingale $\spmtg$ and variant} $\vnt$, both mapping $X \to \nnreals$, and
    \item Assistant functions $H, d, \epsilon : \nnreals \to \preals$, 
\end{enumerate}
such that the following are true:
\paragraph{$\mbf{V1:}$ Drift Criterion.} The supermartingale $\spmtg$ vanishes exactly on the exempt set, $V(x) = 0 \Leftrightarrow x \in T$, and does not increase outside it:
    \begin{equation*}\label{eq:diff}
    \Delta V(x) := \int_\Inv T_X(x, dy)\,V(y) - V(x) \leq 0, \quad \forall x \in \Inv \setminus T.
    \end{equation*}

\paragraph{$\mbf{V2:}$ Variant Criterion.} For all \( r \in \nnreals \) and \( x \in \Inv \),
\[
V(x) \leq r \Rightarrow U(x) \leq H(r), \quad\quad U(x) = 0 \Leftrightarrow x \in \Goal, \qquad \text{and}
\]
\[
T_X(x, \set{y \in \Inv \mid U(y) \leq U(x) - d(r)}) \geq \varepsilon(r), \quad\forall x \in \Inv\setminus\Goal \;\text{with}\; \rev{U(x) \leq r}.
\]
\end{proofrule}

\caption{The rule for deducing almost-sure reachability for Markov processes.}
\label{fig:rule-markov}
\end{figure}

\subsection{Proof Rules for Almost-Sure Reachability: Markov Processes}
\label{subsec:rule-ast-markov}
\cref{rule:des-ast-markov} shows our proof rule.
The rule asks for a \rev{measurable} supermartingale $\spmtg$ and a \rev{measurable} variant $\vnt$ over the state space.
The invariant $\Inv$ asked by the rule is a measurable subset of $X$ such that $x_0\in \Inv$ and $T_X(x, \Inv) = 1$ for all $x\in \Inv \rev{{}\setminus \Goal}$.
\rev{From a goal state, the chain may leave the invariant with positive probability.} 
The \emph{exempt set} $T$ is a measurable set between $\Goal$ and $\Inv$; it collects the states near $\Goal$ where the supermartingale $\spmtg$ is $0$.

The condition \textbf{(V1)} ensures that $V$ is a supermartingale that is zero on the exempt set (and hence on the goal), 
and that it is non-increasing in expectation outside the exempt set.
The condition \textbf{(V2)} ensures that the variant is bounded on every sublevel set of the supermartingale,
that it is zero on the goal set, and that it decreases with positive probability bounded away from zero \rev{within each of its own sublevel sets}.
%
Observe that $\spmtg \equiv 0$ on $T$, so instantiating \textbf{V2} at $r = 0$ leaves $\vnt \leq H(0)$ throughout $T$.


\rev{Note that the rule of \citet{MajumdarSS24} differs from \cref{rule:des-ast-markov} by indexing its functions $d$ and $\epsilon$ with the supermartingale instead of the variant.
This choice of indexing makes their rule unsound, as we show in our supplementary material.}

A proof rule for establishing almost-sure reachability is \emph{sound} if, whenever its premises hold for a Markov process $(X, T_X)$, its trajectory reaches $\Goal$ almost surely.
It is \emph{complete} if, for every Markov process $(X, T_X)$ that reaches $\Goal$ almost surely,  
witnesses matching the premises of the rule exist.

\begin{theorem}
\label{th:ast-markov}
\cref{rule:des-ast-markov} is sound and complete.
\end{theorem}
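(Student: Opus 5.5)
Soundness and completeness are argued separately.

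\emph{Soundness.} Consider the process stopped at $\sigma_{\Goal}$. By \textbf{V1}, $V(x_{k\wedge\sigma_T})$ is a non-negative supermartingale on $\Inv$, since $T_X(x,\Inv)=1$ off $\Goal$. It therefore converges almost surely, and for every $r$ we have $\pp(\sup_k V(x_k) > r) \le V(x_0)/r$ by Ville's maximal inequality. Fix $r$ and restrict to the event $\{\sup_k V(x_k)\le r\}$. On it, \textbf{V2} gives $U(x_k)\le H(r)=:R$ at every step, so the variant index never exceeds $R$.

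The key point is that $d$ and $\epsilon$ are indexed by the variant's own level. For $U(x)\le R$ we therefore obtain uniform constants $d(R)$ and $\epsilon(R)$, which is harmless as long as $d$ and $\epsilon$ are applied at level $R$ with $U(x)\le R$. Hence from any state in $\Inv\setminus\Goal$ with $U\le R$, the probability that the next $n:=\lceil R/d(R)\rceil$ steps all decrease $U$ by $d(R)$ is at least $\epsilon(R)^n$. After $n$ such decreases $U$ reaches $0$, which forces a visit to $\Goal$. This must be checked along the way: a decreasing step keeps $U\le R$, and it either stays in $\Inv\setminus\Goal$ or has already hit $\Goal$.

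A standard ``$n$-step positive chance infinitely often'' argument then shows that the event of staying forever with $U\le R$ while avoiding $\Goal$ has probability zero. This is conditional Borel--Cantelli / Lévy's 0--1 law applied to blocks of $n$ steps. A further case arises if the chain enters $T\setminus\Goal$, where $V$ stays $0$ only in expectation. Because $V=0$ there and $V$ is a non-negative supermartingale stopped at $\sigma_{\Goal}$ rather than at $T$, zero is absorbing in expectation, so $V$ stays $0$ almost surely and the bound $U\le H(0)$ persists. We conclude $\pp(\sigma_{\Goal}=\infty)\le V(x_0)/r$ for all $r$, so letting $r\to\infty$ gives soundness.

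\emph{Completeness.} Assume $\pp_x(\sigma_{\Goal}<\infty)=1$ for all $x$ in the set $\Inv$ of states reachable from $x_0$; a measurable version is taken as $\Inv=\{x: \pp_x(\sigma_{\Goal}<\infty)=1\}$, which is absorbing off $\Goal$ and contains $x_0$. The naive choices fail. Taking $U(x)=\pp_x(\sigma_{\Goal}>n)$-type quantities or $U=V$ breaks because there is no Feller continuity, so sublevel sets need not behave uniformly. Instead, set $T=\Goal$ and construct $V$ as follows. Since $f_n(x)=\pp_x(\sigma_{\Goal}>n)\downarrow 0$ pointwise on $\Inv$, fix a probability measure $\nu$, e.g.\ a mixture of $T_X^k(x_0,\cdot)$. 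By Egorov's theorem there are sets $C_j$ with $\nu(\Inv\setminus C_j)<2^{-j}$ on which $f_n\to0$ uniformly. On each $C_j$ one can choose $n_j$ with $\pp_x(\sigma_{\Goal}\le n_j)\ge 1/2$ uniformly, which makes each $C_j$ petite with respect to $\Goal$. Define the variant $U(x)=j$ on $C_j\setminus C_{j-1}$ (with $U=0$ on $\Goal$) refined by a residual step count, so that from each level a decrease of size $\ge 1$ occurs with probability bounded below in terms of $j$. For $V$, use a Foster--Lyapunov construction for Harris-type recurrence relative to petite sets (Meyn--Tweedie Thm.~8.4.3 style, cf.\ \cite{xu2018note}): let $V(x)=\sum_j \pp_x(\text{hit } \Inv\setminus C_j \text{ before } \Goal)\cdot w_j$ with weights $w_j\uparrow\infty$. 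Each summand is a harmonic-type function that is superharmonic off $\Goal$. Its sublevel sets $\{V\le r\}$ are then contained in finitely many $C_j$, which gives $H$.

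I expect the main obstacle to be making the Egorov sets interact with the dynamics. $\nu$-almost-everywhere statements must be upgraded to every $x\in\Inv$, which requires shrinking $\Inv$ to a $\nu$-full absorbing subset via the petite-set machinery while keeping it invariant. The same step must also ensure that the variant's one-step decrease holds pointwise rather than $\nu$-almost everywhere. I would attempt this first by defining $U$ through uniform hitting-time bounds on $C_j$ and verifying the decrease condition directly from the Markov property.
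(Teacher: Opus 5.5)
\textbf{Soundness.} Your argument breaks as soon as $T\neq\Goal$. Criterion \textbf{V1} imposes the drift inequality only on $\Inv\setminus T$. On $T\setminus\Goal$ nothing bounds $\int T_X(x,dy)\,V(y)$, so $V(x_{k\wedge\sigma_\Goal})$ is not a supermartingale. Your claim that $V$ ``stays $0$ almost surely'' once the chain enters $T$ is therefore false. In the backoff walk of Example~\ref{ex:backoff-walk}, from any $x\in(0,1]\subseteq T\setminus\Goal$ the Gaussian step lands in $\{x>1\}$, where $V>0$, with positive probability. The chain may leave and re-enter $T$ any number of times before reaching $\Goal$.

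Consequently Ville's inequality controls only $\sup_k V(x_{k\wedge\sigma_T})$, and the bound $\pp(\sigma_\Goal=\infty)\le V(x_0)/r$ is not established; your argument is correct only when $T=\Goal$. The missing step is a treatment of re-entries. On $\{\sigma_\Goal=\infty\}$, either $T=V_{\le 0}$ is visited infinitely often, or after the last visit to $T$ the chain is an unstopped non-negative supermartingale. In the second case it converges and is eventually confined to some $V_{\le m}$; the paper makes this rigorous by decomposing over the last-visit time $\ell$ and applying the Markov property at $\ell+1$. Either way some $V_{\le m}$ is visited infinitely often. Your $N$-step block estimate, with $N=\lceil H(m)/d(H(m))\rceil$ and success probability $\varepsilon(H(m))^N$, must then be applied via the strong Markov property at visit times to $V_{\le m}$ spaced $N$ apart, not on the event $\{\sup_k V(x_k)\le r\}$.

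\textbf{Completeness.} The crux is a variant whose \emph{one-step} descent holds at \emph{every} state of the invariant. You leave this as a plan, and your Egorov sets $C_j$ cannot deliver it. They are defined by uniform convergence of $\pp_x(\sigma_\Goal>n)$, not by the dynamics. So a successor of $x\in C_j\setminus C_{j-1}$ may lie in a higher level, or outside $\bigcup_jC_j$, which is only $\nu$-full. A uniform $n_j$-step hitting bound yields no one-step decrease, and the probability thresholds degrade geometrically along a descent path. Thus ``refined by a residual step count'' is unspecified exactly where the work is.

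The missing idea is to define the exhaustion dynamically: $C_0=\Goal$ and $C_{n+1}=C_n\cup\{x\in\Inv: T_\Inv(x,C_n)>2^{-n}\}$. Then $U=n+1$ on $C_{n+1}\setminus C_n$ drops by $1$ with probability $>2^{-n}$ by construction. The cover $\bigcup_nC_n=\Inv$ follows by induction on the sets $\{x:\pp_x(\sigma_\Goal\le k)>0\}$, because $x\notin\bigcup_nC_n$ forces $T_\Inv(x,\bigcup_nC_n)=\lim_n T_\Inv(x,C_n)=0$. No Egorov argument is needed for $U$.

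Your $V$ also has an unaddressed finiteness problem. With weights $w_j\uparrow\infty$ fixed in advance, $\sum_j w_j\,\pp_x(\text{hit }\Inv\setminus C_j\text{ before }\Goal)$ can be infinite, e.g.\ at $x_0$. It is identically $\infty$ on $\Inv\setminus(\Goal\cup\bigcup_jC_j)$. The paper builds $V=V_1+\sum_kV_{n_k}$ with $V_n(x)=\pp_x(\sigma_{C_n^c}<\sigma_\Goal)$ on the \emph{same} dynamic exhaustion, so that $\{V\le m\}\subseteq C_{n_{m+1}}$ gives $H(r)=n_{\lceil r\rceil+1}$. It uses Egorov-type uniform-convergence sets to pick the subsequence so that $V<\infty$ $\psi$-almost everywhere. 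It then restricts the invariant to $\{V<\infty\}$, which contains $x_0$ and is absorbing off $\Goal$ by the drift inequality.
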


\begin{proof}[Proof Discussion]
The proof is very technical.
We only give intuitions here; see the supplementary material for details.
The proof of soundness is an extension of the techniques from \cite{McIverMKK18,MajumdarS25,MajumdarSS24}.
Their completeness proofs however require assumptions not met by $\deslang$; thus, we use new techniques to prove completeness.

At a high level, our rule is sound for the following reasons.
First, the supermartingale \( \spmtg \) in Drift Condition \textbf{V1} provides a means to partition the space of configurations \( X \) 
into concentric ``circles'' around \( \Goal \).
For example, the sublevel set \( \{x \in X \mid \spmtg(x) \leq 0\} \) is exactly the exempt set $T$, the ``innermost circle'', \( \{x \in X \mid \spmtg(x) \leq 1\} \) defines the next circle, and so on.
Because $\spmtg$ is required to be a supermartingale only \emph{outside} $T$ and vanishes \emph{on} $T$, the stopped process $\spmtg(x_{k \wedge \tau_T})$, where $\tau_T$ is the first time the system enters $T$, is a non-negative supermartingale.
The Martingale Convergence Theorem of \citet{doobBook} then ensures the system does not escape to infinitely many higher circles. 

The variant function \( \vnt \) is used to measure progress within these circles.
The Variant Condition \textbf{V2} implies that $\vnt$ will be bounded from above in each sublevel set of $\spmtg$; 
in particular, since $\spmtg \equiv 0$ on $T$, the variant is bounded on $T$. 
Such bounded variants are effective for showing reachability within these sublevel sets \cite[Lemma 7.5.1]{McIverMorganBook}.
\rev{This means that, whenever the system is inside a given circle (i.e., a bounded sublevel set of $V$), it reaches $\Goal$ with a non-zero minimum probability at each circle.
Because the system almost surely revisits some circle infinitely often, a zero-one law of probabilistic processes gives the soundness of the rule.}

The completeness proof is much more challenging, and we briefly review why existing completeness proofs do not imply our results.

Using techniques from Markov chain theory \cite{Foster51,Foster53,MSZ78}, completeness for almost-sure termination was only established recently \cite{MajumdarS25} for probabilistic programs evolving over countable state spaces.
Their proof critically relies on the countability of the state space, allowing an enumeration of states such that the 
probability of reaching states ``far to the right'' in the enumeration is small.
For general Markov processes, the state space contains a continuous component and their techniques do not apply.
Continuous components are necessary for $\deslang$ systems as we explicitly model time and continuous distributions over time.

Using machinery from the theory of topological recurrent Markov chains \cite{meyn2012markov}, \citet{MajumdarSS24} 
demonstrated a semi-completeness result for stochastic dynamical systems evolving over general 
state spaces endowed with topologies.
However, they required these systems to meet certain continuity assumptions over this topology that ensured that ``nearby'' states behaved ``similarly''.
More precisely, they assumed a \emph{weak Feller} property, i.e., 
that for any open set $O$ in the topology, $\liminf_{y \to x} T_X(y, O) \geq T_X(x, O)$ where $T_X(x, O)$ gives the 
probability of jumping from any state $x$ to~$O$.

Unfortunately, general Markov processes and, in  particular, Markov processes arising out of $\deslang$ systems, need not be weak Feller.
\smallCodeRef{fig:feller-ctx} describes an explicit counterexample.
After executing the initial procedure $p_0$, the execution reaches the configuration with queue $\set{(1, 0, a_1, v)}$, with $v$ sampled uniformly from $(0, 1)$.
A \textsc{dequeue} step deterministically empties the queue, sets the local variable $y$ to $v$, and sets the command $\com_1$ to the body of procedure $\mtt{p1}$.
Let $O$ be a small open set around $\goal = 1$, and take the configuration $\frakc = (\varnothing, S_{\msf{ctx}}, [y = 1/2], 1, \com_1)$ where the local variable $y = 1/2$. 
It is easy to see that
$\liminf_{\frakc' \to \frakc} T_O(\frakc') = 0$, as configurations $\frakc'$ with $y < 1/2$ (meaning $T_O(\frakc') = 0$) can be arbitrarily close to $\frakc$.
However, $T_O(\frakc) = 1 > 0$.

\begin{system}[t]
\centering
\begin{footnotesize}
\begin{minipage}{0.4\textwidth}
\begin{lstlisting}[mathescape=true]
proc $p_0$() = x $\sim$ $\uniform$(0, 1)
    post(p1, x, $\Dirac(1)$)
\end{lstlisting}
\end{minipage}
\begin{minipage}{0.55\textwidth}
\begin{lstlisting}[mathescape=true,firstnumber=3]
proc p1(y) = if (y < 1/2) then goal := 0
    else goal := 1
\end{lstlisting}

\end{minipage}
\end{footnotesize}
\caption{A $\deslang$ system $S_\msf{ctx}$ whose semantics is not weak-Feller.\label{fig:feller-ctx}}
\end{system}

Instead, the completeness of our proof rule can be established using recent advances in the theory of discrete-time Markov processes \cite{xu2018note}.
First, we produce an irreducible chain\footnote{
    For countable state spaces, a chain is irreducible if, for any pair of states, there is a finite path of non-zero probability linking the states. This notion is generalized using measure-theoretic constructions to apply in our setting.}
using the almost-sure reachability property; this gives us the invariant.
We then use insights from \citet{xu2018note} to produce a countable increasing cover over the state space $X$ such that each element in the cover satisfies a \emph{petiteness criterion}.\footnote{
    Informally, a set $C$ is \emph{petite} if there is a common measure on the state space that provides a uniform lower bound on the probability of reaching measurable subsets after a random number of steps, regardless of the starting point in $C$.
    } 
The key step uses Egorov's theorem to convert pointwise convergence into \emph{uniform convergence} on all but a negligible part of the
state space.
This argument removes the weak Feller requirement
from previous approaches \cite{MajumdarSS24,meyn2012markov}.
We then produce supermartingales using the structure of the cover using petite sets.
This division of the state space naturally induces a variant that marks a state using its position in the growing sequence.

The full proof, including measure-theoretic details, is in the supplementary material.
\end{proof}

\begin{system}[t]
\begin{footnotesize}

\begin{minipage}{0.6\textwidth}
\begin{lstlisting}[mathescape=true]
proc Walk() = u $\sim$ $\uniform(0, 1)$;
    if u $\leq$ $2^{-\lceil x \rceil}$ then x := $\normal$(x, 1);
    post(Walk, $\exponential(1.4)$)
\end{lstlisting}
\end{minipage}
\begin{minipage}{0.35\textwidth}
\begin{lstlisting}[mathescape=true,firstnumber=4]
proc $p_0$() = x := 1;
    post(Walk, $\Dirac(1)$)
\end{lstlisting}
\end{minipage}
\end{footnotesize}
\caption{The \emph{backoff walk} $S_\msf{bo}$: a symmetric random walk whose steps are attempted with a probability that decays geometrically in the current level.\label{fig:backoff-walk}}
\end{system}


\begin{example}
\label{ex:backoff-walk}
    Take the system $S_\msf{bo}$ of \codeRef{fig:backoff-walk}, a walk under \emph{exponential backoff}: from level $x$ a Gaussian step is attempted only with probability $q(x) = 2^{-\lceil x \rceil}$, and stalls (stays put) otherwise, inducing the kernel
    \[
      T_X(x, \cdot) \;=\; q(x)\,\normal(x, 1) \;+\; \bigl(1 - q(x)\bigr)\,\Dirac(x).
    \]
    We prove almost-sure reachability of $\Goal = \set{x \leq 0}$ with $\Inv = \reals$. Since $q$ is discontinuous at every integer, $S_\msf{bo}$ is not weak Feller (so, previous rules do not apply)
    and the Gaussian step is beyond $\pGCL$ and its continuous extensions \cite{McIverMorganBook,BatzKRW25}. 
    Reachability nonetheless holds, as the geometric number of stalls between attempts is almost surely finite.

    Take the exempt set $T = \set{x \leq 1}$ and the witnesses
    \[
    \spmtg(x) = x \cdot \indicator\set{x > 1},
    \qquad
    \vnt(x) = \bigl(x + \tfrac12\bigr) \cdot \indicator\set{x > 0},
    \]
    \[
    H(r) = \max(r, 1) + \tfrac12,
    \qquad
    d(r) = \tfrac12,
    \qquad
    \epsilon(r) = \Phi(-\tfrac12) \cdot \rev{2^{-\lceil r - 1/2 \rceil}},
    \]
    with $\phi$ and $\Phi$ denoting the standard normal density and cumulative distribution functions.
    For \textbf{V1}, a stall leaves $\spmtg$ unchanged, so outside $T$ its drift is that of the plain Gaussian step damped by $q$: $\Delta\spmtg(x) = -q(x)\bigl(x\,\Phi(1 - x) - \phi(1 - x)\bigr) \leq 0$ for $x > 1$, since $q > 0$ and, for $t = x - 1 \geq 0$, the Mills bound \cite{Birnbaum42,Mills26} gives $(1 + t)\,\Phi(-t) \geq \phi(t)$ (equality only as $t \downarrow 0$)---so $T$ is the tightest exempt set for this choice of $V$. 

    The variant shows why \textbf{V2} needs $\epsilon(r)$. A stall fixes $\vnt$ and an attempted step raises it half the time; it falls only on a step drawing below $-\tfrac12$, of probability $q(x)\,\Phi(-\tfrac12)$, which \emph{vanishes} as $x \to \infty$, so no constant $\epsilon$ works. \rev{But at states with $\vnt(x) \leq r$ we have $x \leq r - \tfrac12$, hence $q(x) \geq 2^{-\lceil r - 1/2 \rceil} > 0$ and the variant descends with probability $\epsilon(r)$.}
    The example also shows the utility of the exempt set: without it, we cannot easily describe a supermartingale that decreases in expectation in $(0, 1)$ and is zero on $\set{x\leq 0}$.
    \hfill\qed
\end{example}

\subsection{Almost-Sure Reachability: Proof Rule for $\deslang$}

In principle, one could use \cref{rule:des-ast-markov} to reason about the almost-sure reachability of $\deslang$ systems 
by applying it on the Markov process defining its semantics.
However, that would require building monolithic certificates over the space of all configurations of the $\deslang$ system.
We now provide a compositional proof rule for $\deslang$ systems.
The main difference from \cref{rule:des-ast-markov} is that, instead of a single supermartingale and and a single variant function, the rule asks for a supermartingale and variant for each procedure defined in the system. 
These witnesses are only evaluated at configurations where the command is the body of the procedure.
As we shall show later, we can compose these witnesses together to get global supermartingales and variants on the state space.

\paragraph{Notation.}
To describe our proof rule, we need some additional notation.
A tuple of an event queue, a program state, and a timestamp is called a \emph{full state}, denoted by $\scrs$.
The set of full states $\calQ \times \Sigma \times \timestamps$ is denoted by $\scrS$, where $\calQ$ is the set of event queues and $\Sigma$ is the set of program states.
Recall that configurations are denoted by $\frakc$ and $\frakc.Q$, $\frakc.S$, $\frakc.\sigma$, $\frakc.t$, and $\frakc.\sfc$ denotes its queue, $\deslang$ system, program state, time, and command, respectively.
Thus, full states are produced from configurations by dropping its command.
For a configuration $\frakc$, let $\frakc.\scrs$ denote the full state $(\frakc.Q, \frakc.\sigma, \frakc.t)$.
\rev{Full states carry the sigma-algebra of the sets $A \subseteq \scrS$ with $\set{\frakc \mid \frakc.\scrs \in A} \in \calB(\frakC)$, so a function $f$ on $\scrS$ is measurable exactly when $f(\frakc.\scrs)$ is measurable in $\frakc$.}
For functions $f$ over full states, we overload $f(\frakc)$ to mean $f(\frakc.\scrs)$.
Finally, when the context is clear, we let $\com_p$ denote the body of procedure $p$ defined in the $\deslang$ system $S$. 

Our proof rule for $\deslang$ systems decomposes \cref{rule:des-ast-markov} by asking for functions that defined on full states rather than configurations.
Because full states lack the command information, this new proof rule must ask for supermartingales and variants for each procedure defined in the system $S$.
For each procedure $p$ with body $\com_p$ in $S$, the proof rule uses the supermartingales and variants for $p$ only at command configurations with command $\com_p$.

Recall that, the operational semantics executes a $\deslang$ system by executing its loop-free procedures first, and then dequeues a procedure from the event queue for execution.
Thus, the proof rule must compare a procedure witness against the witness of the successor procedure reached after a dequeue in its supermartingale and variant conditions.

Our proof rule formalizes this using \emph{dequeue witnesses}.
Call a configuration $\frakc$ a \emph{command configuration} if $\frakc_\sfc \neq \sfskip$, a \emph{dequeue configuration} if $\frakc_\sfc = \sfskip$ and $\frakc_Q \neq \emptyset$, and a \emph{terminal} configuration otherwise.
The relation $\Rightarrow$ defining the operational semantics of $\deslang$ systems ensures that, from any command configuration, the successor configuration is almost surely a dequeue configuration.
The dequeue witnesses used in our proof rule are functions $\spmtg_\triangledown$ and $\vnt_\triangledown$ that affect a dequeue to produce the successor procedure witnesses $\spmtg_p$ and $\vnt_p$. 
For a full state $\scrs$, let $p_\triangledown(\scrs) = p$ and let $\scrs_\triangledown$ denote the full state $(\scrs.Q \setminus \min(\scrs.Q), \sigma[\bfy\mapsto\bfv], t_p)$,
where $\min(\scrs.Q) = (t_p, \_, p, \bfv)$. 
Then, $\spmtg_\triangledown(\scrs) \triangleq \spmtg_p(\scrs_\triangledown)$ and $\vnt_\triangledown \triangleq \vnt_p(\scrs_\triangledown)$.


\begin{figure}[h]
\begin{proofrule}[Almost-Sure Reachability for $\deslang$]
\label{rule:des-ast}
To show that a $\deslang$ $S$ almost surely reaches a measurable target $\Goal \subseteq \scrS$, find 
\begin{enumerate}[label=\arabic*.,leftmargin=1.5em]
    \item A \rev{measurable} \emph{invariant} $\Inv \supseteq \Goal$ containing the initial configuration $\frakc_0$,
    \item A \rev{measurable} \emph{exempt set} $T$ of full states with $\Goal \subseteq T \subseteq \Inv_\scrS$,
    \item \rev{Measurable} \emph{procedure supermartingales $\spmtg_p$ and variants $\vnt_p$} for each $p$ defined in $S$ mapping $\scrS$ to $\nnreals$,
    \item Assistant functions $H, d, \epsilon : \nnreals \to \preals$, 
\end{enumerate}
such that 
\begin{enumerate}[label=\Alph*.,leftmargin=1.5em]
    \item \emph{Witnesses vanish exactly on their target sets:} for each $p$ defined in $S$, $\spmtg_p(\fullStateArg) = 0$ \emph{iff} $\fullStateArg.\scrs \in T$, and $\vnt_p(\fullStateArg) = 0$ \emph{iff} $\fullStateArg.\scrs \in \Goal$,
\end{enumerate}
and for each $\frakc \in \Inv$ with $\frakc \Rightarrow \mu_\frakc$,
\begin{enumerate}[label=\Alph*.,leftmargin=1.5em,start=2]
    \item \emph{Supermartingale at procedures outside $T$:}
    if $\frakc.\sfc = \com_p$ for $p$ defined in $S$ and $\frakc.\scrs \notin T$, then
    $\spmtg_p(\fullStateArg) \geq \int_\frakC \spmtg_\triangledown(\fullStateArg') d\mu_\frakc(\frakc')$,
    \item \emph{Variant conditions:} for each $r \in \nnreals$,
    \begin{enumerate}[label=\alph*.,leftmargin=*]
        \item \emph{Variants are bounded:} if $\spmtg_p(\fullStateArg) \leq r$, then $\vnt_p(\fullStateArg) \leq H(r)$ for each $p$ defined in $S$,
        \item \emph{At procedures:}  if $\frakc.\sfc = \com_p$ for some $p$ defined in $S$, $\frakc.\scrs \notin \Goal$,
        and \rev{$\vnt_p(\fullStateArg) \leq r$}, then\\
        \hspace*{1em}
        $\mu_\frakc( \set{\frakc' \in \Inv \mid \vnt_\triangledown(\fullStateArg') \leq \vnt_p(\fullStateArg) - d(r)}) \geq \epsilon(r)$,
    \end{enumerate}
\end{enumerate}
\end{proofrule}

\caption{The proof rule for deducing the almost-sure reachability of $\deslang$ systems.}
\label{fig:rule-fast}
\end{figure}


Condition A in \cref{rule:des-ast} requires that each supermartingale $\spmtg_p$ is $0$ exactly on $T$ and each variant $\vnt_p$ is $0$ exactly on $\Goal$; in particular $\spmtg_p$ vanishes throughout $T$.
Condition B formally specifies the supermartingale property over procedures at states outside $T$.
Note that, after executing the body of a procedure, configurations with $\sfskip$ commands are reached almost surely.
The semantics of $\deslang$ is to dequeue these configurations deterministically.
$\spmtg_\triangledown$ performs this dequeue and outputs the value of the correct procedure supermartingale for condition B.
The conditions in C relate to the variant, and use the auxiliary functions $H$, $d$ and $\epsilon$ asked by the rule.

Condition C(a) means that the variants $\vnt$ are bounded from above by a function ($H$) of the supermartingale $\spmtg$.
Because $\spmtg \equiv 0$ on $T$, instantiating Condition C(a) at $r = 0$ forces $\vnt \leq H(0)$ throughout $T$; that is, the variant is automatically bounded on the exempt set, which is what lets it carry the system to $\Goal$ once $\spmtg$ has confined it to $T$.
Condition C(b) formally specifies the variant property at procedures outside $\Goal$: \rev{at states where $\vnt_p \leq r$, the variant decreases by at least $d(r)$ with probability at least $\epsilon(r)$}.


\begin{theorem}
\label{th:des-sound-and-complete}
    \Cref{rule:des-ast} is sound and complete.
\end{theorem}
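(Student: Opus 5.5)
We reduce both directions to \cref{th:ast-markov}, applied to the Markov process $(\frakC, T_\frakC)$ of \cref{subsec:markov-semantics}. The main device is a translation between procedure-indexed witnesses on full states and monolithic witnesses on configurations. Command configurations step almost surely to dequeue configurations, and those step deterministically to command configurations. So each command-configuration step of the rule corresponds to two steps of the chain.

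\emph{Soundness.} Given the witnesses of \cref{rule:des-ast}, we define global $V, U$ on configurations. Set $V(\frakc) = \spmtg_p(\frakc.\scrs)$ if $\frakc.\sfc = \com_p$. Set $V(\frakc) = \spmtg_\triangledown(\frakc.\scrs)$ if $\frakc$ is a dequeue configuration. Define $U$ analogously from $\vnt_p$ and $\vnt_\triangledown$. On terminal configurations (and on command configurations reached mid-procedure, if the one-step compilation leaves any), we set the witnesses to the value of the enclosing procedure; this needs care to stay measurable.

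We take the global invariant to be $\Inv$ closed under one deterministic dequeue step. The global exempt set and goal are the configurations whose full state lies in $T$ and $\Goal$ respectively.

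Drift \textbf{V1} at dequeue configurations is an equality, since $T_\frakC$ is the Dirac measure on the dequeued configuration, whose procedure witness equals $\spmtg_\triangledown$ by definition. At command configurations, \textbf{V1} is exactly Condition B. The variant decrease is exactly C(b) at command configurations.

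At dequeue configurations, $U$ does not strictly decrease. To handle this, pass to the two-step chain $T_\frakC^2$ restricted to command configurations. Reaching is unaffected, because the goal depends only on the full state and a dequeue does not alter global variables. On the two-step chain, B and C(b) give \textbf{V1} and \textbf{V2} directly, and C(a) gives the bound $H$. Measurability of $\spmtg_\triangledown$ follows because $\scrs\mapsto\scrs_\triangledown$ is measurable: $\min(Q)$ is a measurable selection on finite queues.

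\emph{Completeness.} Assume $S$ reaches $\Goal$ almost surely. Apply completeness in \cref{th:ast-markov} to the two-step chain on command configurations, where hitting is equivalent, obtaining $\Inv', T', V, U, H, d, \epsilon$. We then need procedure witnesses on full states, defined by $\spmtg_p(\scrs) = V(\scrs, \com_p)$ and $\vnt_p(\scrs) = U(\scrs, \com_p)$. Since $(\scrs,\com_p)\mapsto$ configuration is a measurable embedding, these are measurable. Conditions B and C(b) are then the global conditions read at $\frakc.\sfc=\com_p$, with $\spmtg_\triangledown$ being exactly $V$ evaluated after the dequeue.

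\emph{Main obstacle.} Condition A asks that all $\spmtg_p$ vanish on the same set $T$ of full states and all $\vnt_p$ on $\Goal$, independently of $p$. Meanwhile, the Markov completeness proof gives an exempt set of configurations that may depend on the command. We would first try to choose $T$ as the union over $p$ of the $\com_p$-slices of $T'$. Where $V(\scrs,\com_p)>0$ but $\scrs\in T$, we would redefine $\spmtg_p$ to be $0$ on $T$; this is harmless since B is only required outside $T$. On $T$ we would redefine $\vnt_p$ to $\min(\vnt_p, H(0))$, still ensuring $\vnt_p>0$ off $\Goal$. Failing that, we would revisit the petite-set construction so that the cover is built from full-state sets. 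Since $\Goal$ is defined on full states, the hitting-probability levels used there can be taken command-independent up to the finite index $p$. We expect checking that C(b) survives these modifications to be the delicate point.
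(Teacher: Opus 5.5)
\textbf{Soundness.} Your argument is correct and takes a different route from the paper's. The paper builds the same global $\spmtg$ and $\vnt$ and asserts that \textbf{V1} and \textbf{V2} hold for the one-step chain $(\frakC, T_\frakC)$. You correctly observe that $\vnt$ is unchanged across the deterministic dequeue, so the descent part of \textbf{V2} fails at non-goal dequeue configurations of that chain. The paper's sketch does not address this.

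Sampling at command configurations turns Conditions B and C(b) literally into \textbf{V1} and \textbf{V2}. You should keep terminal configurations as absorbing states, since $T_\frakC^2$ from a command configuration can land on them. For soundness the subsampling costs nothing, because hitting $\Goal$ along a subsequence of the trajectory implies hitting it along the trajectory. An alternative repair would be to rerun \cref{lem:sublevel-reach} with two steps per descent. Your version has the advantage of using \cref{th:ast-markov} unchanged.

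\textbf{Completeness.} This half has a gap exactly where you flag it, and your patch does not close it. Truncating $\vnt_p$ to $\min(\vnt_p, H(0))$ on $T \setminus \Goal$ lowers the starting value of the required descent. Successors outside $T$ keep their original, possibly much larger, values, and $d$ and $\epsilon$ must now be evaluated at the smaller level. So C(b) can fail.

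The missing observation is that you should not treat \cref{th:ast-markov} as a black box with an arbitrary exempt set. Its completeness proof builds the witnesses for the minimal exempt set $T = \Goal$: both $V$ and $U$ are constructed on the exhaustion starting at $C_0 = \Goal$, and both vanish exactly on $\Goal$. Your two-step chain has as goal the cylinder $\{\frakc \mid \frakc.\scrs \in G\}$. Applying the construction to it gives, for every $p$, $V(\scrs, \com_p) = 0$ iff $\scrs \in G$, and $U(\scrs, \com_p) = 0$ iff $\scrs \in G$.

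Hence Condition A holds with $T := G$ for the unmodified witnesses $\spmtg_p(\scrs) = V(\scrs, \com_p)$ and $\vnt_p(\scrs) = U(\scrs, \com_p)$. This is precisely the paper's translation. B, C(a) and C(b) then follow from your two-step chain as you describe, and no change to the petite-set construction is needed.

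\textbf{Hitting equivalence.} The equivalence you invoke for completeness requires $\Goal$ to be stable under dequeue steps. This holds in the $\goal = 1$ formulation, since a dequeue changes only the queue, the clock, and local parameters. It can fail for an arbitrary measurable $\Goal \subseteq \scrS$, which \cref{rule:des-ast} nominally allows. You should state this hypothesis explicitly; the paper relies on it implicitly as well.
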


To prove the soundness of \cref{rule:des-ast}, we show that certificates from the rule can be translated into certificates for \cref{rule:des-ast-markov}.
We prove completeness by translating the certificates of \cref{rule:des-ast-markov} to \cref{rule:des-ast}.
Let $S$ be a $\deslang$ system and let $(\frakC, T_\frakC)$ be its associated Markov process.

\paragraph{From $\deslang$ to Markov chains.}
Take a collection of certificates proving that a $\deslang$ system $S$ almost surely reaches a measurable set $\Goal$ of configurations.
Let $\Inv$ be the invariant for the $\deslang$ system. 
Define the restricted invariant $\Inv'$ as
\begin{equation}
\label{eq:reduced-inv}
\Inv' = \left\{\frakc \in \Inv \mid \frakc.\sfc = \sfskip \lor \left(\exists p \text{ is defined in } S \text{ and } \frakc.\sfc = \com_p \right)\right\}.
\end{equation}
Here, $P_S$ is the collection of procedures defined in $S$.
$\Inv'$ restricts $\Inv$ to configurations at which dequeues are imminent ($\frakc.\sfc = \sfskip$) or configurations reached immediately after dequeues ($\frakc.\sfc = S(p)$ for some $p \in S$).
Recall that the semantics of $\deslang$ commands implies that from command configurations, systems reach dequeue or terminal configurations almost surely.
Thus, $\Inv'$ is an invariant and trivially contains the initial configuration.
Because $\Goal$ is reached almost surely under $\Inv$ and $\Inv' \subseteq \Inv$, a measurable subset $G \subseteq \Goal$ also reached almost surely must be included in $\Inv'$.
Thus, the certificates restricted to $\Inv'$ still prove the almost-sure reachability of $\Goal$.
A global supermartingale $\spmtg$ and variant $\vnt$ can be constructed from the local witnesses $\spmtg_p$ and $\vnt_p$ using the functions $\spmtg_\triangledown$ and $\vnt_\triangledown$ as follows.
\begin{equation*}
\label{eq:combined_aligned}
\begin{aligned}
\spmtg(\frakc) & = \begin{cases}
\spmtg_p(\frakc.\scrs), & \frakc.\sfc = \com_p \\
\spmtg_\triangledown(\frakc.\scrs), & \frakc.\sfc = \sfskip \\
0, & \text{otherwise}
\end{cases}
& \quad \text{and} \quad\quad
\vnt(\frakc) & = \begin{cases}
\vnt_p(\frakc.\scrs), & \frakc.\sfc = \com_p \\ \vnt_\triangledown(\frakc.\scrs), & \frakc.\sfc = \sfskip \\
0, & \text{otherwise.}
\end{cases}
\end{aligned}
\end{equation*}
\rev{The configurations with a given command form a measurable set and the map $\scrs \mapsto \scrs_\triangledown$ is measurable, so $\spmtg$ and $\vnt$ are measurable whenever every $\spmtg_p$ and $\vnt_p$ is.}
Because $\spmtg_\triangledown$ and $\vnt_\triangledown$ already perform the deterministic dequeue step, the composed $\spmtg$ and $\vnt$ satisfy the Drift Criterion \textbf{V1} and Variant Criterion \textbf{V2} exactly where Conditions B and C hold for $S$---that is, outside the exempt set $T$, which carries across the translation unchanged---so if \cref{rule:des-ast-markov} is sound, then \cref{rule:des-ast} is sound.



\paragraph{From Markov chains to $\deslang$.}
Take a collection of certificates proving that the chain $(\frakC, T_\frakC)$ almost surely reaches $\Goal$.
We assume that $\Goal$ is constructed from a set $G$ of full states of $S$ such that $\Goal = \{\frakc \in \frakC \mid \frakc.\scrs \in G \}$.
Let $\Inv$ be the invariant for the Markov chain.
Again, define the reduced invariant $\Inv'$ from (\ref{eq:reduced-inv}); it will contain a measurable part of $\Goal$ and the initial state $\frakc_0$ of $S$.
Because $\Inv' \subseteq \Inv$, the properties of the witnesses $\spmtg$ and $\vnt$ apply over $\Inv'$.
The procedure and dequeue witnesses can be defined by reversing the prior translations from $\deslang$ to Markov chains.
\begin{align*}
V_p(Q, \sigma, t) &= V(Q, S, \sigma, t, S(p)) & U_p(Q, \sigma, t) &= U(Q, S, \sigma, t, S(p))
\end{align*}
Thus, completeness of \cref{rule:des-ast-markov} implies completeness of \cref{rule:des-ast}.
\hfill $\square$

\begin{figure}[t]
\begin{proofrule}[Expected Time to Reach the Goal for $\deslang$]
\label{rule:des-time}
To bound the expected time for a $\deslang$ $S$ to reach a measurable target $\Goal \subseteq \scrS$, first show that $S$ reaches $\Goal$ almost surely, and additionally find
\begin{enumerate}[label=\arabic*.,leftmargin=1.5em]
    \item A \rev{measurable} \emph{invariant} $\Inv \supseteq \Goal$ containing the initial configuration $\frakc_0$,
    \item For each $p \in P_S$, a \rev{measurable} \emph{supermartingale $W_p$ bounding the expected time to reach the goal set}, mapping $\scrS$ to $\nnreals \cup \set{\infty}$,
\end{enumerate}
inducing the dequeue witness $W_\triangledown(\scrs) \triangleq W_{p_\triangledown(\scrs)}(\scrs_\triangledown)$,
such that
\begin{enumerate}[label=\Alph*.,leftmargin=1.5em]
    \item \emph{Dominates the clock on the goal:} for each $p \in S$ and each full state $\scrs \in \Goal$, $W_p(\scrs) \geq \scrs.t$,
\end{enumerate}
and for each $\frakc \in \Inv$ with $\frakc \Rightarrow \mu_\frakc$,
\begin{enumerate}[label=\Alph*.,leftmargin=1.5em,start=2]
    \item \emph{Supermartingale at procedures:}
    if $\frakc.\sfc = S(p)$ for $p \in S$ and $\frakc.\scrs \notin \Goal$, then
    $W_p(\fullStateArg) \geq \int_\frakC W_\triangledown(\fullStateArg') d\mu_\frakc(\frakc')$.
\end{enumerate}
Then, the expected time to reach $\Goal$ is at most $W_{p_0}(\frakc_0)$.
\end{proofrule}

\caption{Bounding the expected time to reach the goal set.}
\label{fig:rule-time}
\end{figure}

\subsection{Bounding the Expected Reach Time}

Next, we give a proof rule for bounding the expected time to reach the goal in $\deslang$ systems.
Recall that time is a part of the $\deslang$ configuration, and is updated at each dequeue step.
Thus, time is effectively a program variable, and the expected reach time is therefore the expected value of the program variable $t$ representing the global clock at the first configuration with $\goal=1$.

The standard technique to establish upper bounds on program variables is to apply Park induction.
In our setting, this translates to a supermartingale $W$ that dominates the time at the goal, i.e., $W(\frakc) \geq \frakc.t$ for $\frakc \in \Goal$.
Intuitively, $W(\frakc)$ can be understood as the expected value of the time $t$ at the $\Goal$ from $\frakc$ plus $\frakc.t$.
As in \cref{rule:des-ast}, this function $W$ can also be decomposed into per-procedure functions $W_p$.
Condition A ensures the remainder is zero on $\Goal$ and Condition B forces it to absorb, in expectation, the passage of time at each dequeue.

Notice that \cref{rule:des-time} requires \cref{rule:des-ast}.
This is necessary for soundness.
If $\Goal$ is reached with probability strictly less than one, the true expected time to reach $\Goal$ is $+\infty$. 
However, finite $W$ satisfying the supermartingale conditions of \cref{rule:des-time} can exist because $W$ is no longer constrained by condition A in diverging runs.

\begin{theorem}
    \label{thm:des-time}
    \cref{rule:des-time} is sound and complete.
\end{theorem}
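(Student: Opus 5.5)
We treat soundness and completeness separately, following the same strategy as for \cref{th:des-sound-and-complete}: first prove the analogous statement for the Markov process $(\frakC, T_\frakC)$ with a global function $W$, then move between per-procedure witnesses $W_p$ and a global $W$ using the dequeue witness $W_\triangledown$. The global $W$ is defined as before: $W(\frakc)=W_p(\frakc.\scrs)$ if $\frakc.\sfc=\com_p$, $W(\frakc)=W_\triangledown(\frakc.\scrs)$ if $\frakc.\sfc=\sfskip$, and $W(\frakc)=\frakc.t$ on terminal configurations. We restrict to the reduced invariant $\Inv'$ of (\ref{eq:reduced-inv}). Since dequeue steps are deterministic and preserve $W$ by construction, Condition B turns into $\int T_\frakC(\frakc,d\frakc')W(\frakc')\le W(\frakc)$ for all $\frakc\in\Inv'\setminus\Goal$. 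Condition A turns into $W\ge t$ on $\Goal$.

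\textbf{Soundness.} Write $x_k$ for the chain and $T_k$ for the clock after $k$ steps. Let $\tau=\sigma_\Goal$ and consider the stopped process $Y_k=W(x_{k\wedge\tau})$. By the global drift inequality, $Y_k$ is a nonnegative (possibly $\infty$-valued) supermartingale, so $\bbE[Y_k]\le W(\frakc_0)=W_{p_0}(\frakc_0)$ for every $k$. The premise of \cref{rule:des-time} includes almost-sure reachability, so by \cref{th:des-sound-and-complete} we have $\tau<\infty$ almost surely. Hence $Y_k\to W(x_\tau)\ge T_\tau$ almost surely, using Condition A. Fatou's lemma then gives $\bbE[T_\tau]\le\liminf_k\bbE[Y_k]\le W_{p_0}(\frakc_0)$. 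Two technical points need care here. First, $W$ may take the value $\infty$; if $W_{p_0}(\frakc_0)=\infty$ the bound is trivial, and otherwise extended-real conditional expectations suffice. Second, $W$ must be measurable, which follows exactly as in the previous proof. This is also where almost-sure reachability is essential: without it, $Y_k$ need not converge to anything that dominates the clock.

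\textbf{Completeness.} Suppose $S$ reaches $\Goal$ almost surely. Almost-sure reachability is then available from the completeness part of \cref{th:des-sound-and-complete}. For $W$, take the value function $W^*(\frakc)=\bbE_\frakc[T_{\sigma_\Goal}]\in\nnreals\cup\{\infty\}$, and set $W_p(Q,\sigma,t)=W^*(Q,S,\sigma,t,S(p))$. Condition A holds with equality, since $\sigma_\Goal=0$ when the chain starts in $\Goal$. Condition B holds with equality by the Markov property: for $\frakc\notin\Goal$ we have $W^*(\frakc)=\int T_\frakC(\frakc,d\frakc')W^*(\frakc')$, and composing with the deterministic dequeue identifies $W^*$ at a skip configuration with $W_\triangledown$. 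The bound is tight, because $W_{p_0}(\frakc_0)$ equals the true expected time.

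The main obstacle is measurability of $W^*$, or more precisely making the one-step first-passage identity rigorous on an uncountable space. I would handle it by writing $W^*=\lim_n \bbE_\frakc[T_{\sigma_\Goal\wedge n}\,\text{-type truncations}]$. Concretely, I would use $\bbE_\frakc[T_{\sigma_\Goal}\indicator\{\sigma_\Goal\le n\}]$, obtained by iterating the kernel $n$ times. Each truncation is measurable because kernels preserve measurability, and the sequence increases in $n$, so monotone convergence gives both measurability of $W^*$ and the harmonic equation. A secondary subtlety is that the rule fixes $\Goal$ as a set of full states, so $\Goal$-membership is independent of the command. This makes $W_p$ well defined on $\Goal$ regardless of $p$.
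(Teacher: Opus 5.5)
\textbf{Soundness.} Your soundness argument is correct, and it takes a different route from the paper. The paper treats the expected reach time as the least fixed point of the one-step expected-time operator and gets the bound by Park induction ($W$ is a pre-fixed point). You instead run the stopped process $W(x_{k\wedge\sigma_{\Goal}})$ as a nonnegative supermartingale and pass to the limit with Fatou, which makes explicit where almost-sure reachability is used. One caveat: ``$W \geq t$ on $\Goal$'' at a skip configuration requires $\scrs_\triangledown \in \Goal$ whenever $\scrs \in \Goal$. This holds for the $\goal = 1$ target, because a dequeue only rebinds parameters and advances the clock. It does not hold for an arbitrary measurable $\Goal \subseteq \scrS$.

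\textbf{The gap is in completeness}, at the sentence identifying $W^*$ at a skip configuration with $W_\triangledown$. The identification is valid at skip configurations outside $\Goal$. It fails at skip configurations already in $\Goal$, and those are exactly where the chain first enters $\Goal$ when a body executes $\goal := 1$. At such a $\frakc'$ you have $W^*(\frakc') = \frakc'.t$. But $W_\triangledown(\frakc'.\scrs)$ is $W^*$ at the post-dequeue configuration, which is again in $\Goal$, so it equals the timestamp $t' \geq \frakc'.t$ of the next event. Hence $\int W_\triangledown\, d\mu_\frakc \geq \int W^*\, d\mu_\frakc = W_p(\frakc.\scrs)$, in general strictly, and Condition B fails for your witness.

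No other witness repairs this. Take a system where the body of $p_0$ is $\goal := 1;\ \post(A, \Dirac(5))$. The expected reach time is $0$. Yet Condition B at $\frakc_0$, combined with Condition A for $W_A$ at the full state after the dequeue, forces $W_{p_0}(\frakc_0) \geq 5$ for every admissible family. If you replace $\Dirac(5)$ by a delay distribution of infinite mean, every admissible $W_{p_0}(\frakc_0)$ is $\infty$. So the tightness you claim is false for \cref{rule:des-time} as literally stated. The paper's one-line completeness argument (the expected time ``satisfies the constraints'') passes over the same point.

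To get a true statement you have two options:
\begin{enumerate}
\item Amend Condition B so the integrand is $\frakc'.t$ on successors $\frakc' \in \Goal$. Then your $W^*$, with your truncation argument for measurability and the first-step identity, works verbatim.
\item Keep the rule and prove completeness only for the overshoot-inclusive quantity $\bbE[T_{\sigma_{\Goal}+1}]$, using $\frakc \mapsto \bbE_\frakc[T_{\sigma_{\Goal}+1}]$ as the witness. This quantity can be infinite even when the true expected reach time is finite.
\end{enumerate}
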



The soundness and completeness of this rule follows from expressing the expected time as a least fixed point and the computation of the fixed
point using Park induction \cite{Kozen85,HitchcockP72} (see, e.g., \citet[Theorem 5]{FengCSKKZ23}).
For soundness, the conditions make $W_p(\cdot)$ a pre-fixed point of the operation, and so it dominates the least fixed point.
For completeness, note that if $\Goal$ is reached almost surely, then the expected time to reach goal satisfies the constraints.

Note that one cannot bound the expected reach time of $\deslang$ systems by bounding the expected value of the $\Goal$'s hitting time $\sigma_{\Goal}$ in the Markov process $(\frakC, T_\frakC)$.
The symmetric random walk \codeRef{fig:backoff-walk} can be edited so that each \textsf{post} of \texttt{Walk} to $\Dirac(0)$ will induce $\bbE[\sigma_{\Goal}] = \infty$, but because these infinitely many steps happen at the same logical timestep, the expected time to reach the goal is trivially $0$.

\subsection{Exponential Distributions and Rate-Based Event Sources}
\label{subsec:post-rate}

While our rules are complete, finding explicit certificates can be tricky, especially if the program
configuration does not capture the right abstraction. Consider the following example.

\begin{figure}[t]
\centering
\footnotesize
\begin{minipage}[b]{0.40\linewidth}
  \centering
  \resizebox{\linewidth}{!}{%
  \begin{tikzpicture}[>=stealth,
    qn/.style={circle,draw=black!70,fill=black!10,thick,inner sep=1pt,
               minimum size=17pt,font=\small},
    goal/.style={qn,double,double distance=1.2pt},
    ev/.style={font=\scriptsize}]
    \node[goal] (n0) at (0,0)   {$0$};
    \node[qn]   (n1) at (1.5,0) {$1$};
    \node[qn]   (n2) at (3.0,0) {$2$};
    \node[qn]   (n3) at (4.5,0) {$3$};
    \node       (nd) at (5.6,0) {$\cdots$};
    \draw[->] (n0) to[bend left=45] node[ev,above]{$1$}     (n1);
    \draw[->] (n1) to[bend left=45] node[ev,above]{$p$}     (n2);
    \draw[->] (n2) to[bend left=45] node[ev,above]{$p$}     (n3);
    \draw[->] (n3) to[bend left=45] node[ev,above]{$p$}     (nd);
    \draw[->] (n1) to[bend left=45] node[ev,below]{$1{-}p$} (n0);
    \draw[->] (n2) to[bend left=45] node[ev,below]{$1{-}p$} (n1);
    \draw[->] (n3) to[bend left=45] node[ev,below]{$1{-}p$} (n2);
    \draw[->] (nd) to[bend left=45] node[ev,below]{$1{-}p$} (n3);
  \end{tikzpicture}}
  \\[4pt]
  (a) Classical M/M/1 queue.
\end{minipage}
\hfill
\begin{minipage}[b]{0.58\linewidth}
  \centering
  \resizebox{\linewidth}{!}{%
  \begin{tikzpicture}[>=stealth,
    rn/.style={rounded corners,draw=black!70,fill=black!10,thick,
               inner sep=2.5pt,font=\small},
    goal/.style={rn,double,double distance=1.2pt},
    sm/.style={font=\scriptsize},
    ev/.style={font=\scriptsize,inner sep=1.5pt,align=center},
    evw/.style={ev,fill=white}]
    \colorlet{condcol}{blue!60!black}\colorlet{probcol}{red!55!black}
    \node[rn]   (cq)   at (2.0,1.7)  {$C(n,\Delta t)$};
    \node[rn]   (cq1)  at (5.4,1.7)  {$C(n{+}1,\Delta t)$};
    \node       (ctop) at (6.7,1.7)  {$\cdots$};
    \node[goal] (g)    at (-0.5,0)   {$n=0$};
    \node[rn]   (sq)   at (2.0,0)    {$S(n,\Delta t)$};
    \node[rn]   (sq1)  at (5.4,0)    {$S(n{+}1,\Delta t)$};
    \node       (sbot) at (6.7,0)    {$\cdots$};
    \draw[->] (cq) to[bend left=15] node[ev,above]{\textcolor{probcol}{$1{-}e^{-\lambda \Delta t}$}} (cq1);
    \draw[->] (cq) to[bend right=8] node[evw,pos=0.27]{\textcolor{probcol}{$e^{-\lambda \Delta t}$}} (sq1);
    \draw[->] (sq1) to[bend left=15] node[ev,below]{\textcolor{probcol}{$1{-}e^{-\mu \Delta t}$}} (sq);
    \draw[->] (sq1) to[bend right=8] node[evw,pos=0.27,yshift=5pt]{\textcolor{probcol}{$e^{-\mu \Delta t}$}} (cq);
    \draw[->,dashed] (sq) -- node[ev,above,pos=0.5]{$n{=}1$} (g);
  \end{tikzpicture}}
  \\[4pt]
  (b) M/M/1 queue in $\deslang$.
\end{minipage}
\caption{M/M/1 queue, two views. \textbf{(a)} The classical model of the M/M/1 queue, a birth-death Markov chain on the queue length \(n\) (up w.p.\ \(p=\lambda/(\lambda+\mu)\), down w.p.\ \(1-p\), goal \(n=0\)), on which \(V=U=n\) is a certificate. \textbf{(b)} The same queue as a \(\deslang\) system. A configuration carries the residual delay \(\Delta t\) of the one already-scheduled event, giving the client-headed \(C(n,\Delta t)\) and server-headed \(S(n,\Delta t)\) states.
A service completion at \(n=1\) empties the queue and reaches the goal (dashed). With \(\Delta t\) in the state, \(n\) is no longer a supermartingale.}
\label{fig:mm1-des-chains}
\vspace*{-0.9em}
\end{figure}

\begin{example}
Consider the client-server example from \cref{fig:mm1} that models an M/M/1 queueing system from queueing theory.
For this example, classical queueing theory answers the almost-sure reachability question: the queue empties almost surely iff the arrival
rate $\lambda$ is at most the service rate $\mu$.
The textbook argument models the system as a continuous-time Markov chain whose states correspond to the number of elements waiting
in the server's queue. 
With this representation, the queue size $n$ is a supermartingale and a variant. 
The supermartingale has non-positive drift $\lambda - \mu$ away from the empty state and the variant decreases with positive probability $\mu/(\lambda+\mu)$.

This certificate, however, does not discharge our proof rule, which requires a certificate on the \emph{configurations} of a $\deslang$ execution.
The illustration in \cref{fig:mm1-des-chains} contrasts the classical M/M/1 chain with its $\deslang$ execution.
Because $\deslang$ systems are headed by procedures, every state of the $\deslang$ M/M/1 is either a \textsf{Client} state or a \textsf{Server} state.
The \textsf{Client} deterministically increments the queue size $n$ and $\post$s a new client to the event queue with a delay sampled at the exponential rate $\lambda$.
Similarly, the \textsf{Server} decrements $n$ and $\post$s a new server with a delay sampled at rate $\mu$.
Thus, the only randomness at a state is the time at which the next event fires, determined by comparing the sampled delay with the residual time to the next pending event in the event queue, modeled as $\Delta t$ in \cref{fig:mm1-des-chains}.
At a server state, the server samples a fresh delay according to rate $\mu$.
With probability \(1-e^{-\mu\Delta t}\) the server's delay is shorter than the residual $\Delta t$, the next state remains a server state.
With the remaining \(e^{-\mu\Delta t}\) probability, the next state is a client state.
At the client, the transitions are symmetric with rate $\lambda$.
Thus, the arrival and service rates do not compete at the same configuration.

Thus, the queue length $n$ alone is not a supermartingale (in fact, $n$ increases deterministically at a \textsf{Client} state).
By our completeness results, certificates for the $\deslang$ M/M/1 can be obtained for $\lambda \leq \mu$.
But they cannot be pure functions of the queue length.
For example, when $\lambda < \mu$, we find 
\[
  V_C(n,t) = U_C(n,t) = n + A + \lambda t,
  \qquad
  V_S(n,t) = U_S(n,t) = n + D\,e^{-(\mu-\lambda)t},
\]
where \(D = \dfrac{\lambda}{\mu-\lambda}\) and \(A = 1 + \dfrac{\lambda^2}{\mu(\mu-\lambda)}\). 
The supermartingale holds with equality in the Client phase and with slack in the Server phase, and the
variant drops by a fixed amount on a race outcome of probability \(\Omega(e^{-r})\). 
When $\lambda=\mu$, the supermartingale can be expressed as the Snell envelope \cite{Snell52} of the process, but there is no simple closed form.
%
%
\hfill\qed
\end{example}

The reason why the classical model of M/M/1 queues is a continuous-time Markov chain (CTMC) is the memoryless property of the exponential distribution and the accompanying properties of racing exponentials.
In the CTMC model, when the queue is non-empty, the Server and the Client sample from their exponential distributions \emph{simultaneously}.
The lowest sample wins, and the winner procedure is executed with time moved forward.
By the property of racing exponentials, the value of this lowest sample is exponentially distributed with rate $\lambda + \mu$.
Then, by memorylessness, the loser's residual delay (the variable $\Delta t$ in \cref{fig:mm1-des-chains}) is probabilistically equivalent to a fresh exponential sample at the loser's rate.
This justifies the CTMC model of the M/M/1 system.


Given the importance of exponential distributions in simulations, we augment $\deslang$ with syntactic sugar that directly 
implements them.
This additional syntax (compiled into ghost state) makes many proofs simpler: for example, we recover the classical certificates from queueing theory.

\begin{figure}[t]
\begin{lstlisting}[mathescape=true,numbers=none,basicstyle=\scriptsize\ttfamily]
post_rate(p, $\lambda$, $\bfe$) $\defeq$ if $\rateSet = \emptyset$ then post(tick, $\Dirac$(0)); $\tid$ := $\tid$ + 1; $\rateSet$ := $\rateSet \cup \{(\lambda, \tid, p, \bfe)\}$

proc tick() =
  if $\rateSet \neq \emptyset$ then
    tp := peek();
    if tp = now() then post(tick, $\Dirac$(0))
    else
      te $\sim$ $\exponential$(R), (where R$=\sum_{(\lambda, \_, \_, \_)\in \Lambda} \lambda$);  ($\lambda$,id,q,v) $\sim$ pick($\rateSet$) each with prob $\lambda/R$;
      if now() + te < tp then
        $\rateSet$ := $\rateSet \setminus \{(\lambda,\text{id},q,v)\}$;  post(q, v, $\Dirac$(te));  post(tick, $\Dirac$(te))
      else post(tick, $\Dirac$(tp $-$ now()))
\end{lstlisting}
\caption{Compiling $\postrate$ into $\deslang$: a \ttt{post\_rate} command desugars (top) to maintain a global rate-source set $\rateSet$ and arm the administrative procedure $\tick$ (bottom), which runs the exponential race. 
}\label{fig:rateproc}
\end{figure}

We define a primitive $\postrate(p, \lambda, e_1, \ldots, e_k)$ that schedules a single execution of procedure $p$ on arguments $e_1, \ldots, e_k$ after a delay sampled from an exponential distribution with rate $\lambda$.
$\postrate$ captures the common idiom where a procedure is repeatedly fired at a rate given by an exponential distribution, for example, the arrival rate of clients and service rate of servers in \cref{fig:mm1}.

$\postrate$ is syntactic sugar.
\cref{fig:rateproc} rewrites any system using $\postrate$ into a plain $\deslang$ system that uses only $\post$ and the read-only primitives $\nowcmd$ and $\peekcmd$.
The compilation keeps the currently-active sources in an ordinary global variable $\rateSet$.
This variable contains a finite set of quadruples $(\lambda, \tid, p, \bfv)$ of a rate, a fresh identifier, a procedure, and its argument values.
Together with the procedure $\tick$, $\postrate$ simulates the exponential rates.
Observe that $\rateSet$ must hold several copies of the same source, just as the event queue $Q$ can hold several copies of the same event.
When $\tick$ fires (\cref{fig:rateproc}) it samples an aggregate delay $t_e \sim \exponential(R)$, where $R = \sum_{(\lambda,\_,\_,\_) \in \rateSet} \lambda$, together with a source, drawn with probability $\lambda/R$.
It fires that source if $t_e$ beats the next queued event, and otherwise re-arms, discarding the sample. 

At first glance $\postrate(p, \lambda, \bfe)$ resembles $\post(p, \bfe, \exponential(\lambda))$: they induce the same distribution over firing times.
The difference between $\postrate$ and $\post$ is apparent when several procedures are posted with exponential delays simultaneously, 
such as in the M/M/1 queue of \cref{fig:mm1-des-chains}.
A $\post$ is resolved with a sampling of the exponential to produce a delay that determines its point of entry into the event queue.
A $\postrate$ instead does not sample its delay immediately.
Instead, it stores the procedure in the program state, and inserts a special procedure $\tick$ (shown in \cref{fig:rateproc}) into the queue with $\Dirac(0)$ delay.
$\tick$ then assembles all exponential rates and runs an exponential race; if the winning delay beats the next pending event in the queue, it inserts only that winner into the event queue.
$\tick$ then posts itself to execute right after the next real procedure.
The remaining $\postrate$ procedures resample their delays in the exponential race the next time $\tick$ is executed.
In this way, $\postrate$ simulates precisely the CTMC modeling of exponential clocks.
For the M/M/1 queue, this means the residual delays in \cref{fig:mm1-des-chains} are no longer part of the configuration
and the queue size satisfies the drift criterion.

This compilation allows us to apply \cref{rule:des-ast} and \cref{rule:des-time} directly to the compiled program.
For any configuration $\frakc$ with the operational semantics giving $\frakc \Rightarrow \mu$, define the kernel
\[
  \tilde\mu_\frakc \;\triangleq\; \mu_\frakc \,\bind\, \tick,
\]
This kernel automatically sequences an execution of $\tick$ immediately after each one-step execution from $\frakc$.
(Essentially, this is the \textsc{concat} rule of \cref{fig:semantics-des}.)
A procedure's body may both post events directly (with $\post$) and arm rate sources (with $\postrate$); 
the next procedure is then whichever wins the race $\tick$ runs between the imminent committed $Q$-event and the active rate clocks.
Thus, using $\tilde\mu_\frakc$ instead of $\mu_\frakc$ in the conditions in \cref{rule:des-ast} and \cref{rule:des-time} gives sound and complete proof rules for reachability and expected reach time for programs using $\postrate$.

\begin{system}[t]
\begin{footnotesize}
\begin{minipage}{0.45\textwidth}
\begin{lstlisting}[mathescape=true]
proc Client() =
  req := ... ; // generate request
  $\textcolor{blue}{\mathtt{post}}$(SendReq, req, $\textcolor{brown}{\Dirac}(0)$);
  // arrivals at rate $\lambda$
  $\textcolor{red}{\mathtt{post\_rate}}$(Client, $\lambda$);

proc SendReq(req) =
  queue := enqueue(queue, req);
  $\textcolor{blue}{\mathtt{post}}$(Server, $\textcolor{brown}{\Dirac}(0)$);

proc $p_0$():
  queue, idle := [], true;
  $\textcolor{red}{\mathtt{post\_rate}}$(Client, $\lambda$);
\end{lstlisting}
\end{minipage}
\begin{minipage}{0.45\textwidth}
\begin{lstlisting}[mathescape=true,firstnumber=14]
proc Server() =
  if idle and not queue_empty then
    idle := false;
    req := dequeue(queue);
    $\textcolor{red}{\mathtt{post\_rate}}$(FinishReq, $\mu$, req);

proc FinishReq(req) =
  ... ; // service completion
  if not queue_empty then
    req := dequeue(queue);
    $\textcolor{red}{\mathtt{post\_rate}}$(FinishReq, $\mu$, req);
  else
    idle := true;
\end{lstlisting}
\end{minipage}
\end{footnotesize}
\caption{\codeRef{fig:mm1} with its exponential posts replaced by $\postrate$.\label{fig:mm1-rate}}
\end{system}

As an illustration, \codeRef{fig:mm1-rate} recasts the example of \codeRef{fig:mm1} using $\postrate$.
The model is otherwise unchanged.
Let us understand why \(\spmtg=\vnt=n\) discharges the rule in the modified model by reading the proof rule's conditions with \(\tilde\mu_\frakc\).
The only exponential sources in the rate-based model are the Client with rate \(\lambda\), and the server with rate \(\mu\).
\(\tick\) races them both, so from any busy configuration (\(n\ge 1\)) the next real procedure is the \textsf{Client} arrival with probability \(\lambda/(\lambda+\mu)\), sending \(n\mapsto n+1\), or the \textsf{Server} completion with probability \(\mu/(\lambda+\mu)\), sending \(n\mapsto n-1\).
This is exactly the process of the left in \cref{fig:mm1-des-chains}.
Thus, when $\lambda \leq \mu$, the drift of \(\spmtg=n\) is
\[ \int \spmtg_\triangledown\,d\tilde\mu_\frakc - \spmtg = \tfrac{\lambda}{\lambda+\mu}(+1) + \tfrac{\mu}{\lambda+\mu}(-1) = \tfrac{\lambda-\mu}{\lambda+\mu} \le 0 . \]


\section{Implementation and Case Studies}
\label{sec:case-study-queues}

\subsection{Mechanization in Lean}
\label{sec:impl}

We have implemented reasoning about $\deslang$ in our tool $\toolname$. $\toolname$ encodes a mechanization
of the $\deslang$ calculus, its rate-aware dequeue semantics, and the proof rules of \cref{rule:des-ast,rule:des-time} 
as a shallow embedding in the Lean~4 proof assistant, on top of \textsc{Mathlib}'s measure theory.
Every $\deslang$ system and certificate in this section is described explicitly in the embedding.
The user supplies a program in $\deslang$ and the certificates as input 
and  the embedding emits the proof obligations for our proof rules 
as \texttt{sorry}-ed goals, which we then discharge separately.

$\toolname$ implements a system-agnostic proof tactic, \flauto{}, that closes the emitted goals.
\flauto{} implements several search heuristics and can call SMT solvers. However, it is not always automatic and users may need to provide
system-specific lemmas or lemmas about probability distributions.
In our experience, a small number of lemmas are needed, on top of generic lemmas that we provide as part of the tool (such as properties of the
exponential distribution).
%
%
%
The completed development leaves no \texttt{sorry} and depends only on Lean's standard classical axioms.
All the following case studies have been formalized in $\toolname$.

Our choice of embedding in Lean was driven by three considerations: proofs involving measure theory can be subtle and we wanted the guarantees of the Lean checker;
proof obligations require mathematical reasoning about distributions that are not easily automatable using SMT; and 
the relatively mature search tactics (e.g., aesop) already available in Lean that discharges most obligations automatically.

\subsection{Benchmarks: Variations on Client-Server Systems}


\begin{table}[t]
\centering
{\small
\setlength{\tabcolsep}{5pt}
\renewcommand{\arraystretch}{1.35}
\begin{tabular}{>{\raggedright\arraybackslash}p{2.3cm}|c|c}
\textbf{Variation} & \textbf{Drift: \(V\), \(\Delta V\)} & \textbf{Variant: \(U\); \(d,\epsilon\)} \\
\hline
Basic
  & \parbox[c]{4.7cm}{\centering\(V = n\)\\[0.5mm]\(\Delta V = \lambda - \mu\)}
  & \parbox[c]{4.7cm}{\centering\(U = n\)\\[0.5mm]\(d = 1,\quad \epsilon = \tfrac{\mu}{\lambda+\mu}\)} \\
\hline
Feedback
  & \parbox[c]{4.7cm}{\centering\(V = n\)\\[0.5mm]\(\Delta V = \lambda - (1-p)\mu\)}
  & \parbox[c]{4.7cm}{\centering\(U = n\)\\[0.5mm]\(d = 1,\quad \epsilon = \tfrac{(1-p)\mu}{\lambda+\mu}\)} \\
\hline
Feedback with priority (\(k=3\))
  & \parbox[c]{4.7cm}{\centering\(V = n_1 + n_2 + n_3\)\\[0.5mm]\(\Delta V = \lambda - (1-p)\mu\)}
  & \parbox[c]{4.7cm}{\centering\(U = n_1 + 2n_2 + 3n_3\)\\[0.5mm]\(d = 1,\quad \epsilon = \tfrac{(1-p)\mu}{\lambda+\mu}\)} \\
\hline
Two servers in series
  & \parbox[c]{4.7cm}{\centering\(V = n_1^2 + n_2^2\)\\[0.5mm]\(\Delta V = 2n_1(\lambda{-}\mu_1) + 2n_2(\mu_1{-}\mu_2)\)\\[0.3mm]\({}+\,(\lambda{+}2\mu_1{+}\mu_2)\)}
  & \parbox[c]{4.7cm}{\centering\(U = 2n_1^2 + n_2^2\)\\[0.5mm]\(d = 1,\quad \epsilon = \tfrac{\min(\mu_1,\mu_2)}{\lambda+\mu_1+\mu_2}\)} \\
\hline
\(k=3\) servers in series
  & \parbox[c]{4.7cm}{\centering\(V = n_1^2 + n_2^2 + n_3^2\)\\[0.5mm]\(\Delta V = 2n_1(\lambda{-}\mu_1) + 2n_2(\mu_1{-}\mu_2)\)\\[0.3mm]\({}+\,2n_3(\mu_2{-}\mu_3) + (\lambda{+}2\mu_1{+}2\mu_2{+}\mu_3)\)}
  & \parbox[c]{4.7cm}{\centering\(U = 3n_1^2 + 2n_2^2 + n_3^2\)\\[0.5mm]\(d = 1,\quad \epsilon = \tfrac{\min_i \mu_i}{\lambda+\mu_1+\mu_2+\mu_3}\)} \\
\hline
Client with timeout
  & \parbox[c]{4.7cm}{\centering\rev{\(V = n + P + \tfrac{\kappa}{4}\bigl(Z - (1-\chi)\bigr)\)}\\[0.5mm]\(\Delta V \le 0\)\ {\footnotesize on \(\lambda(1{+}R)<\mu\)}}
  & \parbox[c]{4.7cm}{\centering\rev{\(U = n + (1+\theta^\star) P + \tfrac{\kappa}{4} Z\)}\\[0.5mm]\rev{\(d = \min(\tfrac{\kappa}{4},\, 1-\theta^\star)\)},\quad \(\epsilon = \tfrac{\mu}{\lambda+\mu}\)} \\
\end{tabular}
}
\caption{Supermartingale and variant functions for the client-server variations: the reachability supermartingale \(V\) with its drift \(\Delta V\), and the variant \(U\), which drops by \(d\) with probability \(\epsilon\).}
\label{tab:client-server-drift-variant}
\end{table}

Our first case study revisits the simple client-server example of \cref{sec:motivating-examples} through a family of variations, summarized in \Cref{tab:client-server-drift-variant}.
In each, the goal is the same: verify that every queue in the system eventually empties, from any initial configuration. For each variation the table lists a supermartingale \(V\) with its one-step drift \(\Delta V\) and a variant \(U\) that drops by at least \(d\) with probability at least \(\epsilon\), in the sense of \cref{rule:des-ast}.
Here, 
\(n\) ranges over the number of jobs in the queue. 
These global witnesses decompose into the local ones the rule asks for (\cref{subsec:rule-ast-markov}); their per-procedure forms, \rev{with the offsets that make each witness vanish exactly on its target set and the assistant function \(H\)}, the system code, and the machine-checked proofs are all provided in $\toolname$.

The \emph{Basic} example is the M/M/1 client-server system: requests arrive at rate \(\lambda\) and are served at rate \(\mu\).
Its certificate is \(V = U = n\), \rev{which} drains at rate \(\mu - \lambda\).
Its \emph{Feedback} variation re-enqueues a completed request with probability \(p\), capturing closed-loop client-server interactions as in web traffic and other service systems \cite{MorOpenVsClosed}. The $V = U = n$ certifies it, but the variant reduces with a slightly lower probability.

The \emph{Feedback with priority} variation maintains \(k\) priority queues\rev{: a new request enters the highest priority, and a finished request is re-enqueued at one priority lower with probability \(p\), completing after at most \(k\) levels}.
\rev{This models prioritizing new requests over long-running sessions, as when establishing new connections takes precedence over long-context LLM conversations.}
Certificates for reachability sum the queue sizes across the $k$ priority queues. The variant weights level \(i\) by its index, \(U = \sum_i i\,n_i\), so that a completing request lowers \(U\) by at least \(d = 1\), ordering the descent by priority.

The \emph{Two servers in series} and \emph{\(k\) servers in series} variations form a pipeline with service rates \(\mu_1, \dots, \mu_k\): a request enters at stage~1 and, on a stage-\(i\) completion, moves to stage \(i+1\), departing when the last server finishes.
The state records a per-stage occupancy \(n_i\) for each server.
The linear sum is no longer sufficient, so the supermartingale takes on the \emph{quadratic} form \(V = \sum_i n_i^2\). 
\rev{Its drift is negative outside a finite set, which the exempt set \(T\) absorbs; the diagonal form needs \(\lambda < \mu_1 < \dots < \mu_k\), and cross terms cover the whole stable region \(\lambda < \mu_i\).}
The variant \(U = \sum_i \beta_i n_i^2\) weights the stages upstream-first, \(\beta_i = k-i+1\), so advancing a job to a later stage strictly lowers \(U\); its descent probability is set by the slowest stage, \(\epsilon = \min_i \mu_i / (\lambda + \sum_i \mu_i)\).

Finally, \emph{Client with timeout \(\tau\) and \(R\) retries} models a client that sets a timeout of \(\tau\) units on each request.
If the request is not completed within \(\tau\), the client retries, up to \(R\) times.
A retry re-sends the request, and the server reprocesses every copy (there is no local caching), so the job queue may hold duplicate copies of a request.
The client stops retrying once it receives a response, tracked by a per-request flag; copies already enqueued are still served.
\rev{In this row $n$ counts the copies not yet served, $P$ the remaining retry budget, and $Z$ the pending deadlines, which both witnesses carry with the same weight.}
Its certificate augments the workload with a \emph{phase term} that tracks the imminent timeout, paired with a retry-budget variant.
Here \(\mtt{ret}\) is a request's used retries, and a request is \emph{live} while unanswered with budget remaining, so \(P = \sum_{\text{live}}(R-\mtt{ret})\) is the total remaining retry budget; \(\chi = e^{-R_\Lambda(t'-t)}\), with total rate \(R_\Lambda = \lambda+\mu\) and \(t, t'\) the current and next event times, measures how imminent the next timeout is; and the constants \(\kappa = \tfrac{\mu-\lambda(1+R)}{\lambda+\mu}\) and \(\theta^\star = \tfrac{\kappa(\lambda+\mu)}{\lambda+\mu+\lambda R}\) are positive exactly on the tight stability region \(\lambda(1+R) < \mu\).
The full system, certificates, and proofs are given in $\toolname$; together the two witnesses certify reachability and a finite expected time on \(\lambda(1+R) < \mu\).

\subsection{Case Study: Almost Sure Synchronization of Routing Protocols}

\begin{system}[t]
\centering
\begin{footnotesize}
\begin{minipage}{0.47\textwidth}
\begin{lstlisting}[mathescape=true]
proc r1_beginRead() =
  r1_toRead = true;
  if (not r1_sending) then
    r1_toRead = false;
    r1_reading = true;
    // perform read computation
    post(r1_finishRead, $\Dirac$($T_c$));

proc r1_finishRead() =
  r1_reading = false;
  if (r1_toSend) then
    // SYNCHRONIZED
    $\textcolor{red}{\goal}$ $\textcolor{red}{=}$ $\textcolor{red}{1;}$
    r1_toSend = false;
    r1_sending = true;
    // perform send computation
    post(r1_finishSend, $\Dirac$($T_c$));
  if (r1_toSchedule) then
    r1_toSchedule = false;
    // Restart routing protocol
    post(r1_beginSend,
      $\uniform$($T_p - T_r, T_p + T_r$));

proc $p_0$() =
  post(r1_beginSend, $\uniform$($0, T_p$));
  post(r2_beginSend, $\uniform$($0, T_p$));
\end{lstlisting}
\end{minipage}
\begin{minipage}{0.47\textwidth} 
\begin{lstlisting}[mathescape=true,firstnumber=27]
proc r1_beginSend() =
  r1_toSend = true;
  post($\textcolor{blue}{\textbf{r2}}$_beginRead, $\Dirac$(0));
  if (not r1_reading) then
    r1_toSend = false;
    r1_sending = true;
    // perform send computation
    post(r1_finishSend, $\Dirac$($T_c$));

proc r1_finishSend() =
  r1_sending = false;
  if (r1_toRead) then
    // SYNCHRONIZED
    $\textcolor{red}{\goal}$ $\textcolor{red}{=}$ $\textcolor{red}{1;}$
    r1_toRead = false;
    r1_reading = true;
    // perform read computation
    post(r1_finishRead, $\Dirac$($T_c$));
    r1_toSchedule = true;
  else
    // UNSYNCHRONIZED
    // Restart routing protocol
    post(r1_beginSend,
      $\uniform$($T_p - T_r, T_p + T_r$));
\end{lstlisting}
\end{minipage}
\end{footnotesize}
\caption{
    \emph{The Periodic Messages model.}
    We omit explicit descriptions of these procedures for router \ttt{r2} for brevity; there are procedures $\mtt{r2\_*}$ for each procedure $\mtt{r1\_*}$ that manipulate \ttt{r2} variables accordingly.
    Notice the \textcolor{blue}{blue} post of \ttt{r2\_beginRead}; in \ttt{r2\_beginSend}, this becomes \ttt{r1\_beginRead}.
    The posts of \ttt{r*\_beginSend} set the routing timer for router \ttt{r*}.
    Note that all booleans are initially \ttt{false}.
    \label{fig:routing-protocol}
}
\end{system}
\label{sec:case-study-2}

We now demonstrate the value of our techniques using a practical example from the literature on networked systems.
In 1993, Floyd and Jacobson \cite{floyd1993synchronization} discovered a pattern of periodic packet drops occurring near core routers at the New England Academic and Research Network.
These drops were thought to have been caused by routing updates, which are protocols used by routers to update information on network topology, such as RIP \cite{RIP} and IGRP \cite{IGRP}.
These protocols run at periodic intervals.
Routers executing these protocols are assumed to be independent; meaning that they all begin executing the protocol at different times.
Because these protocols are run periodically, network architects at the time assumed that the routers would remain unsynchronized.
However, Floyd and Jacobson found that routers executing any of a family of these protocols would eventually become \emph{synchronized}, meaning that they would all execute the protocol at the same time.
This synchronization causes delays through periodic packet drops, as routers across the network update their routing tables at the same time.

We will demonstrate this synchronization using our certificates.
We will model the general Periodic Messages model by Floyd and Jacobson that several routing protocols conform to \cite{RIP,IGRP} as a $\deslang$ system.
We note that Floyd and Jacobson augmented their findings with a discrete-time Markovian model to explore further the limits of this synchronization.
Their model makes strong assumptions on the time distances between different routing clusters. 
Our modeling of the Periodic Messages model makes none of these assumptions. 

We first denote the following constants:
\begin{inparaenum}
    \item $T_c$ is the computation time needed for a router to send or receive a message, 
    \item $T_p$ is the period of the routing protocol, and
    \item $T_r$ is a random jitter that is added to the period. 
\end{inparaenum}
It is assumed that $T_p \gg T_c$, and, for the synchronization certificate below, that $T_r < T_c/2$ (the
small-jitter regime; the large-jitter case is discussed as future work).

We describe a 2 router system in \smallCodeRef{fig:routing-protocol}.
In it, routers \ttt{r1} and \ttt{r2} are initialized at a uniformly picked time between $0$ and the period $T_p$.
Routers executing the protocol either send or receive messages. 
Routers require $T_c$ time units to process incoming and outgoing messages.
They begin the protocol by deciding to send messages pertaining to updates to the network, and end the protocol by scheduling the next send by (approximately) setting a routing timer to $T_p$.
In the middle of the protocol, they complete the send and process pending incoming updates.

We note a few features of this model.
\begin{inparaenum}
    \item Routers cannot send and receive messages simultaneously.
    Thus, when a router wishes to send a message when it is busy processing an incoming message, it must wait until the incoming message has been processed. This is because the information in the incoming messages may affect the content of the message the router wishes to send.
    \item When a router decides to send a message, the other router is immediately notified of an incoming message.
    This is to model negligible transmission delays between the routers.
    \item A uniform jitter of $T_r$ is introduced when setting the routing timer.
    A small $T_r$ models differences in the computation time for message processing, while a large $T_r$ models deliberate jitter included in the protocol specification.
\end{inparaenum}

We represent this model using procedures that define the actions of the routers at the beginning and ends of sends and receives.
At the beginnings, routers successfully schedule the ends when the routers are not already busy.
Notice that when beginning a send, the other router is immediately notified to begin a read.
The recipient router will choose to begin processing the incoming message if it isn't busy sending.
At the ends, the routers check and perform pending actions.
Critically, when ending sends, when the router attends to existing pending reads first before setting the routing timer to schedule the next execution of the protocol.



The two routers have \emph{synchronized} when they both begin executing the protocol within a $T_c$ unit time window.
When unsynchronized, a router begins by sending a routing message, which takes $T_c$ time.
This send is simultaneously processed by the receiving router.
The sending router then resets its timer and schedules its next execution.
However, when routers are synchronized, 
the receiving router will begin a send before it finishes processing its incoming message.
This forces the sending router to process this message before scheduling its routing timer.
Thus, because of the proximity of the timer expirations, both routers must additionally process incoming messages sent by the other before resetting the timer, adding an additional $T_c$ processing time for each.
Scaling this behaviour up from 2 to $N$ routers means that all synchronized routers will spend time $N \times T_c$ only processing routing messages.
This can cause delays, disrupting quality of service.

\subsubsection{Certifying Synchronization}
\label{subsubsec:floyd-jacobson-certs}

\begin{figure}
\centering
\scalebox{0.95}{

\begin{tikzpicture}[x=0.75pt,y=0.75pt,yscale=-1,xscale=1]

\draw [line width=1.5]    (94.03,120.13) -- (518.3,120.07) ;
\draw [shift={(521.3,120.07)}, rotate = 179.99] [color={rgb, 255:red, 0; green, 0; blue, 0 }  ][line width=1.5]    (14.21,-4.28) .. controls (9.04,-1.82) and (4.3,-0.39) .. (0,0) .. controls (4.3,0.39) and (9.04,1.82) .. (14.21,4.28)   ;
\draw    (119.9,110.68) -- (119.9,126.68) ;
\draw    (229.9,110.68) -- (229.9,126.68) ;
\draw    (359.9,110.68) -- (359.9,126.68) ;
\draw    (469.9,110.68) -- (469.9,126.68) ;
\draw  [dash pattern={on 0.84pt off 2.51pt}]  (120,100.68) -- (119.9,110.68) ;
\draw    (120,100.68) -- (227.28,100.93) ;
\draw [shift={(229.28,100.93)}, rotate = 180.13] [color={rgb, 255:red, 0; green, 0; blue, 0 }  ][line width=0.75]    (10.93,-3.29) .. controls (6.95,-1.4) and (3.31,-0.3) .. (0,0) .. controls (3.31,0.3) and (6.95,1.4) .. (10.93,3.29)   ;
\draw  [dash pattern={on 0.84pt off 2.51pt}]  (229.78,90.17) -- (229.9,110.68) ;
\draw    (229.78,90.17) -- (357.95,89.95) ;
\draw [shift={(359.95,89.95)}, rotate = 179.9] [color={rgb, 255:red, 0; green, 0; blue, 0 }  ][line width=0.75]    (10.93,-3.29) .. controls (6.95,-1.4) and (3.31,-0.3) .. (0,0) .. controls (3.31,0.3) and (6.95,1.4) .. (10.93,3.29)   ;
\draw    (189.9,110.68) -- (189.9,126.68) ;
\draw    (190.07,178.68) -- (357.97,178.68) ;
\draw [shift={(359.97,178.68)}, rotate = 180] [color={rgb, 255:red, 0; green, 0; blue, 0 }  ][line width=0.75]    (10.93,-3.29) .. controls (6.95,-1.4) and (3.31,-0.3) .. (0,0) .. controls (3.31,0.3) and (6.95,1.4) .. (10.93,3.29)   ;
\draw  [dash pattern={on 0.84pt off 2.51pt}]  (189.9,126.68) -- (190.07,178.68) ;
\draw    (119.85,149.98) -- (187.85,149.98) ;
\draw [shift={(189.85,149.98)}, rotate = 180] [color={rgb, 255:red, 0; green, 0; blue, 0 }  ][line width=0.75]    (10.93,-3.29) .. controls (6.95,-1.4) and (3.31,-0.3) .. (0,0) .. controls (3.31,0.3) and (6.95,1.4) .. (10.93,3.29)   ;
\draw  [dash pattern={on 0.84pt off 2.51pt}]  (119.9,126.68) -- (119.85,149.98) ;
\draw    (299.9,110.68) -- (299.9,126.68) ;
\draw  [dash pattern={on 0.84pt off 2.51pt}]  (229.9,126.68) -- (229.85,149.98) ;
\draw  [dash pattern={on 0.84pt off 2.51pt}]  (299.9,126.68) -- (300.05,170.08) ;
\draw    (229.85,149.98) -- (297.85,149.98) ;
\draw [shift={(299.85,149.98)}, rotate = 180] [color={rgb, 255:red, 0; green, 0; blue, 0 }  ][line width=0.75]    (10.93,-3.29) .. controls (6.95,-1.4) and (3.31,-0.3) .. (0,0) .. controls (3.31,0.3) and (6.95,1.4) .. (10.93,3.29)   ;
\draw    (300.05,170.08) -- (467.95,170.08) ;
\draw [shift={(469.95,170.08)}, rotate = 180] [color={rgb, 255:red, 0; green, 0; blue, 0 }  ][line width=0.75]    (10.93,-3.29) .. controls (6.95,-1.4) and (3.31,-0.3) .. (0,0) .. controls (3.31,0.3) and (6.95,1.4) .. (10.93,3.29)   ;
\draw  [dash pattern={on 0.84pt off 2.51pt}]  (359.78,91.17) -- (359.94,116.26) ;
\draw    (359.86,100.32) -- (467.14,100.56) ;
\draw [shift={(469.14,100.57)}, rotate = 180.13] [color={rgb, 255:red, 0; green, 0; blue, 0 }  ][line width=0.75]    (10.93,-3.29) .. controls (6.95,-1.4) and (3.31,-0.3) .. (0,0) .. controls (3.31,0.3) and (6.95,1.4) .. (10.93,3.29)   ;
\draw  [dash pattern={on 0.84pt off 2.51pt}]  (359.9,126.68) -- (359.97,178.68) ;

\draw (102.02,121.73) node [anchor=north west][inner sep=0.75pt]   [align=left] {$\displaystyle S_{1}$};
\draw (232.9,123.68) node [anchor=north west][inner sep=0.75pt]   [align=left] {\textcolor{purple}{$\displaystyle S_{2}$}};
\draw (362.9,123.68) node [anchor=north west][inner sep=0.75pt]   [align=left] {$\displaystyle S'_{1}$};
\draw (451.9,122.68) node [anchor=north west][inner sep=0.75pt]   [align=left] {$\displaystyle S'_{2}$};
\draw (167.9,83.68) node [anchor=north west][inner sep=0.75pt]   [align=left] {$\displaystyle t_{2}$};
\draw (286.9,72.68) node [anchor=north west][inner sep=0.75pt]   [align=left] {$\displaystyle t'_{1}$};
\draw (170.88,122.68) node [anchor=north west][inner sep=0.75pt]   [align=left] {$\displaystyle E_{1}$};
\draw (264.9,180.68) node [anchor=north west][inner sep=0.75pt]   [align=left] {$\displaystyle T_{p}$};
\draw (144.9,152.68) node [anchor=north west][inner sep=0.75pt]   [align=left] {$\displaystyle T_{c}$};
\draw (302.88,123.68) node [anchor=north west][inner sep=0.75pt]   [align=left] {$\displaystyle E_{2}$};
\draw (255.9,151.68) node [anchor=north west][inner sep=0.75pt]   [align=left] {$\displaystyle T_{c}$};
\draw (366.9,151.68) node [anchor=north west][inner sep=0.75pt]   [align=left] {$\displaystyle \sim T_{p}$};
\draw (391.9,83.68) node [anchor=north west][inner sep=0.75pt]   [align=left] {$\displaystyle \sim t'_{2}$};

\end{tikzpicture}

}
\caption{
    \emph{The timeline of unsynchronized execution.}
    Events $(S_1, E_1)$ and $(S_2, E_2)$ mark the points when \ttt{R1} and \ttt{R2} began and ended their executions.
    The current time is marked at \textcolor{purple}{$S_2$}.
    $S'_1$ marks the point when \ttt{R1}'s timer would expire.
    $S'_2$ marks the point when \ttt{R2}'s timer is \emph{expected to expire}.
    $t_2$ and $t'_1$ are the remaining time to reset the timers for routers \ttt{R2} and \ttt{R1}.
    $t'_2$ is the expected delay from the expiry of \ttt{R1}'s timer to \ttt{R2}'s timer.
}
\label{fig:timeline-floyd}
\end{figure}

We will demonstrate certificates proving that these routers synchronize almost-surely.
Accordingly, we set the distinguished variable $\goal$ to $1$ when a router must receive a message before processing a send or vice versa.
Our certificates will demonstrate the almost-sure reachability of this $\Goal$ set.
Because of our completeness results, our proof rule promises similar certificates to reason about the more complex case of $N$ routers.


It is intuitive to expect witnesses of synchronization to capture a notion of a ``time distance'' between the routers.
This time distance would measure the time gap between the beginnings of their protocol executions.
The protocol is designed to keep this distance constant in expectation.
However, the jitter $T_r$ when setting the routing timer causes it reduce with some probability at each restart.
Thus, this time distance is a suitable witness candidate for certifying reachability.

Let us examine this notion of distance in more detail.
Notice that, outside of synchronization, the protocol follows a fixed cyclic shape.
When router \ttt{r1} begins its protocol while router \ttt{r2} is idling, this distance corresponds to the remaining time $t_2$ in \ttt{r2}'s routing timer.
If $t_2 \mathrel{\rev{<}} T_c$, the routers are synchronized.
Assume $t_2 > T_c$, and trace the distance's evolution during execution.
After $T_c$ time units, \ttt{r1} completes its send and resets its routing timer to $T_p$ in expectation.
The execution then jumps forward to the expiration of \ttt{r2}'s timer.
Because $t_2 > T_c$ and, by protocol design, $t_2 \leq T_p$, \ttt{r1} is idle at this stage.
A similar derivation of the distance measure produces the remaining time in \ttt{r1}'s routing timer $t'_1 = T_p - (t_2 - T_c)$.
This variance is captured by \cref{fig:timeline-floyd}.

To address this variance, we always measure the time difference to the beginning of \ttt{r2}'s \emph{next} protocol execution and the beginning of \ttt{r1}'s protocol execution.
That is, we compare $t_2$, the initial remaining time on \texttt{r2}’s timer, with $t'_2$, the remaining time when \texttt{r1} begins its second execution.
As demonstrated in \cref{fig:timeline-floyd}, we estimate $t'_2$ as $T_p - (t'_1 - T_c) = t_2$ to be the value it will take in expectation.

To measure this time distance using the per-procedure certificates of \cref{rule:des-ast}, we must record the aforementioned cyclical behaviour of the protocol outside of synchronization.
We define an invariant to track this cycle 
and restrict the event queue to respect this cycle.
Outside of synchronization, one router leads (say \ttt{r1}), and the queue cycles deterministically through the entries
\ttt{r1\_beginSend}, \ttt{r2\_beginRead}, \ttt{r1\_finishSend}, \ttt{r2\_finishRead}, and then \ttt{r2\_beginSend} begins the mirror cycle with the roles swapped.
All other configurations are absorbed by $\Goal$ or by a deterministic path to synchronization once the time distance goes below $T_c$.

We use per-procedure supermartingales to measure this time distance. 
For a full state $\scrs$, let $t_2(\scrs)$ be the time $t_2$ at the event $(t_2, \_, \mtt{r2\_beginSend}) \in \scrs.Q$.
Similarly, $t_1(\scrs)$ is the time $t_1$ at the event $(t_1, \_, \mtt{r1\_beginSend}) \in \scrs.Q$.
For each full state $\scrs$ outside synchronization, define the functions
\begin{align*}
  \spmtg'_{\mtt{bs1}}(\scrs) &= t_2(\scrs) - \scrs.t 
  & \spmtg'_{\mtt{bs2}}(\scrs) &= T_p + T_c - (t_1(\scrs) - \scrs.t)\\ 
  \spmtg'_{\mtt{fs1}}(\scrs) &= t_2(\scrs) - \scrs.t + T_c 
  & \spmtg'_{\mtt{fs2}}(\scrs) &= T_p - (t_1(\scrs) - \scrs.t) 
\end{align*}
These functions represent the aforementioned distance measure.
\ttt{bs1}, \ttt{bs2}, \ttt{fs1}, and \ttt{fs2} correspond to the \ttt{beginSend} and \ttt{finishSend} procedures for routers \ttt{r1} and \ttt{r2} respectively.

Outside of the goal set, the \ttt{read} procedures are triggered by \ttt{send} procedures from other routers.
While they are performed at the same time as the sends (\ttt{r2\_beginRead} occurs immediately after \ttt{r1\_beginSend} at the same time stamp and \ttt{r2\_finishRead} occurs immediately after \ttt{r1\_finishSend}), the ``distances'' between the routers must be measured differently.
This is because the ``distance'' between the routers is measured by the distance between the \ttt{beginSend} events of both routers.
At \ttt{beginRead} events of router \ttt{r1}, the \ttt{beginSend} of router \ttt{r2} occurred at the same time.
Thus, set
\begin{align*}
  \spmtg'_{\mtt{br2}} &= \spmtg'_{\mtt{bs1}}
  & \spmtg'_{\mtt{br1}} &= \spmtg'_{\mtt{bs2}}
\end{align*}
However, at \ttt{finishRead} events of either router, outside of synchronization, two \ttt{beginSend} events must exist in the event queue.
The next \ttt{beginSend} to fire after a \ttt{r1\_finishRead} is \ttt{r1\_beginSend}, and after a \ttt{r2\_finishRead} it is \ttt{r2\_beginSend} (router $i$'s own timer expires before the peer's).
We therefore set each $\spmtg'_{\mtt{fr}i}$ to the value $\spmtg'_{\mtt{bs}i}$ takes when that \ttt{beginSend} fires --- that is, evaluated at $\scrs.t = t_i$, which is independent of $\scrs.t$:
\begin{align*}
  \spmtg'_{\mtt{fr1}}(\scrs) &= t_2(\scrs) - t_1(\scrs)
  & \spmtg'_{\mtt{fr2}}(\scrs) &= T_p + T_c - (t_1(\scrs) - t_2(\scrs))
\end{align*}
Here, \ttt{br1}, \ttt{br2}, \ttt{fr1}, and \ttt{fr2} correspond to the \ttt{beginRead} and \ttt{finishRead} procedures for routers \ttt{r1} and \ttt{r2} respectively.

For each procedure $p$, define
\begin{align*}
V_p(\scrs) = \begin{cases}
V'_p(\scrs) + 1, & V'_p(\scrs) \geq 0\\
1, & V'_p(\scrs) < 0 \land \scrs.\sigma.\goal \neq 1\\
0, & \scrs.\sigma.\goal = 1
\end{cases}
\end{align*}
Here $V'_p$ is the distance witness for the \emph{unsynchronized} region; under the assumption $T_r < T_c/2$ it
is non-negative on all reachable states, so the middle ($V'_p < 0$) case of $V_p$ is a defensive boundary value
that does not arise (the $+1$ keeps $V_p$ continuous at $V'_p = 0$).  Synchronization --- the $d \mathrel{\rev{<}} T_c$ and
collision states --- is instead handled by a bounded, deterministic dispatch sequence with $V_p = 1$ and a
matching variant.


Note that, because the $\spmtg_p$ supermartingales only decreases at the \ttt{finishSend} procedures, it is insufficient as a variant.
However, it can be transformed into a variant by scaling the supermartingale appropriately.
This defines the variants (for states with non-zero supermartingale values) as a descent-ladder offset
plus a
\emph{uniform} additive constant $T_r$:
\begin{align*}
  \vnt_{\mtt{bs1}} &= \spmtg_{\mtt{bs1}} + T_r/4 + T_r
  & \vnt_{\mtt{br2}} &= \spmtg_{\mtt{br2}} + T_r/8 + T_r
  & \vnt_{\mtt{fs1}} &= \spmtg_{\mtt{fs1}} + T_r \\
  \vnt_{\mtt{fr2}} &= \spmtg_{\mtt{fr2}} + 3T_r/8 + T_r
  & \vnt_{\mtt{bs2}} &= \spmtg_{\mtt{bs2}} + T_r/4 + T_r
  & \vnt_{\mtt{br1}} &= \spmtg_{\mtt{br1}} + T_r/8 + T_r\\
  \vnt_{\mtt{fs2}} &= \spmtg_{\mtt{fs2}} + T_r
  & \vnt_{\mtt{fr1}} &= \spmtg_{\mtt{fr1}} + 3T_r/8 + T_r
\end{align*}
It can be shown that the variants $\vnt$ decrease by $T_r / 8$ at each step with probability at least $1/4$.
This is because, at each \ttt{finishSend}, the routing timer is reset from $\uniform(T_p - T_r, T_p + T_r)$, so the now-independent supermartingales $\spmtg_{\mtt{fr1}}$ and $\spmtg_{\mtt{fr2}}$ reduce by $T_r / 2$ with that much probability.
In $\toolname$, we have mechanized this example with these certificates with a precise invariant, proving that synchronization is almost sure, validating the simulations of \cite{floyd1993synchronization}.

\section{Related Work}


\noindent
\textbf{Probabilistic program reasoning.}
A large body of work has studied logical and algorithmic reasoning principles for probabilistic programs \cite{KaminskiThesis,McIverMorganBook,ZilbersteinKST25,BaoDF25}.
Classical approaches generalize ranking functions to the probabilistic setting via
\emph{ranking supermartingales} \cite{mciver2005abstraction,ChakarovSankaranarayanan,FuC19,Chatterjee0020}. 
Subsequent work developed \emph{stochastic invariants}~\cite{chatterjee2017stochastic} and supermartingale-based
proof rules for almost-sure and quantitative termination~\cite{HuangFC18,ChatterjeeGMZ22}.
More recently, sound and relatively complete Hoare-style proof systems have been established for both qualitative and quantitative
termination of probabilistic programs~\cite{MajumdarS25,McIverMKK18}.
These systems, while foundational, address untimed probabilistic programs; they do not model
continuous time, event scheduling, or explicitly timed performance properties.

\smallskip
\noindent
\textbf{DES and performance modeling.}
DES is a well-established paradigm for modeling performance in computer
systems and networks. Practical frameworks such as OMNeT++~\cite{varga2008overview}, ns-3~\cite{riley2010ns},
and SimPy~\cite{matloff2008introduction} provide rich simulation constructs and efficient runtime engines, but
reasoning about correctness or performance properties is typically empirical, based on repeated
simulation and statistical analysis.
Complementary to these, formal modelling frameworks such as PEPA~\cite{PEPA} and Modest~\cite{ModestExtended,ModestPaper} provide compositional, semantically grounded approaches to describing stochastic and timed behaviour.
While these formalisms support analytical reasoning and quantitative verification, they target abstract finite-state models rather than general-purpose, event-driven simulation programs.
In contrast, classical queueing theory provides analytic reasoning techniques for
idealized models such as M/M/1 and M/M/$k$ queues~\cite{Kleinrock1975}.
However, these models assume fixed stochastic structures and do not support general compositional reasoning
over event-driven programs.

\smallskip
\noindent
\textbf{Formal methods for stochastic systems.}
Formal verification frameworks for stochastic and timed systems---including probabilistic model checking
and temporal logics such as PCTL and CSL---enable algorithmic analysis of Markov chains and
Markov decision processes~\cite{BaierKatoenBook}.
These approaches target finite-state transition systems and differ fundamentally from our
deductive proof system, which reasons about the operational semantics of a general
event-driven language.
To our knowledge, no prior work provides a sound and complete proof system
for performance and termination properties of discrete-event models.

\smallskip
\noindent
\textbf{Certificates for stochastic systems.}
Our approach draws on established proof rules for termination of probabilistic programs, in particular the use of ranking supermartingales and variant functions~\cite{McIverMorganBook,McIverMKK18,MajumdarS25}, together with classical drift conditions from Markov chain theory~\cite{Foster53,meyn2012markov}. These constructs provide quantitative measures of progress toward termination or target satisfaction. For linear stochastic systems, appropriate drift and variant functions can often be synthesized by solving linear matrix inequalities, while for polynomial dynamics, polynomial or SOS optimization can be employed to find suitable certificates. However, such functions need not always be semi-algebraic even when the underlying dynamics are polynomial~\cite{kordabad2025sum,kordabad2025certificates}.

Our work also connects to the rich literature on barrier-based reasoning in control and formal verification. Barrier certificates~\cite{PJP07,ames2019control,survey_automated} offer locally checkable sufficient conditions for safety properties and have been generalized to stochastic and hybrid settings~\cite{jagtap2020formal,abate2024stochastic}. Converse results establishing necessity of such certificates are comparatively rare, which either are developed for non-probabilistic systems \cite{prajna2005necessity,wisniewski2015converse} or raise restrictive assumptions \cite{xue2024sufficient}. In contrast, our framework derives such almost-sure guarantees by integrating variant-based progress reasoning with the semantics of discrete-event programs.

In addition, we note that certification methods for continuous-time Markov chains (CTMCs) remain relatively underexplored. For example, the sampling-based verification approach for uncertain parametric CTMCs \cite{Badings22CTMC} highlights the challenge of uncertain rates in CTMCs but does not directly provide certificate functions in the sense of ranking supermartingales or variant functions. Complementary to this, neural continuous-time supermartingale certificates \cite{Neustroev25NeuralCTSM} provide a general framework for synthesizing supermartingale-style certificates in continuous-time stochastic dynamical affine systems, including reachability  guarantees. However, completeness in all these cases remained open. Our framework builds on these ideas by offering explicit drift/variant style certificates for discrete-event and continuous-time stochastic systems under almost-sure reachability semantics, bridging classical Markov-chain drift conditions with variant-based progress reasoning.


\section{Conclusion}

Our paper takes an initial step in deductive verification for performance in an expressive programming model capturing the core 
features of discrete-event
simulators, including infinite state, continuous time, probabilistic choice, and event-driven execution.
We provide sound \emph{and complete} proof rules for almost sure reachability and expected reaching time for this language.
Our tool $\toolname$ is able to semi-automatically discharge the proof obligations for many interesting examples.
We are also able to formally analyze a famous
stochastic routing protocol from the networking literature.

\begin{acks}
This research was sponsored in part by
the Deutsche Forschungsgemeinschaft project 389792660 TRR 248--CPEC
(see \url{https://perspicuous-computing.science}).
The research of S. Soudjani is supported by the following grants: EIC 101070802, ERC 101089047, and EIC 101306368.
\end{acks}

\sloppy 

\label{beforebibliography}
\newoutputstream{pages}
\openoutputfile{main.pages.ctr}{pages}
\addtostream{pages}{\getpagerefnumber{beforebibliography}}
\closeoutputstream{pages}


\bibliography{references}                 

\label{afterbibliography}
\newoutputstream{pagesbib}
\openoutputfile{main.pagesbib.ctr}{pagesbib}
\addtostream{pagesbib}{\getpagerefnumber{afterbibliography}}
\closeoutputstream{pagesbib}

\appendix
\section{Soundness and Completeness Proofs}
\label{sec:proofs}

\subsection{Soundness}

Briefly, our techniques are sound for the following reasons.
The supermartingale $V$ is required to be a supermartingale outside the exempt set $T$, on which it vanishes.
\begin{revblock}
This forces the trajectories to visit some sublevel set of $V$ infinitely often almost surely.
Criterion \textbf{V2} gives, at each such visit, a non-zero lower bound on the probability of reaching $\Goal$, with the bound depending only on the sublevel set.
A zero-one law then yields the almost-sure reachability of $\Goal$.
\end{revblock}

We will now detail the above argument in full.
\begin{revblock}
\revkeep{Let $x_k, k \in \nats$ denote the state at the $k^{th}$ step of an execution of} $(X, T_X)$ from a state $x \in \Inv$, \revkeep{and let $\calF_k$ be the} sigma-algebra on the space of trajectories generated by $x_0, \ldots, x_k$.
We first show that, until the first visit to $\Goal$, the trajectory remains in the invariant.

\begin{lemma}
\label{lem:pre-goal-confinement}
For every $x \in \Inv$, the event
\[
J := \set{x_k \in \Inv \text{ for every } k \in \nats \text{ with } k \leq \sigma_\Goal}
\]
satisfies $\bbP_x(J) = 1$.
In particular, on the event $\set{\sigma_\Goal = \infty}$, the trajectory lies in $\Inv$ at every step almost surely.
\end{lemma}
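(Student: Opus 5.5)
We show that the complement $J^c$ is a countable union of null events. If $J$ fails, there is a first index $k \leq \sigma_\Goal$ with $x_k \notin \Inv$. Since $x_0 = x \in \Inv$, this index satisfies $k \geq 1$. Minimality gives $x_j \in \Inv$ for all $j < k$, and $k \leq \sigma_\Goal$ gives $x_j \notin \Goal$ for $j < k$. So
\[
J^c \subseteq \bigcup_{k \geq 1} B_k, \qquad B_k := \set{x_0, \ldots, x_{k-1} \in \Inv \setminus \Goal,\ x_k \notin \Inv}.
\]
Each $B_k$ is measurable because $\Inv$ and $\Goal$ are measurable. It therefore suffices to show $\bbP_x(B_k) = 0$ for every $k \geq 1$.

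For fixed $k$, condition on $\calF_{k-1}$ and use the Markov property:
\[
\bbP_x(B_k) = \bbE_x\bigl[\indicator\set{x_0, \ldots, x_{k-1} \in \Inv\setminus\Goal}\, T_X(x_{k-1}, X \setminus \Inv)\bigr].
\]
The invariant only requires $T_X(y, \Inv) = 1$ for $y \in \Inv \setminus \Goal$. On the event in the indicator we have $x_{k-1} \in \Inv \setminus \Goal$, so the integrand vanishes identically and $\bbP_x(B_k) = 0$. Countable subadditivity then gives $\bbP_x(J^c) = 0$.

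For the second claim, on $\set{\sigma_\Goal = \infty}$ the condition $k \leq \sigma_\Goal$ holds for every $k \in \nats$. Hence $J \cap \set{\sigma_\Goal = \infty}$ is exactly the set of trajectories in $\set{\sigma_\Goal = \infty}$ that stay in $\Inv$ forever, and it has full measure within that event.

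The only delicate step is the conditioning. We need $y \mapsto T_X(y, X\setminus\Inv)$ to be measurable, which the kernel definition supplies. We also need the first-exit decomposition to use the weakened invariance hypothesis exactly where it applies: at states before the goal is hit. This is why the events $B_k$ require $x_{k-1} \notin \Goal$, not merely $x_{k-1} \in \Inv$.
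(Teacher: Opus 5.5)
Your proof is correct and is essentially the paper's argument: the paper shows by induction that $\bbP_x(J_{k+1}) = \bbP_x(J_k)$ for the finite-horizon events $J_k = \set{x_j \in \Inv \text{ for all } j \leq k \wedge \sigma_\Goal}$. Your first-exit events are exactly the differences $B_{k+1} = J_k \setminus J_{k+1}$, each made null by the same one-step use of $T_X(x_k, \Inv) = 1$ on $\Inv \setminus \Goal$, so using countable subadditivity instead of induction plus a countable intersection is only a change of presentation.
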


\begin{proof}
For $k \in \nats$, let $J_k := \set{x_j \in \Inv \text{ for every } j \leq k \wedge \sigma_\Goal}$.
It is easy to see that each $J_k$ is $\calF_k$-measurable.
Additionally, $J_0$ holds because $x_0 = x \in \Inv$.
On the event $J_k \cap \set{\sigma_\Goal > k}$, the state $x_k$ lies in $\Inv \setminus \Goal$, so the invariant condition of \cref{rule:des-ast-markov} gives $\bbP_x(x_{k+1} \in \Inv \mid \calF_k) = T_X(x_k, \Inv) = 1$.
Thus, $J_{k+1} = J_k \cap \set{x_{k+1} \in \Inv}$ on this event, and on $\set{\sigma_\Goal \leq k}$, the events $J_{k+1}$ and $J_k$ coincide.
Hence $\bbP_x(J_{k+1}) = \bbP_x(J_k)$, making every $J_k$ almost sure by induction.
The event $J$ is the countable intersection $\bigcap_{k \in \nats} J_k$ of almost-sure events, and is therefore almost sure.
This completes the proof.
\end{proof}
\end{revblock}

We use the following notation for the level sets of the certificates.
We let, for any function $V : \Inv \to \nnreals$, $V_{\mathop{\bowtie} r}$ denote the set $\{x \in \Inv \mid V(x) \mathop{\bowtie} r\}$ for each $\bowtie\, \in \{\leq, <, \geq, >, =\}$.
We also let $V_{\in K}$ for any set $K \subseteq \nnreals$ denote $\{x \in \Inv \mid V(x) \in K\}$.
\rev{These sets are measurable when $V$ is.}

Recall that under \cref{rule:des-ast-markov}, the global supermartingale $V$ is a supermartingale only \emph{outside} the exempt set $T$, and $V \equiv 0$ on $T$ (the vanishing in criterion \textbf{V1}).
Let $\sigma_T = \inf\{k \in \nats : x_k \in T\}$ be the first time the chain enters $T$\rev{; since $\Goal \subseteq T$, $\sigma_T \leq \sigma_\Goal$}.
\rev{It is a random variable because $T$ is measurable.}
\rev{We next show that, on the event $\set{\sigma_\Goal = \infty}$, the trajectories almost surely visit some sublevel set of $V$ infinitely often.}


\begin{revblock}
\begin{lemma}
\label{lem:sublevel-io}
For every $x \in \Inv$,
\[
\bbP_x\left(\sigma_\Goal = \infty \text{ and } \forall m \in \nats : x_k \in V_{\leq m} \text{ for finitely many } k \in \nats\right) = 0.
\]
\end{lemma}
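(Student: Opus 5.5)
The plan is to use the stopped supermartingale $V(x_{k\wedge\sigma_T})$ together with Doob's convergence theorem. The event in the statement says that $V(x_k)\to\infty$ along the trajectory while $\Goal$ is never hit. We split it according to whether $T$ is entered.

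\emph{Step 1 (stopped process).} On $\set{\sigma_\Goal=\infty}$, \cref{lem:pre-goal-confinement} gives $x_k\in\Inv$ for all $k$ almost surely. Define $Y_k := V(x_{k\wedge\sigma_T\wedge\sigma_\Goal})$, using the convention that $V(x_k)$ is read only while $x_k\in\Inv$; on $J$ this is the case up to $\sigma_\Goal$. On $\set{\sigma_T>k}$ we have $x_k\in\Inv\setminus T$, so \textbf{V1} gives $\bbE_x[Y_{k+1}\mid\calF_k]\le Y_k$. On $\set{\sigma_T\le k}$ the process is frozen. Here one must be careful that \textbf{V1} integrates only over $\Inv$. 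This is harmless because $T_X(x_k,\Inv)=1$ for $x_k\in\Inv\setminus\Goal$. Thus $(Y_k)$ is a non-negative supermartingale. By Doob's theorem it converges almost surely to a finite limit, so $\sup_k Y_k<\infty$ almost surely.

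\emph{Step 2 (case $\sigma_T=\infty$).} On $\set{\sigma_T=\infty}\cap\set{\sigma_\Goal=\infty}$ we have $Y_k=V(x_k)$ for all $k$. Hence $V(x_k)\le M:=\sup_k Y_k<\infty$, and every $x_k$ lies in $V_{\le\lceil M\rceil}$ infinitely often. This contradicts the event in the statement, so this part has probability zero.

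\emph{Step 3 (case $\sigma_T<\infty$, $\sigma_\Goal=\infty$).} This is the expected obstacle, because after entering $T$ the process is no longer controlled by \textbf{V1}. I would handle it by restarting. Define successive entrance times into $T$ and exit times from $T$. Whenever the chain is in $T$ it lies in $V_{\le 0}$. So if $T$ is visited infinitely often, $V_{\le 0}$ is visited infinitely often and we are done. Otherwise there is a last exit time $\rho<\infty$ after which $x_k\notin T$ for all $k>\rho$.

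We then apply Step 1 to the shifted chain started at $x_{\rho+1}$, which lies in $\Inv\setminus T$ on $J$. Since $\rho$ is not a stopping time, I would instead cover the event by the countable union over $n$ of the events $\set{x_k\notin T\ \forall k\ge n}$. For each $n$, use the Markov property at the deterministic time $n$: the shifted chain has $\sigma_T=\infty$ and $\sigma_\Goal=\infty$. Step 2 applied from $x_n\in\Inv$ then shows that the event of escaping all sublevel sets has conditional probability zero. Integrating over $x_n$ and taking the countable union finishes the argument.

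Measurability of all events involved follows from the measurability of $V$, $T$, $\Goal$ and $\Inv$.
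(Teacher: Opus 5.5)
Your proposal is correct and follows essentially the paper's route: your stopped process $V(x_{k\wedge\sigma_T})$ coincides with the paper's $M_k = V(x_k)\mathbf{1}[\sigma_T \geq k]$ because $V$ vanishes on $T$, and Doob's convergence theorem disposes of the case $\sigma_T = \infty$ from every start state in $\Inv$. The case of finitely many visits to $T$ is handled, as in the paper, by a countable decomposition over deterministic times combined with the Markov property and \cref{lem:pre-goal-confinement}; the only cosmetic differences are that the paper uses last-visit events at time $\ell$ (Markov property at $\ell+1$) and first reduces the escape event to $\sup_k V(x_k) = \infty$, while you work directly with the shift-invariant event of escaping every sublevel set.
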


\begin{proof}
Let us first note the possibilities when $\sigma_\Goal = \infty$.
In this case, there are three possibilities regarding the number of times the trajectory visits the exempt set $T$.
If there are infinitely many visits, then $V(x_k) = 0$ for infinitely many $k$, and $V_{\leq 0}$ is visited infinitely often.
If there are finitely many and $\sup_k V(x_k) < \infty$, then the trajectory remains in $V_{\leq m}$ for $m := \lceil \sup_k V(x_k) \rceil \in \nats$, and $V_{\leq m}$ is visited infinitely often.
It thus suffices to show that
\begin{equation}
\label{eq:sublevel-obligation}
\bbP_x\left(\sigma_\Goal = \infty,\ x_k \in T \text{ for finitely many } k \in \nats,\ \sup_k V(x_k) = \infty\right) = 0.
\end{equation}

To show this, we first demonstrate a supermartingale process.
Consider the process $M_k := V(x_k) \mathbf{1}[\sigma_T \geq k]$.
Each $M_k$ is a random variable because $V$ is measurable.
Let $J = \set{x_j \in \Inv \text{ for every } j \in \nats \text{ with } j \leq \sigma_\Goal}$ be the $\bbP_x$-probability 1 event of \cref{lem:pre-goal-confinement} that the trajectory remains inside the invariant $\Inv$ until $\Goal$.
On $\set{\sigma_T \geq k + 1} \cap J$, we have $\sigma_\Goal \geq \sigma_T \geq k + 1$, so the state $x_k$ lies in $\Inv \setminus T$.
The drift criterion \textbf{V1} then gives $\bbE_x[V(x_{k+1}) \mid \calF_k] = \int_\Inv T_X(x_k, dy)\, V(y) \leq V(x_k) = M_k$.
Moreover, $M_{k+1} \leq V(x_{k+1})$, and outside $\set{\sigma_T \geq k + 1}$, $M_{k+1} = 0$.
Hence $\bbE_x[M_{k+1} \mid \calF_k] \leq M_k$ almost surely, \revkeep{so $M_k$ is a supermartingale of finite non-negative random variables, and the Martingale Convergence Theorem (Thm D.6.2 in \cite{meyn2012markov}) yields an almost surely finite} random variable \revkeep{$M_\infty$ with $M_k \to M_\infty$} almost surely.

\revkeep{On the event $\set{\sigma_T = \infty}$ we have} $M_k = V(x_k)$ for every $k$, so $V(x_k) \to M_\infty < \infty$, and the convergent sequence $V(x_0), V(x_1), \ldots$ is bounded in all but a $\bbP_{x_0}$-probability $0$ set of trajectories.
Thus, \revkeep{$\sup_k V(x_k) < \infty$ almost surely on} $\set{\sigma_T = \infty}$.
Because the analysis so far holds for any start state $y \in \Inv$, we have
\begin{equation}
\label{eq:no-T-bounded}
\bbP_y\left(\sigma_T = \infty \text{ and } \sup_{k} V(x_k) = \infty\right) = 0, \qquad \forall y \in \Inv.
\end{equation}

We now show \eqref{eq:sublevel-obligation}.
For $\ell \in \nats$, let $L_\ell := \set{x_\ell \in T \text{ and } x_j \notin T \text{ for all } j > \ell}$ be the event that the last visit to $T$ is at time $\ell$.
The event in \eqref{eq:sublevel-obligation} is the union over $\ell \in \nats$ of the events $\set{\sigma_\Goal = \infty} \cap L_\ell \cap \set{\sup_k V(x_k) = \infty}$, \revkeep{so it suffices to show} that each of these countably many events occurs with zero probability.
Fix $\ell \in \nats$.
On $J \cap \set{\sigma_\Goal = \infty} \cap L_\ell$, the state $x_{\ell+1}$ lies in $\Inv$.
On $L_\ell$, the values $V(x_0), \ldots, V(x_\ell)$ are finitely many finite numbers, so $\sup_k V(x_k) = \infty$ holds there precisely when $\sup_{k > \ell} V(x_k) = \infty$.
Therefore
\[
\begin{aligned}
&\set{\sigma_\Goal = \infty} \cap L_\ell \cap \set{\sup_k V(x_k) = \infty} \\
&\qquad \subseteq J^c \cup \left(\set{x_{\ell+1} \in \Inv} \cap \set{x_j \notin T \text{ for all } j > \ell} \cap \set{\sup_{k > \ell} V(x_k) = \infty}\right).
\end{aligned}
\]
The event $J^c$ has $\bbP_x$-probability $0$ by \cref{lem:pre-goal-confinement}.
For the second event, the Markov property at time $\ell + 1$ and \eqref{eq:no-T-bounded} give
\[
\begin{aligned}
&\bbP_x\left(x_{\ell+1} \in \Inv,\ x_j \notin T \text{ for all } j > \ell,\ \sup_{k > \ell} V(x_k) = \infty\right) \\
&\qquad = \bbE_x\left[\mathbf{1}[x_{\ell+1} \in \Inv] \times \bbP_{x_{\ell+1}}\left(\sigma_T = \infty \text{ and } \sup_k V(x_k) = \infty\right)\right] = 0.
\end{aligned}
\]
This proves \eqref{eq:sublevel-obligation} and completes the proof.
\end{proof}
\end{revblock}

\begin{revblock}
We next show that, from every state of a sublevel set of $V$, criterion \textbf{V2} yields a probability of reaching $\Goal$ that is bounded away from zero uniformly over the sublevel set.


\begin{lemma}
\label{lem:sublevel-reach}
Let $m \in \nats$, and set
\[
R := H(m), \qquad N(m) := \ceil{R / d(R)}, \qquad \eta(m) := \varepsilon(R)^{N(m)} > 0.
\]
Then $\bbP_x(\sigma_\Goal \leq N(m)) \geq \eta(m)$ for every $x \in V_{\leq m}$.
\end{lemma}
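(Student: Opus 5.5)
The plan is to use a single descent scale throughout. Fix $m$ and $x \in V_{\leq m}$, and write $R = H(m)$. Since $V(x) \leq m$, the first clause of \textbf{V2} with $r = m$ gives $U(x) \leq H(m) = R$. I will then apply the descent clause of \textbf{V2} always at the \emph{fixed} level $r = R$. This is legitimate because that clause only requires $U(y) \leq R$, and $U$ can only go down along a descent step, so every state visited along a chain of descents still satisfies $U \leq R$. Each descent step therefore lowers $U$ by at least $d(R)$ with conditional probability at least $\varepsilon(R)$, uniformly over the current state.

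The core is an induction on $k$ for the events
\[
E_k := \set{\sigma_\Goal \leq k} \cup \set{x_k \in \Inv \setminus \Goal \text{ and } U(x_k) \leq R - k\, d(R)}.
\]
I will show $\bbP_x(E_k) \geq \varepsilon(R)^k$. Each $E_k$ is $\calF_k$-measurable because $\Inv$, $\Goal$ and $U$ are measurable.
\begin{itemize}
\item The base case $k = 0$ holds: either $x \in \Goal$, or $x \in \Inv \setminus \Goal$ with $U(x) \leq R$.
\item For the step, $E_k \cap \set{\sigma_\Goal \leq k} \subseteq E_{k+1}$. On $E_k \cap \set{\sigma_\Goal > k}$ we have $x_k \in \Inv \setminus \Goal$ with $U(x_k) \leq R$.
\item By the Markov property and \textbf{V2} at $r = R$,
\[
\bbP_x\bigl(x_{k+1} \in \Inv,\ U(x_{k+1}) \leq U(x_k) - d(R) \,\big|\, \calF_k\bigr) = T_X\bigl(x_k, \set{y \in \Inv \mid U(y) \leq U(x_k) - d(R)}\bigr) \geq \varepsilon(R)
\]
on this event.
\item On that event, $x_{k+1}$ is either in $\Goal$ (so $\sigma_\Goal \leq k+1$) or in $\Inv \setminus \Goal$ with $U(x_{k+1}) \leq R - (k+1)d(R)$. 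Either way $E_{k+1}$ holds.
\item Taking expectations gives $\bbP_x(E_{k+1}) \geq \bbP_x(E_k \cap \set{\sigma_\Goal \leq k}) + \varepsilon(R)\,\bbP_x(E_k \cap \set{\sigma_\Goal > k}) \geq \varepsilon(R)\,\bbP_x(E_k)$. This uses $\varepsilon(R) \leq 1$, which holds automatically since it lower-bounds a probability.
\end{itemize}

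To conclude, take $k = N(m) = \lceil R / d(R) \rceil$, so that $R - N(m)\, d(R) \leq 0$. The second alternative in $E_{N(m)}$ would force $U(x_{N(m)}) \leq 0$, hence $U(x_{N(m)}) = 0$ since $U \geq 0$. By \textbf{V2} this means $x_{N(m)} \in \Goal$, contradicting $x_{N(m)} \notin \Goal$. So $E_{N(m)} \subseteq \set{\sigma_\Goal \leq N(m)}$, which gives $\bbP_x(\sigma_\Goal \leq N(m)) \geq \varepsilon(R)^{N(m)} = \eta(m)$. This is positive because $\varepsilon$ and $d$ take values in $\preals$.

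The step needing most care is the use of the conditional probability in the inductive step. The descent clause only constrains $T_X$ from states in $\Inv \setminus \Goal$ with $U \leq R$, so it must be invoked only on the $\calF_k$-measurable event where these hold. This is why invariance is tracked inside $E_k$ rather than appealing separately to \cref{lem:pre-goal-confinement}. The other point to get right is that a single scale $r = R$ is kept throughout, so that $d$ and $\varepsilon$ do not degrade along the chain.
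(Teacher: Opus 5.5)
Your proof is correct and rests on the same key idea as the paper's. That idea is to apply the descent clause of \textbf{V2} at the single fixed level $R = H(m)$, so that at most $N(m)$ descents of size $d(R)$, each of probability at least $\varepsilon(R)$, reach $\Goal$. The only difference is bookkeeping: the paper inducts backward on the discretized level $\lceil U(x)/d(R)\rceil$ of the starting state via the Markov property at time $1$, whereas you induct forward in time on the $\calF_k$-measurable events $E_k$, which amounts to the same computation.
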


\begin{proof}
Write $\varepsilon := \varepsilon(R)$, $d := d(R)$, and $N := N(m)$.
Define the function $\ceil{U}(x) := \ceil{U(x) / d}$ on $\Inv$, so that $U(x) \leq R$ implies $\ceil{U}(x) \leq N$, and $\ceil{U}(x) = 0$ precisely when $x \in \Goal$.
We show by induction on $n \in \set{1, \ldots, N}$ that
\begin{equation}
\label{eq:descent-induction}
\bbP_x(\sigma_\Goal \leq n) \geq \varepsilon^n \qquad \text{for every } x \in \Inv \text{ with } U(x) \leq R \text{ and } 1 \leq \ceil{U}(x) \leq n.
\end{equation}
This proves the lemma.
Take $x \in V_{\leq m}$; criterion \textbf{V2} gives $U(x) \leq H(m) = R$, hence $\ceil{U}(x) \leq N$.
If $\ceil{U}(x) = 0$, then $x \in \Goal$ and $\sigma_\Goal = 0$.
Otherwise \eqref{eq:descent-induction} at $n = N$ gives $\bbP_x(\sigma_\Goal \leq N) \geq \varepsilon^N = \eta(m)$.

\paragraph{\revkeep{Base case:} $n = 1$.}
Take $x \in \Inv$ with $U(x) \leq R$ and $\ceil{U}(x) = 1$, that is, $0 < U(x) \leq d$; in particular $x \notin \Goal$.
The descent condition of criterion \textbf{V2} at the level $R$ \revkeep{implies that in one step, the chain reaches} $\set{y \in \Inv \mid U(y) \leq U(x) - d}$ with probability $\geq \varepsilon$.
Every state $y$ in this set has $U(y) \leq 0$, hence $y \in \Goal$.
Thus $\bbP_x(\sigma_\Goal \leq 1) \geq \varepsilon$.

\paragraph{\revkeep{Induction case.}}
Suppose \eqref{eq:descent-induction} holds for some $n < N$, and take $x \in \Inv$ with $U(x) \leq R$ and $\ceil{U}(x) = n + 1$; in particular $U(x) > d > 0$ and $x \notin \Goal$.
As in the base case, \revkeep{in one step the chain reaches} $D := \set{y \in \Inv \mid U(y) \leq U(x) - d}$ with probability $\geq \varepsilon$; since $T_X(x, D) \leq 1$, this also gives $\varepsilon \leq 1$.
The set $D$ is measurable because $U$ is, so $T_X(x, D)$ is defined.
Every $y \in D$ has $U(y) \leq U(x) - d \leq R$ and $\ceil{U}(y) \leq n$.
Either $y \in \Goal$, and $\bbP_y(\sigma_\Goal \leq n) = 1 \geq \varepsilon^n$, or $\ceil{U}(y) \geq 1$, and the induction hypothesis gives $\bbP_y(\sigma_\Goal \leq n) \geq \varepsilon^n$.
The Markov property at time $1$ gives
\[
\bbP_x(\sigma_\Goal \leq n + 1) \geq \int_D T_X(x, dy)\, \bbP_y(\sigma_\Goal \leq n) \geq \varepsilon^n \times T_X(x, D) \geq \varepsilon^{n + 1}.
\]
This completes the induction and the proof.
\end{proof}
\end{revblock}

\rev{The proof uses \textbf{V2} at a single level $R$ only: once $U(x) \leq R$, a descent from $x$ reaches a state $y$ with $U(y) \leq U(x) - d(R) \leq R$, so the condition applies at the same $R$ from $y$.}

\begin{revblock}
\revkeep{We first recall a zero-one law of probabilistic processes.}
\begin{lemma}[Zero-One Law of Probabilistic Processes {\cite[Lemma~2.6.1]{McIverMorganBook}}]
\label{lem:01}
Let $(A_i)_{i \geq 1}$ be events with $A_{i+1} \subseteq A_i$ for every $i \geq 1$, and let $\epsilon > 0$ satisfy $\bbP(A_{i+1}) \leq (1 - \epsilon) \times \bbP(A_i)$ for every $i \geq 1$.
Then $\bbP\left(\bigcap_{i \geq 1} A_i\right) = 0$.
\end{lemma}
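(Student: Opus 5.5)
Since $A_{i+1} \subseteq A_i$, monotonicity gives $\bbP(A_{i+1}) \leq \bbP(A_i)$, and $\bbP(\bigcap_{i \geq 1} A_i)$ is well defined. The plan is to iterate the assumed contraction, show that $\bbP(A_i)$ decays geometrically to zero, and then pass to the intersection by continuity of measure from above.

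First, by induction on $i$, I will establish
\[
\bbP(A_i) \leq (1-\epsilon)^{i-1} \bbP(A_1) \leq (1-\epsilon)^{i-1} \qquad \text{for every } i \geq 1.
\]
The base case $i = 1$ is trivial. The inductive step multiplies the hypothesis $\bbP(A_{i+1}) \leq (1-\epsilon)\,\bbP(A_i)$ by the inductive bound. This uses $1-\epsilon \geq 0$, which we may assume without loss of generality.

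To justify that assumption, suppose $\epsilon > 1$. Then the hypothesis forces $\bbP(A_2) \leq (1-\epsilon)\,\bbP(A_1) \leq 0$, so $\bbP(A_2) = 0$ and the conclusion is immediate. If $\epsilon = 1$, the same argument gives $\bbP(A_2) = 0$. Hence we may assume $0 < \epsilon < 1$, and then $(1-\epsilon)^{i-1} \to 0$ as $i \to \infty$.

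For the conclusion, note that $\bigcap_{j \geq 1} A_j \subseteq A_i$ for every $i$. Monotonicity then gives $\bbP(\bigcap_j A_j) \leq (1-\epsilon)^{i-1}$ for all $i$, and letting $i \to \infty$ yields $0$. This route does not even need continuity from above; monotonicity of $\bbP$ suffices.

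No step is a real obstacle. The only point of care is the edge case $\epsilon \geq 1$, which I dispatch by observing that $A_2$ is then already null. In the later application, the $A_i$ will be the events that $\Goal$ has not been reached by the $i$-th visit to a sublevel set $V_{\leq m}$. The factor $1-\epsilon$ will come from \cref{lem:sublevel-reach} via the strong Markov property. Verifying the contraction hypothesis is that lemma's job, not this one's.
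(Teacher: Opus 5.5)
Your proof is correct, and its first half is exactly the paper's: iterate the contraction to obtain $\bbP(A_i) \leq (1-\epsilon)^{i-1}$. The two routes differ only at the final step.

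\begin{itemize}
\item The paper sums these bounds to get $\sum_i \bbP(A_i) \leq 1/\epsilon$. It then invokes the Borel--Cantelli lemma to conclude that almost surely only finitely many $A_i$ occur, and places $\bigcap_i A_i$ inside that limsup event.
\item You instead note that $\bigcap_j A_j \subseteq A_i$ for every $i$, and let $i \to \infty$ in the monotonicity bound.
\end{itemize}

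Your route is more elementary and gives up nothing. For a decreasing sequence, $\limsup_i A_i = \bigcap_i A_i$, so the Borel--Cantelli conclusion adds no information here. It would only be stronger for non-nested events, where the limsup event can strictly contain the intersection.

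Your explicit case split on $\epsilon \geq 1$ is also a small genuine improvement. The paper's termwise bound $\bbP(A_i) \leq (1-\epsilon)^{i-1}$ and its geometric-series estimate tacitly assume $\epsilon \leq 1$: for $\epsilon > 1$ the right-hand side is negative at even $i$. This is harmless in the paper because, in the application, $\epsilon = \eta(m)$ arises as a lower bound on a hitting probability.
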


\begin{proof}
\revkeep{The conditions of the lemma imply that} $\bbP(A_i) \leq (1 - \epsilon)^{i - 1} \times \bbP(A_1) \leq (1 - \epsilon)^{i - 1}$ for every $i \geq 1$.
Thus, $\sum_{i \geq 1} \bbP(A_i) \leq \frac{1}{\epsilon} < \infty$.
\revkeep{Applying the Borel-Cantelli lemma \cite{grimmett2020probability} to the above gives} $\bbP(A_i \text{ for infinitely many } i \geq 1) = 0$.
Since the events decrease, $\bigcap_{i \geq 1} A_i$ is contained in the event that $A_i$ occurs for infinitely many $i \geq 1$.
This completes the proof.
\end{proof}
In \cite{McIverMorganBook}, $A_i$ is the event that a process has not left a set of states $B$ within $i$ steps, and the hypothesis holds when the probability of leaving $B$ in one step is at least $\epsilon$ from every state of $B$.
\end{revblock}


\begin{revblock}
We now complete the proof of soundness of \cref{rule:des-ast-markov}.
For $m \in \nats$, let $E_m$ be the event that the trajectory visits $V_{\leq m}$ infinitely often.
By \cref{lem:sublevel-io}, the event $\set{\sigma_\Goal = \infty} \setminus \bigcup_{m \in \nats} E_m$ occurs with zero probability.
It thus suffices to show that $\bbP_{x_0}(E_m \cap \set{\sigma_\Goal = \infty}) = 0$ for every $m \in \nats$: the event $\set{\sigma_\Goal = \infty}$ is then contained in a countable union of events of $\bbP_{x_0}$-probability $0$.

Fix $m \in \nats$, and write $N := N(m)$ and $\eta := \eta(m)$ for the constants of \cref{lem:sublevel-reach}.
Let $\tau_1 < \tau_2 < \cdots$ be the visit times to $V_{\leq m}$ that are at least $N$ steps apart,
\[
\tau_1 := \inf\set{k \in \nats \mid x_k \in V_{\leq m}}, \qquad \tau_{i+1} := \inf\set{k \geq \tau_i + N \mid x_k \in V_{\leq m}}, \qquad i \geq 1,
\]
with $\inf \emptyset := \infty$.
For $i \geq 1$, let $A_i := \set{\tau_i < \infty \text{ and } \sigma_\Goal > \tau_i + N}$ be the event that the $i$-th visit occurs and $\Goal$ is not reached within $N$ steps of it.
These events decrease, $A_{i+1} \subseteq A_i$, because $\tau_{i+1} \geq \tau_i + N$.
Moreover, $E_m \cap \set{\sigma_\Goal = \infty} \subseteq \bigcap_{i \geq 1} A_i$, because on $E_m$ every $\tau_i$ is finite.
We show that $\bbP_{x_0}(A_{i+1}) \leq (1 - \eta) \times \bbP_{x_0}(A_i)$ for every $i \geq 1$.
\cref{lem:01} then gives $\bbP_{x_0}\left(\bigcap_{i \geq 1} A_i\right) = 0$, and hence $\bbP_{x_0}(E_m \cap \set{\sigma_\Goal = \infty}) = 0$.
On $\set{\tau_{i+1} < \infty}$, the state $x_{\tau_{i+1}}$ lies in $V_{\leq m}$, so \cref{lem:sublevel-reach} gives $\bbP_{x_{\tau_{i+1}}}(\sigma_\Goal > N) \leq 1 - \eta$.
The strong Markov property (see \cite[Section~3.4]{meyn2012markov}) at $\tau_{i+1}$ therefore gives
\[
\begin{aligned}
\bbP_{x_0}(A_{i+1}) &= \bbE_{x_0}\left[\mathbf{1}[\tau_{i+1} < \infty,\ \sigma_\Goal > \tau_{i+1}] \times \bbP_{x_{\tau_{i+1}}}\left(\sigma_\Goal > N\right)\right] \\
&\leq (1 - \eta) \times \bbP_{x_0}\left(\tau_{i+1} < \infty,\ \sigma_\Goal > \tau_{i+1}\right) \leq (1 - \eta) \times \bbP_{x_0}(A_i),
\end{aligned}
\]
where the last step holds because $\tau_{i+1} \geq \tau_i + N$ places $\set{\tau_{i+1} < \infty,\ \sigma_\Goal > \tau_{i+1}}$ inside $A_i$.

Because the argument uses no property of $x_0$ beyond membership in $\Inv$, we have $\bbP_x(\sigma_\Goal < \infty) = 1$ for every $x \in \Inv$, completing the proof.
\end{revblock}

\begin{figure}[t]
\centering
\resizebox{0.96\linewidth}{!}{%
\begin{tikzpicture}[>=stealth,
  qn/.style={circle,draw=black!70,thick,inner sep=1pt,minimum size=17pt,font=\small},
  goal/.style={qn,double,double distance=1.2pt},
  ev/.style={font=\scriptsize},
  row/.style={font=\scriptsize}]
  \node[qn] (n0) at (0,0)   {$0$};
  \node[qn] (n1) at (1.5,0) {$1$};
  \node[qn] (n2) at (3.0,0) {$2$};
  \node[qn] (n3) at (4.5,0) {$3$};
  \node[qn] (n4) at (6.0,0) {$4$};
  \node     (nd) at (7.2,0) {$\cdots$};
  \node[goal] (g) at (9.2,0) {$g$};
  \draw[->] (n0) to[bend left=45] node[ev,above]{$1$} (n1);
  \draw[->] (n1) to[bend left=45] node[ev,above]{$1/2$} (n2);
  \draw[->,very thick] (n2) to[bend left=45] node[ev,above]{$1/2$} (n3);
  \draw[->] (n3) to[bend left=45] node[ev,above]{$1/2$} (n4);
  \draw[->] (n4) to[bend left=45] node[ev,above]{$1/2$} (nd);
  \draw[->] (n1) to[bend left=45] node[ev,below]{$1/2$} (n0);
  \draw[->] (n2) to[bend left=45] node[ev,below]{$1/2$} (n1);
  \draw[->] (n3) to[bend left=45] node[ev,below]{$1/2$} (n2);
  \draw[->] (n4) to[bend left=45] node[ev,below]{$1/2$} (n3);
  \draw[->] (nd) to[bend left=45] node[ev,below]{$1/2$} (n4);
  \draw[->] (g) to[out=60,in=120,looseness=8] node[ev,above]{$1$} (g);
  \node[draw,dashed,rounded corners,fill=black!10,opacity=0.4,inner sep=20pt,fit=(n0)(n2),label={[font=\scriptsize]above:$V_{\leq 2}$}] {};
  \node[row] at (-1.1,-1.3)  {$V$};
  \node[row] at (-1.1,-1.75) {$U$};
  \foreach \x/\v/\u in {0/0/1, 1.5/1/\tfrac12, 3.0/2/\tfrac14, 4.5/3/\tfrac18, 6.0/4/\tfrac{1}{16}} {
    \node[row] at (\x,-1.3)  {$\v$};
    \node[row] at (\x,-1.75) {$\u$};
  }
  \node[row] at (9.2,-1.3)  {$0$};
  \node[row] at (9.2,-1.75) {$0$};
\end{tikzpicture}}
\caption{\rev{The symmetric random walk on $\nats$ with the target $g$, which no transition enters and which the chain never leaves, with $V$ and $U$ below each state and $T = \set{g, 0}$.
The dashed box is the sublevel set $V_{\leq 2}$; the descent it certifies from $2$ (thick) leaves it.}}
\label{fig:v-indexed-counterexample}
\end{figure}

\begin{revblock}
\paragraph{Why \citet{MajumdarSS24} is unsound.}
The rule of \cite{MajumdarSS24} is \cref{rule:des-ast-markov} with the descent condition of criterion \textbf{V2} with $V(x) \leq r$ in place of $U(x) \leq r$.
It certifies the symmetric random walk of \cref{fig:v-indexed-counterexample} on $X = \nats \cup \set{g}$, which never reaches $\Goal = \set{g}$ from $x_0 = 0$: take $\Inv := X$, $T := \set{g, 0}$, $V(g) := U(g) := 0$, and
\[
V(x) := x, \qquad U(x) := 2^{-x}, \qquad x \in \nats; \qquad\qquad H(r) := 1, \qquad d(r) := 2^{-(r + 1)}, \qquad \varepsilon(r) := \tfrac12.
\]
$V$ vanishes exactly on $T$ and has zero drift at every $x \geq 1$; $U$ vanishes exactly on $\Goal$ and is at most $1 = H(r)$; and from every $x \in \nats$ with $V(x) \leq r$, the move to $x + 1$, of probability at least $\tfrac12$, lowers $U$ by $2^{-(x + 1)} \geq d(r)$.

The rule is unsound because a certified descent may leave the sublevel set of $V$ that certified it.
When this happens, the next sublevel set may certify a decrease of $U$ only by a smaller $d$.
The walk above leaves every $V_{\leq r}$; the certified decreases $2^{-(x + 1)}$ of $U$ shrink along the run and sum to at most $U(x_0) = 1$, so $U$ never reaches $0$.

Note that this certificate does not meet criterion \textbf{V2} of \cref{rule:des-ast-markov}: every $x \in \nats$ has $U(x) \leq 1$, and no $d(1) > 0$ lies below $2^{-(x + 1)}$ for all $x$.
\end{revblock}

\subsection{Completeness}

Broadly, our proof of completeness uses techniques from the literature on Foster-Lyapunov criteria for recurrent Markov chains \citet{meyn2012markov,xu2018note}.
To use such techniques however, we must first transform our chain such that these techniques can apply.

Let $(X, T_X)$ be a Markov chain. 
Let $x_0$ be the initial state of the chain, and let $\Goal$ be a measurable subset of $X$.
Suppose it is known that $\bbP_{x_0}(\sigma_\Goal < \infty) = 1$.

\begin{lemma}\label{lem:measurable_inv}
The set
\[
\Inv:=\{x\in X \mid P_x(\sigma_{\Goal}<\infty)=1\}
\]
is measurable.
\end{lemma}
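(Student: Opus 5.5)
The plan is to show that the map $f(x) := P_x(\sigma_\Goal < \infty)$ is a measurable function from $X$ to $[0,1]$; then $\Inv = f^{-1}(\set{1})$ is the preimage of a Borel set under a measurable map, hence measurable.

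First, write the reachability event as an increasing countable union. Set $f_n(x) := P_x(\sigma_\Goal \leq n)$. The events $\set{\sigma_\Goal \leq n}$ increase in $n$ with union $\set{\sigma_\Goal < \infty}$, so continuity of measure gives $f_n(x) \uparrow f(x)$ pointwise. A pointwise limit of measurable functions is measurable, so it suffices to show that each $f_n$ is measurable.

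Next, prove measurability of $f_n$ by induction on $n$, using the first-step decomposition. We have $f_0 = \indicator_\Goal$, which is measurable because $\Goal$ is. For the step, the Markov property at time $1$ gives
\[
f_{n+1}(x) = \indicator_\Goal(x) + \indicator_{X\setminus\Goal}(x)\int_X T_X(x,dy)\, f_n(y).
\]
The remaining task is the standard fact that $x \mapsto \int_X T_X(x,dy)\, g(y)$ is measurable for every bounded measurable $g \geq 0$. For indicators $g = \indicator_A$ this is exactly the kernel measurability assumption in the definition of a Markov process. It extends to simple functions by linearity, and to bounded non-negative measurable $g$ by monotone convergence along simple functions increasing to $g$, since a pointwise limit of measurable functions is measurable. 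Sums and products of measurable functions are measurable, so $f_{n+1}$ is measurable.

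The main obstacle is only bookkeeping: justifying that $P_x$ is the trajectory measure determined by the kernel (Ionescu--Tulcea), so that the first-step identity holds for every $x$ and not just almost everywhere. Alternatively, one can cite directly that $x \mapsto P_x(B)$ is measurable for every measurable trajectory event $B$. This follows from a monotone class argument starting from cylinder sets, whose probabilities are iterated kernel integrals. Taking $B = \set{\sigma_\Goal < \infty}$, which is measurable as a countable union of cylinder events $\set{x_k \in \Goal}$, this gives the result in one line. I would present the inductive argument and mention this general fact as the underlying reason.
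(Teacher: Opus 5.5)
Your proof is correct and follows the paper's route: write $P_x(\sigma_\Goal<\infty)$ as the pointwise limit of the finite-horizon hitting probabilities $P_x(\sigma_\Goal\le n)$, conclude it is measurable, and take the preimage of $\{1\}$. Your first-step induction showing that each $P_x(\sigma_\Goal\le n)$ is measurable fills in a step the paper only asserts from the kernel property, and it is sound.
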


\begin{proof}
Define $u:X\rightarrow[0,1]$ with
\[
u(x):=P_x(\sigma_{\Goal}<\infty), \qquad x\in X.
\]
Since $\Goal$ is measurable and $T_1$ is a Markov kernel, the finite-horizon hitting probabilities $u_n:X\rightarrow [0,1]$ defined with
\[
u_n(x):=P_x(\sigma_{\Goal}\le n)
\]
are measurable for every $n\ge0$. Moreover,
\[
u(x)=\lim_{n\to\infty}u_n(x).
\]
Therefore, $u$ is measurable as the pointwise limit of measurable functions, which makes
$\Inv
=
u^{-1}(\{1\})
$
a measurable set. This is due to being the inverse image of $\{1\}$, which is a Borel subset of $[0,1]$.
\end{proof}

Set a new transition kernel $T_1(x) := T_X(x)$ for all $x \not\in \Goal$ and $T_1(x_g, \{x_0\}) := 1$ for all $x_g \in \Goal$.
Define $\Inv \triangleq \{ x \in X \mid \bbP_{x}(\sigma_\Goal < \infty) = 1 \}$ with $\bbP_{x}$ measured from the kernel $T_1$.

Now, if $T_1(x, \Inv) < 1$ for some $x \in \Inv$, then $\bbP_{x}(\sigma_\Goal = \infty) = \int_{X \setminus \Inv} T_1(x, dy) \bbP_{y}(\sigma_\Goal = \infty) > 0$, which contradicts $\bbP_{x}(\sigma_\Goal < \infty) = 1$.
Thus, $T_1(x, \Inv) = 1$ for all $x \in \Inv$.
This lets us define a Markov kernel $T_\Inv(x, A) \triangleq T_1(x, A \cap \Inv)$ for each $x \in \Inv$ and $A \in \calB(X)$, giving us the restricted Markov chain $(\Inv, T_\Inv)$, on which every state reaches $\Goal$ almost surely.
This is the invariant, and the chain we work with for the rest of the proof.

We first show that
\[
\int_{X \setminus \Inv} T_1(x,dy)\,
\bbP_y(\sigma_{\Goal}=\infty)>0.
\]
To this end, we invoke Lemma~\ref{lem:measurable_inv}, which ensures that the set
$
\Inv=\{x\in X:\bbP_x(\sigma_{\Goal}<\infty)=1\}
$
is measurable. Consequently,
$
X\setminus\Inv
=
\{y\in X:\bbP_y(\sigma_{\Goal}=\infty)>0\}
$
is also measurable. Moreover, the mapping
$
y\mapsto\bbP_y(\sigma_{\Goal}=\infty)
$
is measurable, and
$
\bbP_y(\sigma_{\Goal}=\infty)>0,
\forall\,y\in X\setminus\Inv.
$
Therefore, if
$
T_1(x,X\setminus\Inv)>0,
$
then, by the positivity of the integral of a nonnegative measurable function,
\[
\int_{X\setminus\Inv}
\bbP_y(\sigma_{\Goal}=\infty)\,
T_1(x,dy)
>0.
\]

We now introduce the notion of $\psi$-irreducibility for the chain $(\Inv, T_\Inv)$.
A Markov chain $(\Inv, T_\Inv)$ is said to be \emph{\(\psi\)-irreducible} if there exists a nontrivial $\sigma$-finite measure \(\psi\) on \((\Inv, \calB(\Inv))\) such that for every measurable set \(A \in \calB(\Inv)\) satisfying \(\psi(A) > 0\),
\[
\forall x \in \Inv, \quad \bbP_x(\sigma_A < \infty) > 0. 
\]

Let \(T_\Inv^n\) denote the \(n\)-step transition kernel defined recursively by
\[
T_\Inv^{n+1}(x, A) = \int_\Inv T_\Inv(y, A) \, T_\Inv^n(x, dy),
\quad T_\Inv^1(x, A) = T_\Inv(x, A).
\]
Define the measure
$$
\psi(A) = \sum_{n \in \nats} 2^{-n} \times T_\Inv^n(x_0, A).
$$
It is easy to see that for all $x \in \Inv$ and $A \in \calB(\Inv)$ with $\psi(A) > 0$,
$$
\bbP_x(\sigma_A < \infty) \geq \bbP_x(\sigma_\Goal < \infty) \times \bbP_{x_0}(\sigma_A < \infty) > 0
$$
Thus, $(\Inv, T_\Inv)$ is $\psi$-irreducible with this measure $\psi$.

We must now introduce the notion of the \emph{petite set}.
Towards this, we first introduce the sampled chains.
A \emph{sampling distribution} $a$ is a probability distribution over $\nats$.
For a Markov process $(\Inv, T_\Inv)$, declare a transition kernel $K_a : x \times \calB(\Inv) \to [0, 1]$ as $K_a(x, A) = \sum_{n = 0}^\infty a(n) \times T_\Inv^n(x, A)$.
The process $(\Inv, K_a)$ is the \emph{sampled chain} of $(\Inv, T_\Inv)$ according to the sampling distribution $a$.
A measure $\nu$ over the measure space $(\Inv, \calB(\Inv))$ is said to non-trivial if it assigns a positive value to at least one set in $\calB(\Inv)$.
A set $C \subseteq \Inv$ is called \emph{petite} if there exists a sampling distribution $a$ and a non-trivial measure $\nu_a$ such that $K_a(x, B) \geq \nu_a(B)$ for all $x \in C$ and $B \in \calB(\Inv)$.

We will now show that $\Goal$ is petite in the chain $(\Inv, T_\Inv)$.
Take the sampling distribution $a$ with $a(1) = 1$ and $a(n) = 0$ for $n \neq 1$, such that $K_a(x, \cdot) = T_\Inv(x, \cdot)$.
Because every goal state moves to $x_0$ in one step, $T_\Inv(x, \cdot) = \delta_{x_0}$ for all $x \in \Goal$.
Hence, with the non-trivial measure $\nu = \delta_{x_0}$ (the point mass at $x_0$),
$$
K_a(x, B) = T_\Inv(x, B) = \delta_{x_0}(B) = \nu(B), \quad \forall x \in \Goal,\ B \in \calB(\Inv),
$$
so in particular $K_a(x, B) \geq \nu(B), \forall x \in \Goal$.
Thus, $\Goal$ is petite.

At this stage, we can show that the chain $(\Inv, T_\Inv)$ is \emph{recurrent}.
For $A \in \calB(\Inv)$, let
$$
U(x, A) := \sum_{n = 1}^\infty T_\Inv^n(x, A) = \bbE_x\!\left[\sum_{n \geq 1} \mathbf{1}[x_n \in A]\right]
$$
be the expected number of visits the chain makes to $A$ starting from $x$, and write $\calB^+(\Inv) := \{A \in \calB(\Inv) : \psi(A) > 0\}$.
A $\psi$-irreducible chain is \emph{recurrent} if $U(x, A) = \infty$ for every $x \in \Inv$ and every $A \in \calB^+(\Inv)$.
Since every state in $\Inv$ reaches the petite $\Goal$ almost surely (this is the defining property of $\Inv$), we can invoke \citet[Prop.~9.1.7]{meyn2012markov} to conclude that the $\psi$-irreducible chain $(\Inv, T_\Inv)$ is recurrent. 

We now introduce three lemmas about petite sets from \cite{xu2018note}.

\begin{lemma}[Petite Set Properties]\label{lem:petite_properties}
Let $(X, T_X)$ be a $\psi$-irreducible Markov process. Then:
\begin{enumerate}
    \item (Subset) Any measurable subset of a petite set is petite.
    \item (Union) The union of two petite sets is petite.
\end{enumerate}
\end{lemma}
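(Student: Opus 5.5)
Both parts follow directly from the definition of petite set. Note that we are free to change the sampling distribution and the minorizing measure. $\psi$-irreducibility is not needed for these two items; it matters only for the stronger version in which a single sampling distribution works for all petite sets (Meyn--Tweedie, Prop.~5.5.5). That version is not needed here.

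\emph{Subset.} Let $C$ be petite, witnessed by a sampling distribution $a$ and a non-trivial measure $\nu_a$ with $K_a(x,B)\ge \nu_a(B)$ for all $x\in C$ and $B\in\calB(X)$. Let $C'\subseteq C$ be measurable. The inequality holds for every $x\in C$, so it holds in particular for every $x\in C'$. The same pair $(a,\nu_a)$ therefore witnesses that $C'$ is petite, and nothing further is required.

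\emph{Union.} Let $C_1,C_2$ be petite, witnessed by $(a_1,\nu_1)$ and $(a_2,\nu_2)$.

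Step 1: pass to a common sampling distribution. Take the mixture $b := \tfrac12 a_1+\tfrac12 a_2$, which is again a probability distribution on $\nats$. Then $K_b=\tfrac12 K_{a_1}+\tfrac12 K_{a_2}$, so for $x\in C_i$ we get $K_b(x,B)\ge \tfrac12 K_{a_i}(x,B)\ge \tfrac12\nu_i(B)$.

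Step 2: obtain one non-trivial measure that lies below both $\tfrac12\nu_1$ and $\tfrac12\nu_2$. The obstacle is that $\nu_1$ and $\nu_2$ may be mutually singular, in which case $\min(\nu_1,\nu_2)$ is trivial. To fix this, use a convolution $a_1 * a_2$, the law of the sum of independent samples from $a_1$ and $a_2$. By Chapman--Kolmogorov, $K_{a_1*a_2}=K_{a_1}K_{a_2}$.

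Fix a set $B_1$ with $\nu_1(B_1)>0$. From $x\in C_1$,
\[
K_{a_1}K_{a_2}(x,B)=\int K_{a_1}(x,dy)K_{a_2}(y,B)\ge \int \nu_1(dy)K_{a_2}(y,B)=:\nu_1K_{a_2}(B).
\]
From $x\in C_2$, apply the same argument to $K_{a_2}K_{a_1}$, which equals $K_{a_1}K_{a_2}$ because the convolution is commutative. This gives a lower bound of $\nu_2K_{a_1}$.

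Step 3: compare $\nu_1K_{a_2}$ and $\nu_2K_{a_1}$ by composing once more, using $K_{a_1}K_{a_2}K_{a_1}$ and similar products. The goal is a single measure that lies below both, up to constants. One clean option is the triple product $K_{a_1}K_{a_2}K_{a_1}$, minorized from $C_1$ by $\nu_1(K_{a_2}K_{a_1})$. From $C_2$, apply the $C_2$ minorization in the middle factor, after first integrating the leading $K_{a_1}$: for $x\in C_2$ this gives $\int K_{a_1}(x,dz)K_{a_2}(z,\cdot)$. This only controls $z$ when $z\in C_2$, which is not guaranteed.

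So I expect Step 3 to be the main difficulty. If the direct product argument stalls there, the fallback is to invoke $\psi$-irreducibility as the lemma permits. Use a maximal irreducibility measure and the resolvent $K_{a_\epsilon}$. For a petite $C_i$ with witness $\nu_i$, one can show $C_i$ is also $\nu$-petite for the fixed measure $\nu := \nu_1 K_{a_\epsilon}$ restricted to a set of positive mass. The reason is that, since $\nu_1\ll\psi$-positive sets are reachable from everywhere, $\nu_2K_{a_\epsilon}$ gives positive mass, uniformly over a positive-measure subset, to the sets that $\nu_1$ charges. This is exactly the argument of Meyn--Tweedie Prop.~5.5.5 (via 5.5.4), as used in \cite{xu2018note}.

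With both $C_i$ minorized by $c_i\nu$ under sampling distributions $b_i$, the mixture $\tfrac12 b_1+\tfrac12 b_2$ together with the measure $\tfrac12\min(c_1,c_2)\nu$ witnesses that $C_1\cup C_2$ is petite. The uniform-positivity step on a positive-measure subset, which uses Egorov-type or measurable-selection reasoning, is the crux of the whole proof.
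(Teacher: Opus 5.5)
The subset part is fine. The union part has a genuine gap.

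\textbf{Irreducibility is needed.} Your opening claim that $\psi$-irreducibility is not needed for the union item is false. Take $X=\{1,2\}$ with both states absorbing. Then $\{1\}$ and $\{2\}$ are petite, witnessed by $\delta_1$ and $\delta_2$. But $K_a(1,\cdot)=\delta_1$ and $K_a(2,\cdot)=\delta_2$ for every sampling distribution $a$, so any common minorant of $C_1\cup C_2$ is the zero measure.

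The same example shows that Steps 2--3 cannot be completed. They never use irreducibility, and there $\nu_1K_{a_2}=\delta_1$ and $\nu_2K_{a_1}=\delta_2$ remain mutually singular under any further composition. So Step 3 does not merely stall; it is impossible.

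\textbf{The fallback is unproved and its justification is wrong.} The whole proof therefore has to live in your fallback, which is only sketched. Its stated reason is incorrect: $\nu_1$ need not be absolutely continuous with respect to $\psi$, and $\nu_2K_{a_\epsilon}$ need not charge the sets that $\nu_1$ charges. For a counterexample, take a transient, $\psi$-null state $0$ that jumps into a recurrent part and is never revisited. Then $\{0\}$ is petite with $\nu_1=\delta_0$ (take $a=\delta_0$), while $\nu_2K_{a_\epsilon}(\{0\})=0$ for any $\nu_2$ supported on the recurrent part.

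\textbf{The missing idea.} The fix is short and needs no Egorov-type or measurable-selection argument.
\begin{enumerate}
\item Let $a_\epsilon(n)=(1-\epsilon)\epsilon^n$. For $x\in C_i$ you have $K_{a_i*a_\epsilon}(x,\cdot)=\int K_{a_i}(x,dy)\,K_{a_\epsilon}(y,\cdot)\ge \mu_i:=\nu_iK_{a_\epsilon}$. This bound is already uniform over $C_i$, because it is inherited from the original minorization.
\item Irreducibility gives $K_{a_\epsilon}(y,B)>0$ for \emph{every} $y$ whenever $\psi(B)>0$. Hence each $\mu_i$, a finite non-zero measure, charges every $\psi$-positive set.
\item Therefore $\mu_1$ and $\mu_2$ cannot be mutually singular. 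A set $D$ with $\mu_1(D)=\mu_2(X\setminus D)=0$ would force $\psi(D)=\psi(X\setminus D)=0$, contradicting non-triviality of $\psi$.
\item So $\mu_1\wedge\mu_2$, the measure with density $\min(d\mu_1/d\lambda,\,d\mu_2/d\lambda)$ for $\lambda=\mu_1+\mu_2$, is non-trivial and lies below both $\mu_i$.
\item The mixture $\tfrac12(a_1*a_\epsilon+a_2*a_\epsilon)$ together with the measure $\tfrac12(\mu_1\wedge\mu_2)$ then witnesses that $C_1\cup C_2$ is petite.
\end{enumerate}
For comparison, the paper proves neither item; it simply imports the lemma from \cite{xu2018note}, which rests on Meyn--Tweedie's Proposition~5.5.5. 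Irreducibility is used there precisely for the union property.
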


\begin{lemma}[Uniform Hitting Probabilities]\label{lem:hitting_uniform}
Let $(X, T_X)$ be a $\psi$-irreducible Markov process on $(X, \mathcal{B}(X))$ with $\{C_n\}_{n\geq 1} \subset \mathcal{B}(X)$ satisfying $C_n \uparrow X$. For any $\varepsilon > 0$, there exists $D \in \mathcal{B}(X)$ such that:
\begin{enumerate}
    \item $\psi(D) > 1 - \varepsilon$, and
    \item For all $t > 0$, $\lim_{n\to\infty} \mathbb{P}_x(\sigma_{C_n^c} \leq t) = 0$ uniformly on $D$,
\end{enumerate}
where $\sigma_A = \inf\{t \geq 0 : X_t \in A\}$ is the hitting time.
\end{lemma}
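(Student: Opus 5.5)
The plan is a direct Egorov-type argument, applied one time horizon at a time and then combined by a diagonal choice. First, reduce to the case where $\psi$ is a probability measure. Because $\psi$ is $\sigma$-finite and nontrivial, we may replace it by an equivalent probability measure. For the measure $\psi$ constructed above this is automatic, since $\psi(\Inv) \leq \sum_n 2^{-n}$ and a rescaling normalizes it. The statement $\psi(D) > 1-\varepsilon$ should be read with respect to this normalized measure.

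Since the chain is discrete-time, it suffices to treat integer horizons $t \in \nats$. For each fixed $t$, define
\[
f^t_n(x) := \bbP_x(\sigma_{C_n^c} \leq t).
\]
Each $f^t_n$ is measurable, being a finite-horizon hitting probability, exactly as in \cref{lem:measurable_inv}. Because $C_n \uparrow X$, the complements $C_n^c$ decrease, so $f^t_n$ is non-increasing in $n$. The key pointwise claim is that $f^t_n(x) \to 0$ for every $x$. The events $\{\sigma_{C_n^c} \leq t\}$ decrease in $n$, and their intersection is the event that for every $n$ some $x_k$ with $k \leq t$ lies in $C_n^c$. Among the finitely many states $x_0, \ldots, x_t$, one must then lie in $C_n^c$ for infinitely many $n$. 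This is impossible, since every state lies in some $C_m$ and hence in all $C_n$ with $n \geq m$. Continuity of $\bbP_x$ from above then gives $f^t_n(x) \downarrow 0$.

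Egorov's theorem now applies on the finite measure space $(X, \psi)$. For each $t \in \nats$ it yields a measurable set $D_t$ with $\psi(D_t^c) < \varepsilon 2^{-(t+1)}$ on which $f^t_n \to 0$ uniformly. Set $D := \bigcap_{t} D_t$. Then
\[
\psi(D^c) \leq \sum_t \varepsilon 2^{-(t+1)} = \varepsilon,
\]
and to get the strict inequality $\psi(D) > 1-\varepsilon$ we run the same construction with $\varepsilon/2$ in place of $\varepsilon$. On $D$, the convergence is uniform for every integer $t$. For real $t > 0$, monotonicity in $t$ gives $f^t_n \leq f^{\lceil t\rceil}_n$, so uniform convergence holds for every real $t$ as well.

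The main obstacle is justifying the normalization of $\psi$ for a general $\sigma$-finite measure and verifying measurability. Egorov's theorem requires a finite measure. For this we take $\psi' = \sum_k 2^{-k}\psi(\cdot \cap B_k)/(1+\psi(B_k))$, where $(B_k)$ is a countable partition into sets of finite $\psi$-measure. Note that $\psi$-irreducibility itself is not actually needed beyond providing a reference measure.
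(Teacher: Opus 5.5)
Your proof is correct and follows the same route as the cited result of \citet{xu2018note}, which the paper imports rather than reproves: the finite-horizon escape probabilities $\bbP_x(\sigma_{C_n^c}\le t)$ decay monotonically to $0$ pointwise, Egorov's theorem gives uniform convergence for each integer horizon off a small set, and you intersect over horizons with summable exceptional measures. Your normalization step is harmless but unnecessary for the paper's $\psi$, which is already finite with $\psi(\Inv)\ge 1$, so applying Egorov to it directly gives $\psi(D^c)<\varepsilon$ and hence $\psi(D)>1-\varepsilon$.
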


\begin{lemma}[Tail Behavior for Recurrent Processes]\label{lem:recurrent_tail}
Let $(X, T_X)$ be a $\psi$-irreducible recurrent Markov process and $C \in \mathcal{B}^+(X)$. For any $\varepsilon > 0$, there exists $F \in \mathcal{B}(X)$ such that:
\begin{enumerate}
    \item $\psi(F) > 1 - \varepsilon$, and
    \item $\lim_{t\to\infty} \mathbb{P}_x(\sigma_C > t) = 0$ uniformly on $F$.
\end{enumerate}
\end{lemma}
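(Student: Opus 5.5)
Since $(X,T_X)$ is $\psi$-irreducible and recurrent, I will prove the statement in the standard way: first get pointwise convergence of the tail probabilities, then upgrade it to uniform convergence on a large set with Egorov's theorem. This is the same device the proof discussion announces for \cref{lem:hitting_uniform}. Normalize $\psi$ to a probability measure, which is harmless because any irreducibility measure can be replaced by an equivalent finite one, for instance $\sum_n 2^{-n} T_X^n(x_0,\cdot)$ as constructed above. Then ``$\psi(F) > 1-\varepsilon$'' makes sense.

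\textbf{Step 1 (pointwise convergence on a full-measure set).} Define $f_t(x) := \bbP_x(\sigma_C > t)$. Each $f_t$ is measurable, by the same argument as in \cref{lem:measurable_inv}. For fixed $x$ the map $t\mapsto f_t(x)$ is nonincreasing and converges to $f_\infty(x) := \bbP_x(\sigma_C = \infty)$. The aim is $f_\infty = 0$ $\psi$-almost everywhere. For this I would invoke the structure theory of recurrent chains, \citet[Thm.~9.0.1 / Prop.~9.1.1]{meyn2012markov}: for a recurrent $\psi$-irreducible chain and $C\in\calB^+(X)$, the set $H := \{x : \bbP_x(\sigma_C<\infty)=1\}$ contains a maximal Harris set, and $X\setminus H$ is $\psi$-null. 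In the setting where this lemma is applied, the chain $(\Inv,T_\Inv)$ with $C=\Goal$, the statement is even immediate: by construction every state of $\Inv$ reaches $\Goal$ almost surely, so $f_\infty\equiv 0$. In that case Step 1 is free.

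\textbf{Step 2 (Egorov).} The sequence $f_n$, $n\in\nats$, consists of measurable functions converging pointwise to $0$ on the set $H$, which has $\psi(H)=\psi(X)=1$. Since $\psi$ is finite, Egorov's theorem gives a measurable $F\subseteq H$ with $\psi(H\setminus F)<\varepsilon$ such that $f_n\to 0$ uniformly on $F$. In particular $\psi(F) > 1-\varepsilon$. Monotonicity in $t$ passes uniformity from integer $t$ to real $t$: $\sup_{x\in F} f_t(x) \le \sup_{x\in F} f_{\lfloor t\rfloor}(x)\to 0$.

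\textbf{Main obstacle.} Step 1 in full generality is the hard part. Recurrence as defined above ($U(x,A)=\infty$) does not by itself give $\bbP_x(\sigma_C<\infty)=1$ for every $x$, only off a $\psi$-null set. I will therefore rely on the Harris decomposition rather than reprove it. Should the cited decomposition be unavailable in the needed form, I would fall back on the application case, where Harris recurrence is built into $\Inv$. The second concern is normalization: if $\psi$ is not already a probability measure, the bound $1-\varepsilon$ must be read relative to the normalized $\psi$.
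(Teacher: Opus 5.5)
Your proof is correct. The tail $\mathbb{P}_x(\sigma_C>n)$ decreases to $\mathbb{P}_x(\sigma_C=\infty)$, which vanishes off the $\psi$-null transient part of the Harris decomposition (and identically on $\Inv$ when $C=\Goal$, the only case the paper uses), and Egorov's theorem on the finite measure $\psi$ then produces $F$. The paper does not reprove this lemma but imports it from \cite{xu2018note}, and it credits that source with exactly this Egorov-based upgrade from pointwise to uniform convergence, so your argument follows the intended route.
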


Note that \citet{xu2018note} use Egorov's theorem to demonstrate the uniform convergence in \cref{lem:hitting_uniform}, thus eliminating the need for the Weak Feller property used in \cite{meyn2012markov,MajumdarSS24}.

We now present the proof of our main theorem. Our approach proceeds in three stages: first we construct the drift function $V$ through an application of fundamental results from the literature, then we use $V$ to build our variant function $U$.
\rev{The function $V$ so constructed is finite only outside a $\psi$-null set, while \cref{rule:des-ast-markov} asks for a finite-valued supermartingale.
The third stage restricts the invariant to the states from which the infinite values of $V$ are never reached.}
Both are built for the minimal exempt set $T = \Goal$ --- $V$ vanishing exactly on $\Goal$ and satisfying \textbf{V1}, and $U$ satisfying \textbf{V2} --- which suffices for completeness, since $\Goal$ is itself a permissible exempt set of \cref{rule:des-ast-markov}. 

The foundation of our argument relies on combining two important theoretical tools: the general framework from \cite[Th.~ 9.4.2]{meyn2012markov} and the specialized construction in \cite[Th.~1]{xu2018note} applied to our state space $\Inv$.
We produce the essential steps here to maintain self-containment and to establish the specific properties needed for our subsequent construction of $U$.

Our first step establishes the necessary foundations through Lemma~\ref{lem:petite_properties} and Lemma~\ref{lem:hitting_uniform}.
Rather than an arbitrary petite decomposition used by \cite{xu2018note}, we build $V$ on the \emph{same} exhaustion that underlies the variant $U$: an increasing sequence of sets $(C_0, C_1, C_2, \dotsc)$ defined inductively by $C_0 \coloneqq \Goal$ and
\begin{equation}\label{eq:Cn}
C_{n+1} \coloneqq C_n \cup \{x \in \Inv : T_\Inv(x, C_n) > 1/2^n\}, \quad n \in \nats.
\end{equation}
We show below that each $C_n$ is petite and that $\bigcup_{n} C_n = \Inv$, so $(C_n)$ is a nested sequence of petite sets with $C_n \uparrow \Inv$.
\rev{Each $C_n$ is measurable, by induction on $n$: $C_0 = \Goal$ is measurable, and if $C_n$ is measurable, then $x \mapsto T_\Inv(x, C_n)$ is measurable by the definition of a kernel, so $\set{x \in \Inv \mid T_\Inv(x, C_n) > 1/2^n}$ is measurable.}
Since $T_\Inv(x, C_0) > 1/2^0 = 1$ is impossible, $C_1 = C_0 = \Goal$; hence $C_1 = \Goal$ is petite and satisfies the non-triviality condition $\psi(C_1) = \psi(\Goal) > 0$, as $x_0$ reaches $\Goal$ almost surely.
Building $V$ on this exhaustion is what will bound the variant $U$ on the sublevel sets of $V$.

\paragraph{The petite exhaustion $\{C_n\}$.}
We first verify the two properties of the exhaustion \eqref{eq:Cn}: each $C_n$ is petite, and the sequence covers $\Inv$.
Since $C_0 = \Goal$ is petite, \cref{claim:doubling} together with the union property of \cref{lem:petite_properties} gives, by induction on $n$, that every $C_n$ is petite.
\begin{claim}\label{claim:doubling}
If \( E_1 \) is a petite set, then for any real $\beta > 0$, the set
\[
E_2 := \left\{ x \in \Inv : T_\Inv(x, E_1) > \beta \right\}
\]
is also petite.
\end{claim}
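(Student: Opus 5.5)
The plan is to exhibit a sampling distribution and a non-trivial minorizing measure for $E_2$ directly from those of $E_1$, by inserting one extra step at the front of the sampling. Since $E_1$ is petite, fix a sampling distribution $a$ and a non-trivial measure $\nu_a$ with $K_a(y, B) \geq \nu_a(B)$ for all $y \in E_1$ and $B \in \calB(\Inv)$. Define the shifted sampling distribution $b$ by $b(0) := 0$ and $b(n+1) := a(n)$ for $n \in \nats$. This is again a probability distribution on $\nats$, and
\[
K_b(x, B) = \sum_{n \geq 0} a(n)\, T_\Inv^{n+1}(x, B) = \int_\Inv T_\Inv(x, dy)\, K_a(y, B).
\]
Here the second equality uses the Chapman--Kolmogorov identity $T_\Inv^{n+1}(x,B) = \int_\Inv T_\Inv(x,dy)\, T_\Inv^n(y,B)$. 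We then exchange sum and integral by monotone convergence, since all terms are non-negative. We also need the convention $T_\Inv^0(y,\cdot) = \delta_y$, so that the $n=0$ term of $K_a$ is handled consistently.

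Next we lower bound this integral for $x \in E_2$. Restrict the integration to $E_1$, which is legitimate because the integrand is non-negative, and apply the minorization on $E_1$:
\[
K_b(x, B) \geq \int_{E_1} T_\Inv(x, dy)\, K_a(y, B) \geq \nu_a(B)\, T_\Inv(x, E_1) \geq \beta\, \nu_a(B).
\]
The last inequality uses the defining property $T_\Inv(x, E_1) > \beta$ of $E_2$. Hence the measure $\nu_b := \beta\, \nu_a$ satisfies $K_b(x, B) \geq \nu_b(B)$ for all $x \in E_2$ and $B \in \calB(\Inv)$. Because $\beta > 0$ and $\nu_a$ is non-trivial, $\nu_b$ is non-trivial, so $E_2$ is petite.

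We also need $E_2$ to be a measurable set so that it is eligible to be petite. This holds because $x \mapsto T_\Inv(x, E_1)$ is measurable by the kernel property, and $E_1$ is measurable. If $E_1$ were only assumed petite as a subset, we would first pass to a measurable petite set, using \cref{lem:petite_properties}(1) as needed.

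The main point of care is the measure-theoretic justification of the first display. This means checking that $y \mapsto K_a(y,B)$ is measurable, as a countable non-negative combination of measurable kernels, and justifying the exchange of summation and integration. Both are routine by Tonelli and monotone convergence. There is no real obstacle beyond this bookkeeping.
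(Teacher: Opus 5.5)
Your proof is correct and follows the same route as the paper. Both arguments prepend one transition to the sampling distribution, write $K_b(x,\cdot)=\int T_\Inv(x,dy)\,K_a(y,\cdot)$, restrict the integral to $E_1$, and scale the minorizing measure by $\beta$. Your shift $b(0)=0$, $b(n+1)=a(n)$ is the correct one: the paper's written $\tilde a(n)=a(n+1)$ shifts the other way (and is not a probability distribution when $a(0)>0$), and its last step applies the minorization of $K_a$ to $K_{\tilde a}$, so your version is the cleaned-up form of its argument.
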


\begin{proof}
Since \( E_1 \) is petite, by definition there exist a non-trivial measure \( \nu \) and a probability distribution \( a \) on \( \nats \) such that
\[
K_a(x, A) := \sum_{n=0}^{\infty} a(n) T_\Inv^n(x, A) \geq \nu(A), \quad \forall x \in E_1, \ \forall A \subseteq \Inv.
\]

Define:
\[
\tilde{\nu} := \beta \nu, \quad
\tilde{a}(n) :=
\begin{cases}
a(n+1), & n \geq 0, \\
0, & n = 0.
\end{cases}
\]

We will show that:
\[
K_{\tilde{a}}(x, A) := \sum_{n=0}^{\infty} \tilde{a}(n) T_\Inv^n(x, A) \geq \tilde{\nu}(A), \quad \forall x \in E_2, \ \forall A.
\]

Let \( x \in E_2 \). Then:
\[
\begin{aligned}
K_{\tilde{a}}(x, A) &= \sum_{n=0}^{\infty} \tilde{a}(n) T_\Inv^n(x, A)
= \sum_{n=1}^{\infty} a(n) T_\Inv^n(x, A)
\\&
= \sum_{n=1}^{\infty} a(n) \int_\Inv T_\Inv(x, dy) T_\Inv^{n-1}(y, A)
= \int_\Inv T_\Inv(x, dy) \sum_{n=1}^{\infty} a(n) T_\Inv^{n-1}(y, A) \\
&= \int_{E_1} T_\Inv(x, dy) \sum_{n=1}^{\infty} a(n) T_\Inv^{n-1}(y, A) + \int_{\Inv \setminus E_1} T_\Inv(x, dy) \sum_{n=1}^{\infty} a(n) T_\Inv^{n-1}(y, A).
\end{aligned}
\]

Now observe that for \( y \in E_1 \),
\[
\sum_{n=1}^{\infty} a(n) T_\Inv^{n-1}(y, A) = \sum_{m=0}^{\infty} a(m+1) T_\Inv^m(y, A) = K_{\tilde{a}}(y, A),
\]
so we get:
\[
K_{\tilde{a}}(x, A) \geq \int_{E_1} T_\Inv(x, dy) K_{\tilde{a}}(y, A) \geq \int_{E_1} T_\Inv(x, dy) \nu(A) = \nu(A) \cdot T_\Inv(x, E_1).
\]

By the definition of \( E_2 \), \( T_\Inv(x, E_1) \geq \beta \), then
\[
K_{\tilde{a}}(x, A) \geq \beta \nu(A) = \tilde{\nu}(A).
\]

Hence, \( E_2 \) is petite.
\end{proof}

\begin{claim}\label{claim:cover}
$\bigcup_{n=0}^\infty C_n = \Inv$.
\end{claim}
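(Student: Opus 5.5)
We argue by contradiction, using the structure of \eqref{eq:Cn} together with the fact that every state of $\Inv$ reaches $\Goal$ almost surely under $T_\Inv$. Set $C_\infty := \bigcup_{n} C_n$. It is measurable as a countable union of the measurable sets $C_n$, and it contains $C_0 = \Goal$. Since $C_\infty \subseteq \Inv$ trivially, it remains to show $\Inv \setminus C_\infty = \emptyset$.

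\emph{Step 1: states outside $C_\infty$ cannot jump into $C_\infty$.} Fix $x \in \Inv \setminus C_\infty$ and suppose $T_\Inv(x, C_\infty) =: \delta > 0$. Because $C_n \uparrow C_\infty$, continuity from below of the measure $T_\Inv(x,\cdot)$ gives $T_\Inv(x, C_n) \to \delta$. We also have $2^{-n} \to 0$. Hence there is an $n$ with $T_\Inv(x, C_n) > \delta/2 > 2^{-n}$. By \eqref{eq:Cn} this places $x \in C_{n+1} \subseteq C_\infty$, a contradiction. Therefore $T_\Inv(x, C_\infty) = 0$, that is, $T_\Inv(x, \Inv \setminus C_\infty) = 1$, for every $x \in \Inv \setminus C_\infty$.

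\emph{Step 2: the complement is absorbing.} Let $F := \Inv \setminus C_\infty$. By induction on $k$ we show $\bbP_x(x_j \in F \text{ for all } j \le k) = 1$ for every $x \in F$:
\begin{itemize}
\item The case $k = 0$ holds because $x_0 = x \in F$.
\item For the inductive step, apply the Markov property at time $k$. On the event $\{x_k \in F\}$, the conditional probability of $\{x_{k+1} \in F\}$ given $\calF_k$ is $T_\Inv(x_k, F) = 1$, by Step 1.
\end{itemize}
Intersecting these countably many almost-sure events, the chain started at $x \in F$ stays in $F$ forever almost surely. In particular it never enters $\Goal \subseteq C_\infty$. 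Hence $\bbP_x(\sigma_\Goal < \infty) = 0$.

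\emph{Step 3: conclusion.} Every $x \in \Inv$ satisfies $\bbP_x(\sigma_\Goal < \infty) = 1$ by the definition of $\Inv$. This holds for the restricted chain $(\Inv, T_\Inv)$ as well, since $T_1(x,\Inv) = 1$ on $\Inv$, as shown above. This is incompatible with Step 2 unless $F = \emptyset$. Thus $\bigcup_n C_n = \Inv$.

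The only delicate point is Step 1, where the threshold $2^{-n}$ shrinks while $T_\Inv(x, C_n)$ increases to a positive limit. Continuity of measure makes this routine. No uniformity in $x$ is needed, because each $x$ is handled pointwise.
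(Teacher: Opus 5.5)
Your proof is correct and rests on the same key fact as the paper's: a state of $\Inv$ outside $\bigcup_n C_n$ sends zero one-step mass into $\bigcup_n C_n$. Your absorbing-set argument for $\Inv \setminus \bigcup_n C_n$ replaces the paper's induction over $B_k = \{x \in \Inv : \bbP_x(\sigma_\Goal \le k) > 0\}$, and your continuity-from-below derivation of that key fact is tighter than the paper's, which only reaches $T_\Inv(x, C_0) = 0$ before claiming a contradiction with $T_\Inv(x, B_k) > 0$ — what is needed is $T_\Inv(x, \bigcup_n C_n) = 0$, exactly as in your Step 1.
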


\begin{proof}
Define the increasing sequence of sets $(B_0, B_1, B_2, \ldots)$ where:
\[ B_n := \{x \in \Inv : \bbP_x(\sigma_\Goal \leq n) > 0\}, \quad n \in \nats. \]
Note that $B_0 = \Goal = C_0$. Since $\bbP_x(\sigma_\Goal < \infty) = 1$ for all $x \in \Inv$, we have $\bigcup_{n=0}^\infty B_n = \Inv$.

We prove by induction that $B_k \subseteq \bigcup_{n=0}^\infty C_n$ for all $k \geq 0$.

\textbf{Base case ($k=0$):} Trivially holds as $B_0 = C_0$.

\textbf{Inductive step:} Assume $B_k \subseteq \bigcup_{n=0}^\infty C_n$. Take any $x \in B_{k+1} \setminus B_k$ with $x \notin \bigcup_{n=0}^\infty C_n$. Then:
\begin{enumerate}
    \item From $x \in B_{k+1} \setminus B_k$: $T_\Inv(x, B_k) > 0$
    \item From $x \notin \bigcup_n C_n$: $T_\Inv(x, C_0) + \sum_{i=0}^{n-1} T_\Inv(x, C_{i+1}\setminus C_i) < \frac{1}{2^n}$ for all $n \in \nats$
\end{enumerate}
This implies $T_\Inv(x, C_0) < \frac{1}{2^n}$ for all $n$, hence $T_\Inv(x, C_0) = 0$, contradicting $T_\Inv(x, B_k) > 0$ since $B_k \subseteq \bigcup_n C_n$.

Therefore, $B_{k+1} \subseteq \bigcup_{n=0}^\infty C_n$, completing the induction.
\end{proof}

\paragraph{Construction of the drift $V$.}
We construct $V$ as a subsequence sum of hitting functions on the exhaustion $(C_n)$.
Fix the reference set $C := C_1 = \Goal$ (petite, with $\psi(C) > 0$ by the above), and for each $n$ let $V_n(x) := \bbP_x(\sigma_{C_n^c} < \sigma_C)$ (\eqref{eq:Vn} below) be the probability of leaving the shell $C_n$ before returning to $C$; each $V_n$ is a supermartingale off $C$ (\eqref{eq:PVleqV}) and equals $1$ outside $C_n$.
A sum $V = \sum_k V_{n_k}$ therefore grows without bound as $x$ leaves every shell, making $V$ unbounded off petite sets --- the coercivity criterion \textbf{V1} needs --- provided the subsequence $\{n_k\}$ is chosen so the sum is also finite $\psi$-almost everywhere.
To make that choice we combine \cref{lem:hitting_uniform} (bounding the probability of escaping $C_n$ early) with \cref{lem:recurrent_tail} (via the recurrence shown above, bounding a late return to $C$), producing sets $A_m$ of $\psi$-measure approaching $1$ on which each $V_{n_k}$ is uniformly small.

To control the convergence behavior, we employ a precision parameter sequence $\{\varepsilon_m\}_{m\geq 1}$ with $\varepsilon_m \searrow 0$. Lemma~\ref{lem:hitting_uniform} guarantees that for each precision level $m$, there exists a control set $L’_m$ with:
\begin{itemize}
    \item Measure guarantee: $\psi(L’_m) > 1 - \varepsilon_m$
    \item Uniform convergence: $\sup_{x\in L’_m} |f_n^{(i)}(x)| \to 0$ as $n\to\infty$
\end{itemize}
where $f_n^{(i)}(x) = \bbP_x(\sigma_{C_n^c} = i)$ represents the hitting probability.

We then construct an optimized sequence of control domains through the telescoping intersection:
\begin{equation}\label{eq:control_sets}
L_1 = L’_1, \quad L_m = \bigcap_{k=1}^m L’_k \text{ for } m \geq 2
\end{equation}
This construction ensures two critical properties:
\begin{enumerate}
    \item Monotonicity: $L_m \subseteq L_{m+1}$ for all $m\geq 1$
    \item Measure control: $\psi(L_m) > 1 - \varepsilon_m$ (by subadditivity)
\end{enumerate}
Moreover, the uniform convergence property is preserved on each $L_m$ for all transition indices $i$.

Let \(\{\varepsilon_m : m \geq 1\}\) be the same precision sequence with \(\varepsilon_m \downarrow 0\).
From \cref{lem:recurrent_tail}, for each \(m\), there exists \(F_m\) such that \(\psi(F_m) > 1 - \varepsilon_m\) and
\[
g_n(x) := \bbP_x(\sigma_C > n) \to 0
\]
uniformly in \(x \in F_m\). Assume \(F_m \uparrow\).

Let  
\[
A_m := L_m \cap F_m, \quad m \geq 1,
\]  
then 
\begin{itemize}
    \item
(a) \(A_m \uparrow\) and \(\psi\left(\bigcup_{m=1}^\infty A_m\right) = 1\).  
\item (b) For all \(i \geq 2\), both
\[
\bbP_x(\sigma_{C_n^c} < i) \text{ and } \bbP_x(\sigma_C > n)
\]
converge to zero uniformly in \(x \in A_m\).
\end{itemize}
Using (a) and (b) the function can be constructed. Define
\begin{equation}\label{eq:Vn}
V_n(x) := \bbP_x(\sigma_{C_n^c} < \sigma_C), \quad n \geq 1.
\end{equation}
\rev{Each $V_n$ is measurable: $V_n(x)$ is the limit as $k \to \infty$ of $\bbP_x(\sigma_{C_n^c} < \sigma_C \text{ and } \sigma_{C_n^c} \leq k)$, which is measurable in $x$ by the argument of \cref{lem:measurable_inv}.}
Then
\[
T_\Inv V_n(x) = \bbP_x(\sigma_{C_n^c} < \sigma_C) = V_n(x), \quad x \in C^c \cap C_n.
\]
Since \(V_n(x) = 1\) on \(C_n^c\), and \(C^c \subset (C^c \cap C_n) \cup C_n^c\), we have
\begin{equation}\label{eq:PVleqV}
T_\Inv V_n(x) \leq V_n(x), \quad x \in C^c.
\end{equation}

We aim to demonstrate that, by appropriately selecting a subsequence \(\{n_k\}\), the function  
\begin{equation}\label{eq:CLF}
V(x) := V_1(x) + \sum_{k=1}^\infty V_{n_k}(x), \quad x \in \Inv,
\end{equation}
satisfies the required properties.
\rev{For every choice of the subsequence, $V$ is measurable, as a pointwise limit of finite sums of the measurable functions $V_n$.}
Specifically, \(V\) meets the drift criterion \textbf{V1}, owing to the linearity of the involved operations.

The leading term $V_1(x) = \bbP_x(\sigma_{C_1^c} < \sigma_C) = \mathbf{1}[x \notin \Goal]$ (recall $C_1 = C = \Goal$) is bounded and is itself an $n=1$ instance of \eqref{eq:Vn}, so it satisfies \eqref{eq:PVleqV} and does not affect the drift property just noted; its purpose is to make $V$ vanish \emph{exactly} on $\Goal$: every summand vanishes on $\Goal$, while $V(x) \geq V_1(x) = 1 > 0$ for $x \notin \Goal$. Thus $V(x) = 0 \Leftrightarrow x \in \Goal$.


Let \(i \geq 2\), then  
\[
V_n(x) \leq \bbP_x(\sigma_{C_n^c} < i) + \bbP_x(\sigma_C > i).
\]

Fix \(k\),  from (b) and choose \(i_k\) such that
\[
\bbP_x(\sigma_C > i_k) < 2^{-(k+1)}, \quad x \in A_k.
\]
Then choose \(n_k\) such that
\[
\bbP_x(\sigma_{C_{n_k}^c} < i_k) < 2^{-(k+1)}, \quad x \in A_k.
\]
So  
\begin{equation}\label{eq:Vnkbound}
V_{n_k}(x) < 2^{-k}, \quad x \in A_k.
\end{equation}

From this:
\[
V(x) = V_1(x) + \sum_{k=1}^\infty V_{n_k}(x) \leq 1 + \sum_{k=1}^m V_{n_k}(x) + \sum_{k=m+1}^\infty 2^{-k} \leq m + 2, \quad x \in A_m.
\]
From (a), hence,
\begin{equation}\label{eq:Vfinite}
\psi(\{x : V(x) = \infty\}) = 0.
\end{equation}
\rev{This is weaker than the finite-valued $V$ that \cref{rule:des-ast-markov} asks for.
We close the gap at the end of the proof, once $U$ is available, because the restriction of the invariant must be shown to preserve criterion \textbf{V2} as well as criterion \textbf{V1}.}

By Fubini's theorem and \eqref{eq:PVleqV}, \(V(x)\) satisfies criterion \textbf{V1}.
To show \(V\) is unbounded off petite sets, note that for any \(m\),
\[
\{x : V(x) \leq m\} \subset C_{n_{m+1}},
\]
a petite set, which completes the construction of $V$\rev{: it satisfies the drift criterion \textbf{V1} on $\Inv$ with values in $[0, \infty]$, and is finite outside the $\psi$-null set of \eqref{eq:Vfinite}}.

\paragraph{Construction of the variant $U$.}
We can now define the variant $U \colon \Inv \to \nnreals$ explicitly:
\[
U(x) =
\begin{cases}
0 & \text{if } x \in \Goal \\
n + 1 & \text{if } x \in C_{n+1}\setminus C_n, n\in \nats
\end{cases}
\]
\rev{$U$ is measurable, being constant on $\Goal$ and on each of the measurable sets $C_{n+1} \setminus C_n$.}
\begin{claim}\label{claim:variant-v2}
The variant $U$ satisfies the variant criterion \textbf{V2}.
\end{claim}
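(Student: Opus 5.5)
The plan is to check the three parts of criterion \textbf{V2} directly from the layered structure of the exhaustion \eqref{eq:Cn}, using unit descent $d \equiv 1$. Two facts drive everything. First, $U$ takes only the values $0$ on $C_0 = \Goal$ and $n+1$ on the shell $C_{n+1} \setminus C_n$. Second, by \cref{claim:cover} the shells together with $\Goal$ partition $\Inv$, so $U$ is well defined and finite on $\Inv$.

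\emph{Vanishing.} Every shell value is $n + 1 \geq 1$, so $U(x) = 0$ holds exactly when $x \in C_0 = \Goal$. Since $C_1 = C_0$, the value $1$ never actually occurs, but this is harmless.

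\emph{Descent.} Take $x \in \Inv \setminus \Goal$ with $U(x) \leq r$. Then $x \in C_{n+1} \setminus C_n$ for some $n \geq 1$ with $n + 1 \leq r$. Because $x \notin C_n$, the definition \eqref{eq:Cn} forces $T_\Inv(x, C_n) > 2^{-n}$. Every $y \in C_n$ satisfies $U(y) \leq n = U(x) - 1$. Moreover, $T_\Inv(x, \cdot)$ is concentrated on $\Inv$, so $C_n \subseteq \set{y \in \Inv \mid U(y) \leq U(x) - d(r)}$ with $d(r) := 1$. Hence
\[
T_\Inv\bigl(x, \set{y \in \Inv \mid U(y) \leq U(x) - 1}\bigr) \geq T_\Inv(x, C_n) > 2^{-n} \geq 2^{-r}.
\]
So $\varepsilon(r) := 2^{-r} > 0$ works. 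Both $d$ and $\varepsilon$ depend only on $r$, as the rule demands.

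\emph{Boundedness.} This is the step that needs care, since it rests on the inclusion $\set{V \leq m} \subseteq C_{n_{m+1}}$ asserted in the construction of $V$. I would re-derive that inclusion explicitly. Choose the subsequence $\{n_k\}$ strictly increasing, which the construction permits. Since the $C_n$ are nested, $x \notin C_{n_{m+1}}$ then gives $x \notin C_{n_k}$ for all $k \leq m+1$. Recall that $V_n \equiv 1$ off $C_n$, and that $V_1(x) = 1$ whenever $x \notin \Goal$. Therefore
\[
V(x) \geq V_1(x) + \sum_{k=1}^{m+1} V_{n_k}(x) = m + 2 > m.
\]
Now let $r \in \nnreals$ and $x \in \Inv$ with $V(x) \leq r$, and set $m := \ceil{r}$. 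Then $V(x) \leq m$, so $x \in C_{n_{m+1}}$ and therefore $U(x) \leq n_{m+1}$. Define $H(r) := n_{\ceil{r}+1}$; it is positive because $n_k \geq 1$.

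Finally, states with $V(x) = \infty$ never satisfy $V(x) \leq r$, so they impose no condition here. They are removed later when the invariant is restricted, as announced. This completes the check of \textbf{V2} with the assistant functions $H$, $d$ and $\varepsilon$ above.
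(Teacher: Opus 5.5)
Your proposal is correct and follows the paper's proof essentially step for step. It uses the same assistant functions $H(r) = n_{\lceil r\rceil+1}$ and $d \equiv 1$, the same descent bound $T_{\Inv}(x, C_n) > 2^{-n}$ on the shell $C_{n+1} \setminus C_n$, and the same boundedness argument via $\{V \le m\} \subseteq C_{n_{m+1}}$; your $\varepsilon(r) = 2^{-r}$ works just as well as the paper's $\min\{1, 2^{1-r}\}$. The only addition is that you explicitly re-derive that inclusion from $V_n \equiv 1$ off $C_n$ and a strictly increasing choice of $\{n_k\}$, whereas the paper merely asserts it during the construction of $V$.
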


\begin{proof}
By construction $U(x) = 0$ exactly when $x \in C_0 = \Goal$, and, since $U(x) = j$ on $C_j \setminus C_{j-1}$,
\begin{equation}\label{eq:Ulevel}
\{x \in \Inv : U(x) \leq j\} = C_j, \qquad j \in \nats.
\end{equation}
Take the assistant functions $H(r) := n_{\lceil r\rceil+1}$, $d(r) := 1$, and $\varepsilon(r) := \rev{\min\set{1, 2^{1-r}}}$.

\emph{Bound on the sublevel sets of $V$.}
Because $V$ was built on the cover $(C_0, C_1, C_2, \ldots)$, the subsumption $\{x : V(x) \leq m\} \subseteq C_{n_{m+1}}$ for every $m \in \nats$ combines with \eqref{eq:Ulevel} to give, for every real $r \geq 0$,
\[
V(x) \leq r \;\Longrightarrow\; x \in \{V \leq \lceil r\rceil\} \subseteq C_{n_{\lceil r\rceil+1}} = \{U \leq n_{\lceil r\rceil+1}\},
\]
that is, $U(x) \leq H(r)$. As $\{n_k\}$ is increasing, $H$ is non-decreasing.

\emph{Descent.}
Let $x \in C_{n+1}\setminus C_n$, such that $U(x) = n+1$ (in particular $x \notin \Goal$).
By the definition \eqref{eq:Cn} of $C_{n+1}$ we have $T_\Inv(x, C_n) > 1/2^n$, and $x_{k+1} \in C_n$ forces $U(x_{k+1}) \leq n = U(x) - 1$ by \eqref{eq:Ulevel}; hence
\[
T_\Inv\big(x, \{y \in \Inv : U(y) \leq U(x) - d(r)\}\big) \;\geq\; T_\Inv(x, C_n) \;>\; \frac{1}{2^n}.
\]
\rev{If in addition $U(x) \leq r$, then $n + 1 \leq r$, and this probability exceeds $2^{-n} \geq 2^{1-r} \geq \varepsilon(r)$.}
\rev{For $r < 1$ there is no $x \in \Inv \setminus \Goal$ with $U(x) \leq r$, and the descent condition is vacuous.}
With $d(r) = 1$, $U$ meets \textbf{V2}.
\end{proof}

\begin{revblock}
\paragraph{Making $V$ finite.}
\cref{rule:des-ast-markov} asks for a finite-valued $V$, while \eqref{eq:Vfinite} guarantees finiteness only $\psi$-almost everywhere.
We extend the level-set notation to the value $\infty$, so that $V_{=\infty} = \{x \in \Inv \mid V(x) = \infty\}$ and $V_{<\infty} = \Inv \setminus V_{=\infty}$. 
We show that the set $V_{<\infty}$ contains $x_0$ and $\Goal$ and satisfies the invariant condition, making it a suitable invariant under which $V$ is finite.

Clearly, $x_0 \in V_{<\infty}$ and $\Goal \subseteq V_{<\infty}$ because $V(\Goal) = 0$.
Every goal state moves to $x_0$ in one step, and $x_0$ reaches $\Goal$ almost surely, so the state at time $\sigma_\Goal + 1$ is distributed as $\delta_{x_0}$.
Hence
\[
\mathbf{1}[x_0 \in V_{=\infty}] = \bbP_{x_0}\left(x_{\sigma_\Goal + 1} \in V_{=\infty}\right) \leq \sum_{n \geq 1} T_\Inv^n(x_0, V_{=\infty}) = 0,
\]
where the inequality holds because $\sigma_\Goal + 1$ is almost surely a finite index at least $1$, and the final equality holds because every term of $\psi$ is non-negative and $\psi(V_{=\infty}) = 0$ by \eqref{eq:Vfinite}.

Additionally, $V_{<\infty}$ satisfies the invariant condition of \cref{rule:des-ast-markov}: at every $x \in V_{<\infty} \setminus \Goal$, the kernels $T_X$ and $T_1$ coincide, $T_1(x, \Inv) = 1$, and \eqref{eq:PVleqV} with Fubini's theorem gives $\int_\Inv T_\Inv(x, dy)\, V(y) \leq V(x) < \infty$, so $T_\Inv(x, V_{=\infty}) = 0$; hence $T_X(x, V_{<\infty}) = 1$.

It remains to restrict the certificates to $V_{<\infty}$.
Take the invariant $V_{<\infty}$ and the exempt set $T := \Goal$.
The rule asks for certificates defined on all of $X$, while the drift $V$ of \eqref{eq:CLF} and the variant $U$ of \cref{claim:variant-v2} are defined on $\Inv$ only, and $V$ is infinite on $V_{=\infty}$.
Take the functions $V', U' : X \to \nnreals$ that agree with $V$ and $U$ on $V_{<\infty}$ and take the value $1$ at every state outside $V_{<\infty}$, together with the assistant functions $H$, $d$, and $\varepsilon$ of \cref{claim:variant-v2}.
$V'$ and $U'$ are measurable because $V$, $U$, and $V_{<\infty}$ are.

It is easy to see that these functions satisfy the conditions of \cref{rule:des-ast-markov}; we make this more explicit now.
Outside the invariant, the rule constrains a certificate only through the two conditions $V'(x) = 0 \Leftrightarrow x \in T$ and $U'(x) = 0 \Leftrightarrow x \in \Goal$, and the value $1$ meets both because $T = \Goal \subseteq V_{<\infty}$.
Writing $V$ and $U$ for $V'$ and $U'$ on $V_{<\infty}$, the set $V_{<\infty}$ contains $x_0$ and $\Goal$ and satisfies the invariant condition of the rule.
$V$ is finite-valued on $V_{<\infty}$, and there $V(x) = 0$ precisely when $x \in T$ and $U(x) = 0$ precisely when $x \in \Goal$, as on $\Inv$.
The drift condition of criterion \textbf{V1} holds at every $x \in V_{<\infty} \setminus T$, because $\int_{V_{<\infty}} T_X(x, dy)\, V(y) \leq \int_\Inv T_X(x, dy)\, V(y) \leq V(x)$.
The condition $V(x) \leq r \Rightarrow U(x) \leq H(r)$ of criterion \textbf{V2} holds at every $x \in V_{<\infty}$, because it holds at every $x \in \Inv$.
The descent condition of criterion \textbf{V2} holds at every $x \in V_{<\infty} \setminus \Goal$ with $U(x) \leq r$, because $T_X(x, \Inv \setminus V_{<\infty}) = 0$ gives
\[
T_X\left(x, \set{y \in V_{<\infty} \mid U(y) \leq U(x) - d(r)}\right) = T_X\left(x, \set{y \in \Inv \mid U(y) \leq U(x) - d(r)}\right) \geq \varepsilon(r).
\]
This completes the proof of \cref{th:ast-markov}.
\end{revblock}

%


\newoutputstream{todos}
\openoutputfile{main.todos.ctr}{todos}
\addtostream{todos}{\arabic{@todonotes@numberoftodonotes}}
\closeoutputstream{todos}

\label{endofdocument}
\newoutputstream{pagestotal}
\openoutputfile{main.pagestotal.ctr}{pagestotal}
\addtostream{pagestotal}{\getpagerefnumber{endofdocument}}
\closeoutputstream{pagestotal}

\end{document}